\newcommand{\Z}{\mathbb{Z}}
\newcommand{\mcH}{\mathcal{H}}
\newcommand{\gammabar}{\bar \gamma}
\newcommand{\hatotimes}{\hat{\otimes}}
\newcommand{\Mt}{\mathsf{M}}
\newcommand{\Dt}{\mathsf{D}}
\newcommand{\Ft}{\mathsf{F}}
\newcommand{\Bt}{\mathsf{B}}
\newcommand{\M}[1]{\mathsf{#1}}
\newcommand{\ct}{\mathcal{C}}
\newcommand{\bdot}{\boldsymbol{{\cdot}}}
\newcommand{\hatsigma}{\hat{\sigma}}
\newtheorem{myprop}{Proposition}
 \tikzset{ferc/.style={circle,draw=red,thick,scale=0.6},
    berc/.style={circle,draw=black,thick,scale=0.6}}
\tikzset{
    ->-/.style={postaction={decorate},decoration={%
    markings,
    mark=at position #1 with {\arrow[scale=1.5]{>}} 
        }%
    },
    -<-/.style={postaction={decorate},decoration={%
    markings,
    mark=at position #1 with {\arrow[scale=1.5]{<}} 
        }%
    },
    Sbase/.style={baseline={(current bounding box.center)}}
    }
\tikzset{OES/.style={postaction={
    decorate,
    decoration={
      show path construction,
      moveto code={},
      lineto code={
        \path [#1]
        (\tikzinputsegmentfirst) -- (\tikzinputsegmentlast);
      },
      curveto code={
        \path [#1] (\tikzinputsegmentfirst)
        .. controls
        (\tikzinputsegmentsupporta) and (\tikzinputsegmentsupportb)
        ..
        (\tikzinputsegmentlast);
      },
      closepath code={
        \path [#1]
        (\tikzinputsegmentfirst) -- (\tikzinputsegmentlast);
      },
    },
  }
  },
    AR1/.style={postaction={decorate,decoration={
        markings,
        mark=at position #1 with {\arrow[scale=1.5]{>}}
      }}},
  AR2/.style={postaction={decorate,decoration={
        markings,
        mark=at position #1 with {\arrow[scale=1.5]{<}}
      }}}
  }
  \tikzset{ OP/.style={ellipse,inner sep=0.3mm,fill=white,draw=black,text=black,every node/.style={scale=0.7}},
SS/.style={scale=0.8, every node/.style={scale=0.8}}, 
SSa/.style={scale=1.3, every node/.style={scale=0.9}},
Sc/.style={Sbase,scale=#1, every node/.style={scale=0.8}}}
\tikzset{Dtriangle/.style={regular polygon, regular polygon sides=3,draw=black,fill=gray!30,shape border rotate=180,inner sep=0.2mm,scale=0.8,text=black,minimum size=6mm}}
\tikzset{Tsq/.style={regular polygon, regular polygon sides=4,draw=black,fill=gray!30,shape border rotate=0,inner sep=0.2mm,scale=0.8,text=black,minimum size=6mm}}
\tikzset{Mrec/.style={rectangle,draw=black,fill=gray!30,shape border rotate=0,inner sep=0.2mm,scale=0.8,text=black,minimum size=6mm}}
\tikzset{Sqthirty/.style={regular polygon, regular polygon sides=4,draw=black,fill=gray!30,shape border rotate=30,inner sep=0.2mm,scale=0.8,text=black,minimum size=6mm}}
\tikzset{Sq/.style={regular polygon, regular polygon sides=4,draw=black,fill=gray!30,shape border rotate=#1,inner sep=0.2mm,scale=0.8,text=black,minimum size=6mm}}
\tikzset{Bcir/.style={circle,draw=black,fill=gray!30,inner sep=0.2mm,scale=0.8,text=black,minimum size=3mm}} \tikzset{Bpri/.style={circle,draw=black,fill=black!50,inner sep=0.2mm,scale=0.8,text=black,minimum size=3mm}}
\tikzset{Utriangle/.style={regular polygon, regular polygon sides=3,draw=black,fill=gray!30,inner sep=0.2mm,scale=0.8,text=black,minimum size=6mm}}
\newcommand{\Fer}{
\draw[red,->-=0.75] (1,0)--(0,0)node(a){};
\draw[red,-<-=0.75](1,0)--(2,0)node(c){}; 
\draw[red,-<-=0.75] (1,0)node(b){}--(1,-1)node(d){};
\node[Dtriangle] at (1,0){} ;   \node[above] at (0,0) {$e_1$};\node[above] at (2,0) {$e_0$};
  \node[right] at (1,-1) {$e$};     }
\newcommand{\Bos}{
 \draw[red,->-=0.75] (1,0)--(0,0);     \draw[red,,-<-=0.75] (1,0)--(2,0); 
    \draw[black,->-=0.75] (1,0)--(1,1);
    \node[Bcir] at (1,0) {  } ;
    \node[above] at (0,0) {$v$};\node[above] at (2,0) {$v$};
  \node[right] at (1,1) {$v$};
    }
\newcommand{\DS}{\draw[red,OES={AR2=0.5}](0,0)--++(0.5,0)edge[red,-<-=0.5]+(0,-1.25)node[Dtriangle]{}--++(1.5,0)node[Bcir]{}edge[black,->-=0.5]+(0,1.25)--++(1.5,0)edge[red,-<-=0.5]+(0,-1.25)node[Dtriangle]{}--+(0.5,0);\draw[red,loosely dotted] (-0.5,0)--(0,0);
\draw[red,loosely dotted] (4,0)--(4.5,0);}
\newcommand{\DPone}{\draw[red,OES={AR2=0.5}](0,0)--++(0.5,0)edge[black,->-=0.5]+(0,1.25)node[Bcir]{}--++(1.5,0)node[Dtriangle]{}edge[red,-<-=0.5]+(0,-1.25)--++(1.5,0)edge[black,->-=0.5]+(0,1.25)node[Bcir]{}--+(0.5,0); \draw[red,loosely dotted] (-0.5,0)--(0,0);
\draw[red,loosely dotted] (4,0)--(4.5,0);}
\newcommand{\DSsw}{
\path (0:3)++(60:3) node[above]{$\mathrm{s}$};
\draw[red](30:{sqrt(3)})--++(0,0.25,-0.5)node(a){};
\draw[red](30:{2*sqrt(3)})--++(0,0,-1)node(b){};
\draw[dotted,->-=0.75](0,0)--++(0:3)edge[dotted,->-=0.75]+(60:3);  \draw[dotted,->-=0.75](0,0)--++(60:3)edge[dotted,->-=0.75]+(0:3);
\draw[dotted,-<-=0.75](3,0)--+(120:3);
\draw[red,-<-=0.75] (30:{sqrt(3)})--+(30:{sqrt(3)});
\draw[red,->-=0.75] (30:{sqrt(3)})--+(-90:{sqrt(3)/2});
\draw[red,-<-=0.75] (30:{sqrt(3)})--+(150:{sqrt(3)/2});
\draw[red,->-=0.75] (30:{2*sqrt(3)})--+(-30:{sqrt(3)/2});
\draw[red,-<-=0.75] (30:{2*sqrt(3)})--+(90:{sqrt(3)/2});
\path (30:{sqrt(3)})node[Utriangle]{} --
 (30:{2*sqrt(3)})node[Dtriangle]{};
 \foreach \p in {(3/2,0),(60:3/2),($(3,0)+(60:3/2)$),($(3/2,0)+(60:3)$)}
 {\node[Bpri] at \p {}; \draw[] \p --+(0,0,1); } \node[Bpri] at (30:{3/2*sqrt(3}) {};\draw[] (30:{3/2*sqrt(3})  --+(0,-0.25,0.5);}
\newcommand{\DP}{
\draw[dotted,->-=0.75](0,0)--+(-60:3);
\draw[dotted,->-=0.75](-60:3)--+(60:3);
\draw[dotted,->-=0.75](0,0) --+(0:3);
\draw[red,->-=0.75](-30:{sqrt(3)})--(-60:3/2);
\draw[red,->-=0.75](-30:{sqrt(3)})--(-30:{3/2*sqrt(3)});
\draw[red,-<-=0.75](-30:{sqrt(3)})--(0:3/2);
\draw[red](-30:{sqrt(3)})--+(0,0,-1)node(a){};
\node[Dtriangle] at (-30:{sqrt(3)}){};
\node[Bpri]at (0:3/2){};
\node[Bpri]at (-60:3/2){};
\node[Bpri]at (-30:{3/2*sqrt(3)}){};
\draw[](0:3/2)--+(0,0,1)node(c){};
\draw[](-60:3/2)--+(0,0,1)node(d){};
\draw[](-30:{3/2*sqrt(3)})--+(0,0,1)node(b){};
}
  \tikzset{newf/.pic={
\draw[red,->-=0.85](-30:{sqrt(3)})--node[pos=0.5](c){}(-60:3/2);
\draw[red,->-=0.85](-30:{sqrt(3)})--node[pos=0.5](a){}(-30:{3/2*sqrt(3)});
\draw[red,-<-=0.85](-30:{sqrt(3)})--node[pos=0.5](b){}(0:3/2);
\draw[red,-<-=0.85](-30:{sqrt(3)})--node[pos=0.6](d){}++(0,0,-1)node[right=0.5mm,black]{$f$};
\node[Dtriangle] at (-30:{sqrt(3)}){};
\node[left] at (-60:3/2) {$f_{01}$};\node[right] at (-30:{3/2*sqrt(3)}) {$f_{12}$}; \node[above] at (0:3/2) {$f_{02}$};
  }}
\tikzset{newfbar/.pic={
\draw[red,-<-=0.85](30:{sqrt(3)})--node[pos=0.5](c){}(60:3/2);
\draw[red,-<-=0.85](30:{sqrt(3)})--node[pos=0.5](a){}(30:{3/2*sqrt(3)});
\draw[red,->-=0.85](30:{sqrt(3)})--node[pos=0.5](b){}(0:3/2);
\draw[red,-<-=0.75](30:{sqrt(3)})--node[pos=0.6](d){}+(0,0.3,-0.6)node[black,right=1pt]{$f$};
\node[Utriangle] at (30:{sqrt(3)}){ };
\node[left] at (60:3/2) {$f_{01}$};\node[right] at (30:{3/2*sqrt(3)}) {$f_{12}$}; \node[below] at (0:3/2) {$f_{02}$};
}}
\newcommand{\Bnorth}{ \draw[red,->-=0.25] (0,0)node[above,black]{$e$}--node[pos=0.5](b){}(1,0);
 \draw[red,->-=0.75] (1,0)--node[pos=0.5](c){}(2,0)node[above,black]{$e$};
 \node[Bcir]at (1,0){$ $};
 \draw[->-=0.75](1,0)--node[pos=0.5](a){}+(0,0,1.5)node[left,black]{$e$};}
\newcommand{\Bnorthp}{ \draw[red,->-=0.25] (0,0)node[above,black]{$e$}--node[pos=0.5](b){}(1,0);
 \draw[red,->-=0.75] (1,0)--node[pos=0.5](c){}(2,0)node[above,black]{$e$};
 \node[Bpri]at (1,0){$ $};
 \draw[->-=0.75](1,0)--node[pos=0.5](a){}+(0,0,1.5)node[above,black]{$e$};}
\tikzset{newt/.pic={
\draw[red,->-=0.85](-30:{sqrt(3)})--node[pos=0.5](c){}(-60:3/2);
\draw[red,->-=0.85](-30:{sqrt(3)})--node[pos=0.5](a){}(-30:{3/2*sqrt(3)});
\draw[red,-<-=0.85](-30:{sqrt(3)})--node[pos=0.5](b){}(0:3/2);
\node[Tsq] at (-30:{sqrt(3)}){$\mathsf{T}$};
\draw[red,->-=0.75](-30:{sqrt(3)})--node[pos=0.5](d){}+(0,-0.25,0.5);
\node[left] at (-60:3/2) {$f'_{01}$};\node[right] at (-30:{3/2*sqrt(3)}) {$f'_{12}$}; \node[above] at (0:3/2) {$f'_{02}$}; \path (-30:{sqrt(3)})++(0,0,1)node[below]{$f'$};
  }}
  \tikzset{newtbar/.pic={
\draw[red,-<-=0.85](30:{sqrt(3)})--node[pos=0.5](c){}(60:3/2);
\draw[red,-<-=0.85](30:{sqrt(3)})--node[pos=0.5](a){}(30:{3/2*sqrt(3)});
\draw[red,->-=0.85](30:{sqrt(3)})--node[pos=0.5](b){}(0:3/2);
\node[Tsq] at (30:{sqrt(3)}){ $\mathsf{\bar{T}}$};
\draw[red,->-=0.75](30:{sqrt(3)})--node[pos=0.8](d){}+(0,0,1);
\node[left] at (60:3/2) {$f'_{01}$};\node[right] at (30:{3/2*sqrt(3)}) {$f'_{12}$}; \node[below] at (0:3/2) {$f'_{02}$}; \path (30:{sqrt(3)})++(0,0,1)node[left=0.1pt]{$f'$};
}}
\newcommand{\newft}{\draw[red,->-=0.85](-30:{sqrt(3)})--node[pos=0.5](c){}(-60:3/2);
\draw[red,->-=0.85](-30:{sqrt(3)})--node[pos=0.5](a){}(-30:{3/2*sqrt(3)});
\draw[red,-<-=0.85](-30:{sqrt(3)})--node[pos=0.5](b){}(0:3/2);
\node[Tsq] at (-30:{sqrt(3)}){ };
\draw[red,-<-=0.85](-30:{sqrt(3)})--node[pos=0.5](d){}+(0,0,1);
\node[left] at (-60:3/2) {$f'_{01}$};\node[right] at (-30:{3/2*sqrt(3)}) {$f'_{12}$}; \node[above] at (0:3/2) {$f'_{02}$};
\begin{scope}[shift={(0,0,1)}]
\draw[red,->-=0.85](-30:{sqrt(3)})--node[pos=0.5](c){}(-60:3/2);
\draw[red,->-=0.85](-30:{sqrt(3)})--node[pos=0.5](a){}(-30:{3/2*sqrt(3)});
\draw[red,-<-=0.85](-30:{sqrt(3)})--node[pos=0.5](b){}(0:3/2);
\draw[red,-<-=0.85](-30:{sqrt(3)})--node[pos=0.5](d){}+(0,0,-1);
\node[Dtriangle] at (-30:{sqrt(3)}){};
\node[left] at (-60:3/2) {$f_{01}$};\node[right] at (-30:{3/2*sqrt(3)}) {$f_{12}$}; \node[above] at (0:3/2) {$f_{02}$};
\end{scope}}
\newcommand{\ftorder}{\draw[red,->-=0.85](-30:{sqrt(3)})--node[pos=0.5](c){}(-60:3/2);
\draw[red,->-=0.85](-30:{sqrt(3)})--node[pos=0.5](a){}(-30:{3/2*sqrt(3)});
\draw[red,-<-=0.85](-30:{sqrt(3)})--node[pos=0.5](b){}(0:3/2);
\node[Tsq] at (-30:{sqrt(3)}){ };
\draw[red,-<-=0.85](-30:{sqrt(3)})--node[pos=0.5](d){}+(0,0,1);
\node[left] at (-60:3/2) {(i)};\node[right] at (-30:{3/2*sqrt(3)}) {(iii)}; \node[above] at (0:3/2) {(vi)};
\begin{scope}[shift={(0,0,1)}]
\draw[red,->-=0.85](-30:{sqrt(3)})--node[pos=0.5](c){}(-60:3/2);
\draw[red,->-=0.85](-30:{sqrt(3)})--node[pos=0.5](a){}(-30:{3/2*sqrt(3)});
\draw[red,-<-=0.85](-30:{sqrt(3)})--node[pos=0.5](b){}(0:3/2);
\draw[red,-<-=0.85](-30:{sqrt(3)})--node[pos=0.5](d){}+(0,0,-1);
\node[Dtriangle] at (-30:{sqrt(3)}){};
\node[left] at (-60:3/2) {(ii)};\node[right] at (-30:{3/2*sqrt(3)}) {(iv)}; \node[above] at (0:3/2) {(v)};
\end{scope}}
\newcommand{\newftbar}{\draw[red,-<-=0.85](30:{sqrt(3)})--node[pos=0.5](c){}(60:3/2);
\draw[red,-<-=0.85](30:{sqrt(3)})--node[pos=0.5](a){}(30:{3/2*sqrt(3)});
\draw[red,->-=0.85](30:{sqrt(3)})--node[pos=0.5](b){}(0:3/2);
\node[Tsq] at (30:{sqrt(3)}){  };
\draw[red](30:{sqrt(3)})--node[pos=0.8](d){}+(0,0,1);
\node[left] at (60:3/2) {$f'_{01}$};\node[right] at (30:{3/2*sqrt(3)}) {$f'_{12}$}; \node[below] at (0:3/2) {$f'_{02}$};
\begin{scope}[shift={(0,0,1)}]
\draw[red,-<-=0.85](30:{sqrt(3)})--node[pos=0.5](c){}(60:3/2);
\draw[red,-<-=0.85](30:{sqrt(3)})--node[pos=0.5](a){}(30:{3/2*sqrt(3)});
\draw[red,->-=0.85](30:{sqrt(3)})--node[pos=0.5](b){}(0:3/2);
\node[Utriangle] at (30:{sqrt(3)}){ };
\node[left] at (60:3/2) {$f_{01}$};\node[right] at (30:{3/2*sqrt(3)}) {$f_{12}$}; \node[below] at (0:3/2) {$f_{02}$};
\end{scope}
}
\newcommand{\ftbarorder}{\draw[red,-<-=0.85](30:{sqrt(3)})--node[pos=0.5](c){}(60:3/2);
\draw[red,-<-=0.85](30:{sqrt(3)})--node[pos=0.5](a){}(30:{3/2*sqrt(3)});
\draw[red,->-=0.85](30:{sqrt(3)})--node[pos=0.5](b){}(0:3/2);
\node[Tsq] at (30:{sqrt(3)}){  };
\draw[red](30:{sqrt(3)})--node[pos=0.8](d){}+(0,0,1);
\node[left] at (60:3/2) {(vi)};\node[right] at (30:{3/2*sqrt(3)}) {(iv)}; \node[below] at (0:3/2) {(i)};
\begin{scope}[shift={(0,0,1)}]
\draw[red,-<-=0.85](30:{sqrt(3)})--node[pos=0.5](c){}(60:3/2);
\draw[red,-<-=0.85](30:{sqrt(3)})--node[pos=0.5](a){}(30:{3/2*sqrt(3)});
\draw[red,->-=0.85](30:{sqrt(3)})--node[pos=0.5](b){}(0:3/2);
\node[Utriangle] at (30:{sqrt(3)}){ };
\node[left] at (60:3/2) {(v)};\node[right] at (30:{3/2*sqrt(3)}) {(iii)}; \node[below] at (0:3/2) {(ii)};
\end{scope}
}
 \newcommand{\post}{
     \coordinate (C) at (-30:{sqrt(3)}) ;
    \draw[dotted,->-=0.75] (0,0)node[above]{ }--+(0:3)node[above]{ };
    \draw[dotted,->-=0.75] (0,0)--+(-60:3)node[below]{ };
    \draw[dotted,-<-=0.75] (0:3)--(-60:3);
    \draw[red,->-=0.75] (C)--(-60:3/2);
     \draw[red,->-=0.75] (C)--+(-30:{sqrt(3)/2});
      \draw[red,-<-=0.75] (C)--(0:3/2);
    \node[Dtriangle] at (C) {$ $};
}
  \newcommand{\negt}{ \coordinate (C) at (30:{sqrt(3)}) ;
    \draw[dotted,->-=0.75] (0,0)node[below]{ }--+(0:3)node[below]{ };
    \draw[dotted,->-=0.75] (0,0)--+(60:3)node[above]{ };
    \draw[dotted,-<-=0.75] (0:3)--(60:3);
    \draw[red,-<-=0.75] (C)--(60:3/2);
     \draw[red,-<-=0.75] (C)--+(30:{sqrt(3)/2});
      \draw[red,->-=0.75] (C)--(0:3/2);
    \node[Utriangle] at (C) {$ $};}
\newcommand{\interpos}{\draw[->-=0.75](0,0)--(0:3);
\draw[->-=0.75](0,0)--(-60:3);
\draw[->-=0.75](-60:3)--(0:3);
\foreach \theta in {-20,-40}
{\draw[->-=0.85] (\theta:0.4)--++(\theta:0.3) ;}
\foreach \theta in {-15,-30,-45}
{\draw[->-=0.85] (\theta:0.9)--++(\theta:0.3) ;}
\foreach \theta in {200,220}
{\draw[-<-=0.5] (0:3)++(\theta:0.4)--++(\theta:0.3) ;}
\foreach \theta in {195,210,225}
{\draw[-<-=0.5] (0:3)++(\theta:0.9)--++(\theta:0.3) ;}
\draw[OES={AR1=1}](-45:1.4)--++(-40:0.3)++(-40:0.2)++(0:0.3)++(40:0.2)--++(40:0.3);
\draw[->-=0.9](3/2-0.15,0)++(0,-0.3)--++(0.3,0);
\draw[->-=0.9](3/2-0.15,0)++(0,-0.8)--++(0.3,0);
\draw[->-=0.9](3/2-0.15,0)++(0,-1.3)--++(0.3,0);
\draw[OES={AR1=1}](-55:1.8)--++(-50:0.3)++(-50:0.2)--++(0:0.3)++(50:0.2)--++(50:0.3);}
\newcommand{\interneg}{\draw[->-=0.5](0,0)--(0:3);
\draw[->-=0.5](0,0)--(60:3);
\draw[->-=0.5](60:3)--(0:3);
\foreach \theta in {20,40}
{\draw[->-=0.85] (\theta:0.4)--++(\theta:0.3) ;}
\foreach \theta in {15,30,45}
{\draw[->-=0.85] (\theta:0.9)--++(\theta:0.3) ;}
\foreach \theta in {160,140}
{\draw[-<-=0.5] (0:3)++(\theta:0.4)--++(\theta:0.3) ;}
\foreach \theta in {165,150,135}
{\draw[-<-=0.5] (0:3)++(\theta:0.9)--++(\theta:0.3) ;}
\draw[OES={AR1=1}](45:1.4)--++(40:0.3)++(40:0.2)++(0:0.3)++(-40:0.2)--++(-40:0.3);
\draw[->-=0.9](3/2-0.15,0)++(0,0.3)--++(0.3,0);
\draw[->-=0.9](3/2-0.15,0)++(0,0.8)--++(0.3,0);
\draw[->-=0.9](3/2-0.15,0)++(0,1.3)--++(0.3,0);
\draw[OES={AR1=1}](55:1.8)--++(50:0.3)++(50:0.2)--++(0:0.3)++(-50:0.2)--++(-50:0.3);}
    \newcommand{\Fposf}[4]{\path ($1/3*#1+1/3*#2+1/3*#3$)node[Dtriangle]{}edge[->-=0.75,red]($1/2*#1+1/2*#3$)edge[->-=0.75,red]($1/2*#1+1/2*#2$) edge[->-=0.75,red]($1/2*#2+1/2*#3$)edge[-<-=0.75,red]++#4; }
     \newcommand{\Fnegf}[4]{\path ($1/3*#1+1/3*#2+1/3*#3$)node[Utriangle]{}edge[->-=0.75,red]($1/2*#1+1/2*#3$)edge[-<-=0.75,red]($1/2*#1+1/2*#2$) edge[-<-=0.75,red]($1/2*#2+1/2*#3$)edge[-<-=0.75,red]++#4; } 
 \newcommand{\Btwopf}[3]{\node[Bpri] at  ($1/2*#1+1/2*#2$){}; \draw[->-=0.75] ($1/2*#1+1/2*#2$)--++#3; }
    \newcommand{\Tposf}[4]{ \path ($1/3*#1+1/3*#2+1/3*#3$)node[Tsq]{}edge[-<-=0.75,red]($1/2*#1+1/2*#3$)edge[->-=0.75,red]($1/2*#1+1/2*#2$) edge[->-=0.75,red]($1/2*#2+1/2*#3$);
 \draw[->-=0.75,red] ($1/3*#1+1/3*#2+1/3*#3$)--++#4; }
     \newcommand{\Tnegf}[4]{\path ($1/3*#1+1/3*#2+1/3*#3$)node[Tsq]{}edge[->-=0.75,red]($1/2*#1+1/2*#3$)edge[-<-=0.75,red]($1/2*#1+1/2*#2$) edge[-<-=0.75,red]($1/2*#2+1/2*#3$);
     \draw[->-=0.75,red] ($1/3*#1+1/3*#2+1/3*#3$)--++#4; } 
\newcommand{\randt}{\draw[OES={AR1=0.75},dotted](0,0)node(1){1}--(1,0)node(2){2};\draw[OES={AR1=0.75},dotted](3,1)node(3){3}edge[->-=0.75](2)--(2,1.5)node(4){4}--(4,2.5)node(5){5}--(4,0)node(6){6};
    \draw[OES={AR1=0.75},dotted](2)--(4)--(1,2)node(7){7}edge[-<-=0.75](1)edge[-<-=0.75](5)edge[-<-=0.75](2);
     \draw[OES={AR1=0.75},dotted](3)edge[->-=0.75](5)--(6)edge[-<-=0.75](2);
   \draw[->-=0.75,dotted](0,2.5)node(8){8}--(7);
    \draw[->-=0.75,dotted](8)--(1);
     \draw[->-=0.25,dotted](0,4)node(9){9}--(8);
     
   \draw[OES={AR1=0.75},dotted](9)--(1,4)node(10){10}--(4,4)node(11){11};\draw[OES={AR1=0.75},dotted](8)--(11)--(7);  
    \draw[->-=0.75,dotted](11)--(5);\draw[->-=0.75,dotted](8)--(10);}
     \tikzstyle{ndl}=[rectangle, minimum height=.4pt, minimum width=8pt, inner sep=0pt, draw, rotate=#1]
     \newcommand{\ttorus}{\foreach \j in {0,1,2}
{\draw[dotted,OES={AR1=0.75}](0,\j)--(1,\j)--(2,\j);
\draw[dotted,OES={AR1=0.75}](\j,0)--(\j,1)--(\j,2); }
\foreach \j in {(0,0),(0,1),(1,0),(1,1)}
{\draw[dotted,OES={AR1=0.75}] \j --++(45:{sqrt(2)}); }}
\newcommand{\DPOprime}{\draw[red](0,0)--++(0.5,0)edge[black!99]+(0,1.25)node[Bcir]{}--++(1.5,0)node[Dtriangle]{}edge[red]+(0,-1.25)--++(1.5,0)edge[black!99]+(0,1.25)node[Bcir]{}--+(0.5,0); \draw[red,dashed] (-0.5,0)--(0,0);
\draw[red,dashed] (4,0)--(4.5,0);}
\newcommand{\DSOprime}{\draw[red](0,0)--++(0.5,0)edge[red!99]+(0,-1.25)node[Dtriangle]{}--++(1.5,0)node[Bcir]{}edge[black!60]+(0,1.25)--++(1.5,0)edge[red!99]+(0,-1.25)node[Dtriangle]{}--+(0.5,0);\draw[red,dashed] (-0.5,0)--(0,0);
\draw[red,dashed] (4,0)--(4.5,0); \path (0.5,0)++(1.5,0)--node[OP]{$O_\Dt$}++(1.5,0); }
\begin{document}

\author{Sujeet K. \surname{Shukla}}
\affiliation{Department of Physics, University of Washington Seattle, USA 98105}
\author{Tyler D. \surname{Ellison}}
\affiliation{Department of Physics, University of Washington Seattle, USA 98105}
\author{Lukasz \surname{Fidkowski}}
\affiliation{Department of Physics, University of Washington Seattle, USA 98105}

\title{A tensor network approach to 2D bosonization}

\date{\today}

\begin{abstract} 
We present a 2D bosonization duality using the language of tensor networks.  Specifically, we construct a tensor network operator (TNO) that implements an exact 2D bosonization duality. The primary benefit of the TNO is that it allows for bosonization at the level of quantum states. Thus, we use the TNO to provide an explicit algorithm for bosonizing fermionic projected entangled pair states (fPEPs). A key step in the algorithm is to account for a choice of spin-structure, encoded in a set of bonds of the bosonized fPEPS. This enables our tensor network approach to bosonization to be applied to systems on arbitrary triangulations of orientable 2D manifolds.
\end{abstract}

\pacs{}

\maketitle

\tableofcontents
\section{Introduction}

The Jordan-Wigner transformation is a well established example of a bosonization duality -- it maps a system of spinless complex fermions to a system of spins [\onlinecite{Jordan28}].  The duality has led to many fruitful applications to one dimensional systems, where it equates 1D fermionic models and spin chains. However, while the Jordan-Wigner transformation is a powerful tool in one dimension, there are challenges to applying it to higher dimensional systems. To implement the Jordan-Wigner transformation in dimensions greater than one, the duality is applied along a 1D path which snakes through the fermionic system. In general, this yields a transformation that maps local fermionic Hamiltonians to \textit{non}-local bosonic Hamiltonians. 

Generalizations of the Jordan Wigner transformation to two dimensions have since overcome this obstacle and indeed map local fermionic Hamiltonians to local bosonic Hamiltonians [\onlinecite{ChenYA17},\onlinecite{Ball05},\onlinecite{Verstraete05},\onlinecite{Kitaev06a}]. Similar to the one dimensional Jordan-Wigner transformation, these two dimensional bosonization dualities are expressed at the level of operators.  That is, they define a mapping of operators, where operators that act on fermionic degrees of freedom are mapped to operators that act on spins. Such a mapping of operators \footnote{This is technically a $C^*$ algebra automorphism.} necessarily comes from conjugating by some unitary operator on the Hilbert space [\onlinecite{Neumann31}].  However, finding the explicit form of this unitary, and thereby obtaining the action of the duality at the level of quantum states, is challenging.

In this work, we formulate a two dimensional bosonization duality at the level of quantum states. Specifically, we identify a tensor network representation of the duality in Ref. [\onlinecite{ChenYA17}]. This is to say, we construct a tensor network operator (TNO)  which, by conjugation, maps operators according to the transformation in Ref. [\onlinecite{ChenYA17}]. Moreover, the TNO may be applied directly to fermionic tensor network states to map them to bosonic states. 
Further, we show that bosonized fermionic projected entangled pair states (fPEPS) may be written explicitly as bosonic projected entangled pair states (bPEPS). 


The TNO inherits two of the main features of the transformation detailed in Ref. [\onlinecite{ChenYA17}]. First, the mapping of operators in Ref. [\onlinecite{ChenYA17}] makes the physical interpretation of two dimensional bosonization transparent -- fermionic excitations are mapped to \textit{emergent} fermions in a $\Z_2$ gauge theory.  Operators that create pairs of fermions are explicitly mapped to operators that create pairs of emergent fermions, which are interpreted as bound states of a bosonic gauge charge and flux.  The gauge constraint on the bosonic side of the duality expressly prohibits unbound charge and flux excitations. Consequently, our TNO clearly maps the subspace of states with an even number of fermions to a constrained Hilbert space with a basis given by configurations of emergent fermions. Second, the bosonization duality of Ref. [\onlinecite{ChenYA17}] carefully accounts for spin-structure -- a mathematical input necessary for bosonization dualities -- while in other treatments, spin-structure is hidden in seemingly arbitrary choices. In our construction, a choice of spin-structure is then specified by a certain set of bonds in the TNO.  Importantly, keeping track of the spin-structure allows us to establish our tensor network bosonization for fermionic systems on arbitrary triangulations of closed, orientable 2D manifolds. 

For context, our approach to bosonization is analogous to a method employed in Ref. [\onlinecite{Haegeman15}] for gauging symmetries at the level of quantum states. In Ref. [\onlinecite{Haegeman15}], a TNO is used to map a state with a global symmetry to a state with the corresponding gauge symmetry. Indeed, symmetries may be gauged by using a duality [\onlinecite{Levin12}], and the TNOs in Ref. [\onlinecite{Haegeman15}] can be understood as a tensor network representation of the duality corresponding to gauging the symmetry. We note that, using the methods of Ref. [\onlinecite{Haegeman15}] to gauge the fermion parity symmetry in a fermionic system, one obtains a TNO that is closely related to our bosonization TNO.  However, unlike the bosonization TNO, the TNO corresponding to gauging fermion parity maps to a system with fermionic degrees of freedom (although, see [\onlinecite{Zohar18}]). The inverse (or Hermitian conjugate) of our bosonization TNO (this maps a bosonic state to a fermionic state) can be understood as ``un-gauging'' fermion parity or ``fermion condensation'' [\onlinecite{Ellison19},\onlinecite{Williamson17a},\onlinecite{Aasen17}].

We emphasize that our bosonization duality is distinct from the efforts to express fermionic tensor networks in terms of bosonic tensor networks. Refs. [\onlinecite{Corboz09},\onlinecite{Corboz10},\onlinecite{Kraus10}] develop strategies for rewriting fermionic tensor network states as bosonic tensor network states. However, these do not change the state -- only its tensor network representation. The bosonization duality, in contrast, maps unentangled fermionic states to long-range entangled bosonic states. Nonetheless, our bosonization duality may prove useful for analyzing fermionic states, since expectation values of local fermionic operators can be recovered by computing the expectation value of the transformed operators in the bosonized tensor network state. Furthermore, our bosonization duality and the subsequent rewriting as an explicit bosonic tensor network state preserves the locality of the tensor network and only increases the bond dimension by a factor of 2.

The remainder of the paper is structured as follows. We begin by introducing the formalism of $\Z_2$-graded Hilbert spaces and $\Z_2$-graded tensor networks. We find the language of $\Z_2$-graded tensor networks especially convenient for expressing our bosonization TNO, and we use the notation established in section \ref{sec:gradedTN} throughout the text.  We encourage readers that are familiar with the formalism of $\Z_2$-grading to briefly skim section \ref{sec:gradedTN} to simply acquaint themselves with our notation.  Before constructing the bosonization TNO, we review the 2D bosonization duality of Ref. [\onlinecite{ChenYA17}], in section \ref{subsec: review of 2D}. Subsequently, in section \ref{TNOrepresentation} we construct the TNO that implements this 2D bosonization duality at the level of states.  After applying the bosonization TNO to a fermionic tensor network state, the resulting state is not explicitly a bosonic tensor network state.  Therefore, section \ref{sec: bosonization of fPEPS} is devoted to describing an algorithmic procedure for ``removing the grading'' and rewriting a bosonized fPEPS as a bPEPS. The procedure involves summing over inequivalent spin-structures, discussed in section \ref{subsec: choosing SS}. Lastly, we note that we describe a tensor network representation of 1D bosonization in Appendices \ref{operatorduality1d} and \ref{subsec:TNbosonization1d}.

\section{\texorpdfstring{ $\Z_2$}{Z2}-graded tensor networks}
\label{sec:gradedTN}
{Our bosonization TNO is naturally expressed in terms of $\Z_2$-graded tensor networks. Therefore, the purpose of this section is to give a concise introduction to $\Z_2$-graded tensor networks and establish the notation used throughout the text. For a similar exposition of $\Z_2$-graded tensor networks, one can consult Refs.~[\onlinecite{Bultinck17},\onlinecite{Bultinck17a}]. We start by defining $\Z_2$-graded Hilbert spaces and $\Z_2$-graded tensors. Then, we introduce the contraction map to ``glue'' together $\Z_2$-graded tensors.  The contraction map allows us to define a linear action of tensors on each other and to form $\Z_2$-graded tensor networks. Accordingly, we describe a representation of a fermionic operator algebra in terms of $\Z_2$-graded tensors and present a diagrammatic representation for $\Z_2$-graded tensor networks.}

\subsection{\texorpdfstring{$\Z_2$}{Z2}-graded Hilbert spaces}
A $\Z_2$-graded Hilbert space is a Hilbert space $\mcH$ with a natural direct sum decomposition: $\mcH=\mcH^0 \bigoplus \mcH^1$.  A vector $|j)\in \mcH$ lying solely in either $\mcH^0$ or $\mcH^1$ has a $\{0,1\}$ valued \textit{grading} denoted as $|j|$, where $|j|=0$ if $|j) \in \mcH^0$ and $|j|=1$ if $|j) \in \mcH^1$. (We use round brackets for vectors in $\Z_2$-graded Hilbert spaces.) In the context of fermionic systems, we consider $\mcH^0$ to be the subspace spanned by states with an even number of fermions and $\mcH^1$ to be the subspace spanned by states with an odd number of fermions. Thus, the grading of a vector and its fermion parity coincide. For this reason, we use grading and parity interchangeably. Further, we refer to vectors with a definite parity as homogeneous vectors, and we call states formed from a superposition of both even and odd parity vectors inhomogeneous. 

To capture the physics of a many-body fermionic system, we will need a generalization of the usual tensor product -- the  graded tensor product $\hatotimes$.  For graded Hilbert spaces $\mcH_a$ and $\mcH_b$, we define the graded tensor product space $\mcH_a \hatotimes \mcH_b$ to be the quotient space:
\begin{align} \label{gradedtensorspace}
    \mcH_a \hatotimes \mcH_b \equiv \frac{(\mcH_a \otimes \mcH_b)\bigoplus (\mcH_b \otimes \mcH_a)}{\sim}.
\end{align}
Here, $\otimes$ is the usual (unsymmetrized) tensor product of Hilbert spaces, and $\sim$ denotes the relation:
\begin{align} \label{gradedtensorrelation}
    |j)_a \otimes |k)_b \sim (-1)^{|j||k|} |k)_b \otimes |j)_a
\end{align}
for $|j)_a \in \mcH_a$ and $|k)_b \in \mcH_b$ both with definite grading. The Hilbert space $\mcH_a \hatotimes \mcH_b$ is itself a graded Hilbert space with the equivalence class $|j)_a \hatotimes |k)_b \in \mcH_a \hatotimes \mcH_b$ having a  grading of $|j| + |k| \text{ mod }2$. As a consequence of Eq.~\eqref{gradedtensorrelation}, we have: 
\begin{align} \label{gradedtensor}
    |j)_a \hatotimes |k)_b = (-1)^{|j||k|} |k)_b \hatotimes |j)_a.
\end{align}
This property of the graded tensor product is key to describing fermions, as it encodes the exchange statistics of the fermions. One can see that the graded tensor product captures the familiar notion of a fermionic Fock space by representing the equivalence class $|j)_a \hatotimes |k)_b$ by the vector ${\frac{1}{2}\Big(|j)_a \otimes |k)_b + (-1)^{|j||k|} |k)_b \otimes |j)_a\Big)}$. When $|j)_a$ and $|k)_b$ are both fermion parity odd, we have an anti-symmetric combination -- the Slater determinant. \par


Before moving on to describe $\Z_2$-graded tensors, we would like to note that Hilbert spaces for bosonic systems also fit into the framework of $\Z_2$-graded Hilbert spaces. A bosonic Hilbert space can be understood as a $\Z_2$-graded Hilbert space for which $\mcH^1$, the space of vectors with odd grading, is empty, leaving $\mcH= \mcH^0$. {The graded tensor product between two bosonic Hilbert spaces reduces to the symmetrized tensor product between the Hilbert spaces, as is standard in tensor networks for bosonic systems.} In a slight abuse of notation, we will denote vectors $|j\rangle$ in bosonic Hilbert spaces with angled brackets. In what follows, we will freely take graded tensor products of states in bosonic Hilbert spaces and states in fermionic Hilbert spaces, and the angled brackets are to remind us that those vectors necessarily have trivial grading.
\subsection{\texorpdfstring{ $\Z_2$}{Z2}-graded tensors}
A rank $N$ $\Z_2$-graded tensor $\M{T}$ is an element of the graded tensor product of $N$ $\Z_2$-graded Hilbert spaces, i.e., $\M{T} \in \mcH_{1} \hat \otimes \ldots \hat \otimes \mcH_N$. Similar to tensors used to study bosonic systems, $\Z_2$-graded tensors admit a convenient graphical representation. Let us consider a specific example with $N=4$ for illustration: 
\begin{align} \label{tensorexample}
\begin{tikzpicture}[Sbase,scale=0.5]
\draw[red,->-=0.5] (0,0)--(0:2)node[above,black]{$p$};     \draw[red,,->-=0.5] (0,0)--(-90:2)node[right,black]{$q$};    
\draw[black,-<-=0.5] (0,0)--(180:2)node[above,black]{$r$};
\draw[red,-<-=0.5] (0,0)--(90:2)node[above,black]{$s$};
\node[Tsq] (v) at (0,0) {$\M{T}$};
\end{tikzpicture}
&\equiv \sum_{a,b,c,d} T_{abcd} |a)_{p}|b)_{q} \langle c|_{r} ( d|_{s} .
\end{align}

On the left hand side of Eq.~\eqref{tensorexample}, we have a diagrammatic representation of the tensor $\M{T}\in \mcH_{p} \hatotimes \mcH_{q} \hatotimes \mcH^*_{r}\hatotimes \mcH^*_{s}$, where $\mcH^*$ is the dual Hilbert space of $\mcH$. In the diagram, the characters at the end of the legs label the Hilbert spaces, and the orientation of the leg indicates whether we consider the Hilbert space to be a dual Hilbert space.  (Legs oriented towards the node correspond to a dual Hilbert space.) Further, we have used red legs for $\Z_2$-graded Hilbert spaces and black legs for bosonic Hilbert spaces.

The right hand side of Eq.~\eqref{tensorexample} is the tensor component form of $\M{T}$ with \textit{component values} $T_{abcd}$.  Note that we have suppressed the $\hatotimes$ between vectors, and as previously mentioned, we use angled brackets for vectors which necessarily have trivial grading [$\langle c|_r$ in Eq.~\eqref{tensorexample}]. Thus, the vector $|a)_{p}|b)_{q} \langle c|_{r} ( d|_{s} $ has a grading of $|a|+|b|+|d| \text{ mod }2$. Since the graded tensor product of Hilbert spaces is a graded Hilbert space, a tensor can be either homogeneous or inhomogeneous. A homogeneous tensor has  nonzero component values only for vectors sharing the same parity, and otherwise, the tensor is inhomogeneous.

It is important to note that the tensor $\M{T}$ is independent of the ordering of vectors in Eq.~\eqref{tensorexample}, but the component values ($T_{abcd}$) can depend on the ordering. For example, if we swap the order of $|a)_p$ and $|b)_q$, we get:
\begin{align}\label{reordering sign}
    \M{T}=\sum_{a,b,c,d} T_{abcd}(-1)^{|a||b|} |b)_{q} |a)_{p}\langle c|_{r} (d|_{s}.
\end{align}
Hence, the tensor components have an additional sign $(-1)^{|a||b|}$ with the new choice of ordering.  The  ordering should therefore be interpreted as a particular choice of orthonormal basis with which to express the tensor. We will often refer to the choice of ordering of the vectors in the component form of a tensor as a choice of \textit{internal ordering}.

\subsection{Contraction map and tensor action} \label{subsec:contraction}
To form tensor networks, we require a map to ``glue'' together tensors.  To this end, we define  the \textit{contraction map}: 
\begin{align} \label{contraction}
    \ct:\mcH^* \hatotimes \mcH &\to \mathbb{C} \nonumber \\ 
  (j| \hatotimes |k) &\mapsto (j|k) = \delta_{jk}.
\end{align}
Notice that a reordering of vectors may be necessary  before evaluating $\ct$.  For example:
\begin{align}\label{supetrace}
    \ct\big[ |k) \hatotimes (j| \big] = \ct\big[ (-1)^{|j||k|} (j| \hatotimes |k) \big]=(-1)^{|j||k|}\delta_{jk}.
\end{align}
Interpreting $ \ct\big[ |k) \hatotimes (j| \big]$ as $\text{tr}[|k) \hatotimes (j|]$, we see that it differs from the usual trace by a sign, $(-1)^{|j||k|}$. This phase is referred to as the \textit{supertrace sign}.

In general, the indices to be contracted need not be next to each other in an algebraic expression. For this reason, we introduce the superscript notation:
\begin{align}
    |k)^\ct \hatotimes (j|^\ct\equiv \ct\big[ |k) \hatotimes (j| \big].
\end{align}
A dual vector and a vector with matching superscripts $\ct$ are to first be reordered then contracted with the map $\ct$.

We now provide examples to illustrate the contraction of $\Z_2$-graded tensors. We consider the following three  even parity tensors to guide the discussion: 
\begin{align}
    \begin{tikzpicture}[Sbase]
    \draw[red,->-=0.5] (-1,0)node[below,black]{$p$} --+(1,0);\draw[red,-<-=0.5] (0,0) --+(1,0)node[below,black]{$q$};   \draw[-<-=0.75](0,0)--++(0,1)node[right]{$r$}; \node[Utriangle] at (0,0) {$\M{A}$}; 
    \end{tikzpicture} \equiv &  \sum_{a,b,c} A_{abc}  (a|_q \langle b |_r (c|_p \nonumber \\
    \begin{tikzpicture}[Sbase]
    \draw[red,-<-=0.5] (-1,0)node[below,black]{$q$} --+(1,0);\draw[red,-<-=0.5] (0,0) --+(1,0)node[below,black]{$s$};   \node[Bcir] at (0,0) {$\M{B}$}; 
    \end{tikzpicture} \equiv &  \sum_{d,e} B_{de} |d)_{q} ( e|_{s} \nonumber \\
     \begin{tikzpicture}[Sbase]
    \draw[red,-<-=0.5] (-1,0)node[below,black]{$s$} --+(1,0);   \node[Tsq] at (0,0) {$\M{C}$}; 
    \end{tikzpicture} \equiv &  \sum_{f} C_{f} |f)_{s}\nonumber.
\end{align}
 
First, we contract the $s$ leg of $\M{B}$ with the $s$ leg of $\M{C}$. The resulting tensor is denoted as $\M{B}\bdot \M{C}$: 
\begin{align} \label{BC example}
\begin{tikzpicture}[Sbase]
 \draw[red,-<-=0.5] (-1,0) --+(1,0);\draw[red,-<-=0.5] (0,0) --+(1,0);   \node[Bcir] at (0,0) {$\M{B}$};   \node[Tsq] at (1,0) {$\M{C}$}; 
\end{tikzpicture} \equiv & \ct_s\left[\M{B}\hatotimes \M{C}\right]\nonumber \\ =& \sum_{d,e} B_{de} |d)_{q} (e|_{s}^{\ct} \sum_{f} C_{f} |f)_{s}^{\ct} \nonumber \\
=& \sum_{d,e}  B_{de} C_{e}|d)_{q}\equiv \M{B}\bdot \M{C},
\end{align}
where $\ct_{s}\left[ \cdots \right]$ refers to contraction of the $s$ index. Notice that $\M{C}$ is a $\Z_2$-graded vector, and $\M{B}$ is a $\Z_2$-graded matrix. We see that $\M{B}$ acts on $\M{C}$ by contraction and gives a new vector, $\M{B}\bdot \M{C}$. Hence  $\M{B}$ can represent linear operators on $\Z_2$-graded vector spaces. 

Second, we contract  $\M{A}$ with $\M{B}\bdot \M{C}$ by contracting the $q$ leg of $\M{A}$ with that of $\M{B}\bdot \M{C}$ to produce a new tensor $\M{A}\bdot \M{B}\bdot \M{C}$: 
\begin{align*} \label{ABC example}
\begin{tikzpicture}[Sbase,scale=0.9]
     \draw[red,->-=0.5] (-1,0) --+(1,0);
     \draw[red,-<-=0.5] (0,0) --+(1,0); \draw[red,-<-=0.5] (1,0) --+(1,0);   \draw[-<-=0.75](0,0)--++(0,1);
     \node[Utriangle] at (0,0) {$\M{A}$}; \node[Bcir] at (1,0) {$\M{B}$}; \node[Tsq] at (2,0) {$\M{C}$}; 
    \end{tikzpicture} & \equiv \ct_q\left[\M{A}\hatotimes \M{B}\bdot \M{C}\right]\nonumber\\*
    &=\sum_{a,b,c} A_{abc}  (a|^\ct_q \langle b |_r (c|_p \sum_{d,e}  B_{de} C_{e}|d)_{q}^{\ct} \nonumber \\
&= \sum_{a,b,c,e}  (-1)^{|a||c|}A_{abc}B_{ae}C_{e} \langle b|_{r}  (c|_p \nonumber\\ & \equiv \M{A}\bdot \M{B}\bdot\M{C}.
\end{align*}
Note that the sign $(-1)^{|a||c|}$ comes from moving $(a|_q$ past $\langle b|_r (c|_p$ in order to perform the contraction.  We can say that $\M{A}$ acted on $\M{B}\bdot \M{C}$ by contraction to produce a tensor $\M{A}\bdot\M{B}\bdot \M{C}$. 

In general, contraction of any two tensors can be interpreted in this way: a tensor $\M{T}$ \textit{acts} on another tensor $\M{S}$ by contraction to produce a tensor $\M{T}\bdot \M{S}$. Letting $ind$ be the set of indices contracted between $\M{T}$ and $\M{S}$, we have:
\begin{align}
\M{T}\bdot \M{S}=\ct_{ind}[\M{T} \hatotimes \M{S}],
\end{align}
where $\ct_{ind}[...]$ refers to contraction over the indices in the set $ind$. Note that, since $\M{T}\bdot \M{S}$ depends on the set $ind$, we should ideally write it as $\M{T}\bdot_{ind} \M{S}$. However, the set $ind$ is typically clear from context, so we omit the subscript for notational convenience.
 
\subsection{$\Z_2$-graded representation of a fermionic operator algebra}
\label{subsec: representing fermions}
Now that we have defined tensors' linear action via contraction, we establish a representation for the fermionic operator algebra of a spinless complex fermion using $\Z_2$-graded tensors.  The representation is essential for the construction of the bosonization TNO, since the bosonization TNO maps fermionic operators represented by $\Z_2$-graded tensors to bosonic operators.

The operator algebra of a spinless complex fermion at a site $p$ is generated by the familiar fermionic creation and annihilation operators: $c^{\dagger}_p$, $c_p$.  However, it is often convenient to instead work with a generating set formed by the two \textit{Majorana operators}:
\begin{align} \label{gammadef}
    \gamma_{p} = c^{\dagger}_{p}+c_{p} \quad\text{and}\quad \bar \gamma_{p} = i(c^{\dagger}_{p}-c_{p}). 
\end{align}
These are Hermitian, unitary operators: 
\begin{align} \label{gamma-prop}
    \gamma_{p}^{\dagger}=\gamma_{p}\text{,} \quad \bar \gamma_{p}^{\dagger}=\bar \gamma_{p}\text{,}\quad \gamma_{p}^2 =\bar \gamma_{p}^2 = 1\text{,}
\end{align} and they satisfy the following commutation relations: 
\begin{align} \label{majoranacommutation}
    \lbrace \gamma_{p}, \bar \gamma_{p'} \rbrace = & 0 \nonumber \\ 
     \lbrace \gamma_{p}, \gamma_{p'} \rbrace =&  \lbrace \bar \gamma_{p}, \bar\gamma_{p'} \rbrace=2\delta_{ p , p'},  
\end{align}
where braces denote the anti-commutator and $\delta_{p,p'}$ is the Kronecker delta function. Furthermore, the \textit{fermion parity operator} $P_p$ is given by:
\begin{align}\label{parity-def}
    P_p=(-1)^{c_p^{\dagger}c_p}=-i \gamma_p\gammabar_p.
\end{align}

We now show that $\gamma_{p}$, $\gammabar_p$ and $P_p$ can be represented as rank-2 $\Z_2$-graded tensors. Letting $|1)$ and $|0)$ represent the fermion occupied and unoccupied states respectively, then the creation and annihilation operators have the following canonical representations: 
\begin{align}
    c_p |1)_p &= |0)_p, \, c_p|0)_p =0 \nonumber\\
    c_p^{\dagger} |1)_p &= 0, \, c_p^{\dagger} |0)_p = |1)_p.
\end{align}
Using Eq.~\eqref{gammadef}, this leads to the following representation of Majorana operators: 
\begin{align} \label{gammagraded}
    \begin{tikzpicture}[Sbase]
        \draw[red,OES={AR2=0.5}] (-1,0)node[below,black]{$p$}--(0,0)--(1,0)node[below,black]{$p$}; \node[OP] {$\gamma$};
    \end{tikzpicture} &\equiv |1)_p(0|_p+|0)_p(1|_p \\ \nonumber
    &=  \sum_{a} |a+1)_p(a|_p  \\
    \begin{tikzpicture}[Sbase]
        \draw[red,OES={AR2=0.5}] (-1,0)node[below,black]{$p$}--(0,0)--(1,0)node[below,black]{$p$}; \node[OP] {$\gammabar$};
    \end{tikzpicture} \label{gammabargraded}
    &\equiv i|1)_p(0|_p-i|0)_p(1|_p  \\ \nonumber 
    &= \sum_{a} (-1)^{a} i |a+1)_p(a|_p. 
\end{align}
Here, and throughout the paper, indices are assumed to take binary values, unless stated otherwise. Thus, $ \sum_a \equiv \sum_{a=0}^1$, and $(a+1) \equiv (a+1) \text{ mod }2$, etc. 
In Appendix \ref{app: Majorana tensors}, we show that the algebraic properties of the Majorana operators are indeed satisfied by the tensor representations in Eqs.~\eqref{gammagraded} and \eqref{gammabargraded}. Furthermore, using Eq.~\eqref{parity-def}, fermion parity $P$ can be represented as:
\begin{align} \label{parity definition}
\begin{tikzpicture}[Sbase]
        \draw[red,OES={AR2=0.5}](-1,0)node[below,black]{$p$}--(0,0)--(1,0)node[below,black]{$p$};\node[OP] {$P$};
    \end{tikzpicture}
    &\equiv  \sum_a (-1)^a|a)_p(a|_p \nonumber\\
    &= |0)_p(0|_p -|1)_p(1|_p.
\end{align}
Eq.~\eqref{parity definition} agrees with the intuition that the $\Z_2$-grading of a vector corresponds to the fermion parity of the state.

\subsection{\texorpdfstring{ $\Z_2$}{Z2}-graded tensor network diagrams}
To establish a general theory of $\Z_2$-graded tensor networks, we need to make sure that tensor diagrams can unambiguously represent the algebraic values. For example, given the tensor network diagram:
\begin{align}
    \begin{tikzpicture}[Sbase,scale=0.9]
     \draw[red,->-=0.5] (-1,0) --+(1,0);
     \draw[red,-<-=0.5] (0,0) --+(1,0); \draw[red,-<-=0.5] (1,0) --+(1,0);   \draw[-<-=0.75](0,0)--++(0,1);
     \node[Utriangle] at (0,0) {$\M{A}$}; \node[Bcir] at (1,0) {$\M{B}$}; \node[Tsq] at (2,0) {$\M{C}$}; 
    \end{tikzpicture},
\end{align}
how do we know whether it represents the tensor  $\M{A}\bdot\M{B}\bdot \M{C}$ or $\M{B}\bdot\M{C}\bdot \M{A}$, or any other order of action of tensors $\M{A}$, $\M{B}$, and  $\M{C}$? Unlike bosonic tensors, $\Z_2$-graded tensors do not commute with each other, and hence, in general, $\M{A}\bdot\M{B}\bdot \M{C}$ and $\M{B}\bdot\M{C}\bdot \M{A}$ are different tensors. 
If $\M{T}$ and $\M{S}$ are homogeneous tensors, then the commutation relation of graded tensor products in Eq.~\eqref{gradedtensor} implies the following commutation relation:
\begin{align}\label{STphase}
   \M{T}\bdot \M{S}=(-1)^{|\M{T}||\M{S}|}\M{S}\bdot \M{T}.
\end{align}
In particular, as long as only one tensor is odd, we  have $\M{T}\bdot \M{S}=\M{S}\bdot \M{T}$, and the order of action of these tensors does not matter. Extending this argument, we see that for a set of homogeneous tensors $\lbrace \M{A},\M{B},\M{C},\ldots \rbrace$, as long as at most one tensor is odd, the order of contraction does not matter.  

What happens when more than one odd tensor appears in a TN? An example of such a tensor network is given in the following diagram, where we assume $\M{A}$ is an even tensor:
\begin{align}\label{gAgorggA}
  \begin{tikzpicture}[Sbase,scale=0.9]
\draw[red,->-=0.5] (-2,0)--+(1,0);
\draw[red,->-=0.5] (-1,0)--+(1,0);
\draw[red,-<-=0.5] (0,0)--+(1,0); 
\draw[red,-<-=0.5] (1,0)--+(1,0);  
\draw[-<-=0.75](0,0)--+(0,1);
\node[Utriangle] at (0,0) {$\M{A}$}; 
\node[OP] at (1,0) {$\gammabar$}; 
\node[OP] at (-1,0) {$\gamma $}; 
    \end{tikzpicture} .
\end{align}
How should this tensor network diagram be read algebraically? For instance, it could represent either $\gamma \bdot \M{A} \bdot\gammabar$ or $ \gammabar \bdot\gamma  \bdot \M{A}$, among other possibilities. This is problematic because, according to Eq.~\eqref{STphase}, $\gamma \bdot \M{A} \bdot \gammabar=-\gammabar \bdot \gamma \bdot \M{A}$. Hence, the algebraic value of the this tensor network diagram is ill defined. 

To remove this ambiguity, we need to indicate the order in which $\gamma$ and $\gammabar$ are applied. We do this by adopting the following simple notation: if two or more odd tensors appear in a diagram, we place numbers next to their nodes to indicate their relative order. For example, $\gamma\bdot A \bdot\gammabar$ and $\gammabar \bdot \gamma\bdot \M{A} $ are then respectively represented by the following diagrams: 
\begin{align}
 \gamma\bdot A \bdot\gammabar \equiv  \begin{tikzpicture}[Sbase,scale=0.9]
\draw[red,->-=0.5] (-2,0)--+(1,0);
\draw[red,->-=0.5] (-1,0)--+(1,0);
\draw[red,-<-=0.5] (0,0)--+(1,0); 
\draw[red,-<-=0.5] (1,0)--+(1,0);  
\draw[-<-=0.75](0,0)--+(0,1);
\node[Utriangle] at (0,0) {$\M{A}$}; 
\node[OP] at (1,0) {$\gammabar$}; \path (1,0)++(0,-0.5)node[red,scale=0.7] {$1$}; 
\node[OP] at (-1,0) {$\gamma $}; \path (-1,0)++(0,-0.5)node[red,scale=0.7] {$2$};
    \end{tikzpicture} \nonumber\\
  \gammabar \bdot \gamma\bdot A \equiv  \begin{tikzpicture}[Sbase,scale=0.9]
\draw[red,->-=0.5] (-2,0)--+(1,0);
\draw[red,->-=0.5] (-1,0)--+(1,0);
\draw[red,-<-=0.5] (0,0)--+(1,0); 
\draw[red,-<-=0.5] (1,0)--+(1,0);  
\draw[-<-=0.75](0,0)--+(0,1);
\node[Utriangle] at (0,0) {$\M{A}$}; 
\node[OP] at (1,0) {$\gammabar$}; \path (1,0)++(0,-0.5)node[red,scale=0.7] {$2$}; 
\node[OP] at (-1,0) {$\gamma $}; \path (-1,0)++(0,-0.5)node[red,scale=0.7] {$1$};
    \end{tikzpicture} .
\end{align}
In fact, the first diagram can also represent any tensor network in which  $\gammabar$ is applied \textit{before} $\gamma$, so it can also represent $ \gamma \bdot \gammabar \bdot A $ or $ A\bdot \gamma \bdot \gammabar$. Similarly, the second diagram can also represent $  \gammabar \bdot A \bdot \gamma $ and $ A\bdot\gammabar\bdot\gamma$ (recall that we assume $A$ is an even tensor).

\section{Tensor network bosonization duality in 2D} \label{2Dbosonization}

In this section, we use the formalism of $\Z_2$-graded tensor networks to construct a TNO that implements the exact 2D bosonization duality of Ref. [\onlinecite{ChenYA17}]. We start by reviewing the operator-level duality, and then show that it can be naturally represented by a TNO, which we refer to as the bosonization TNO.  The TNO representation allows us to easily compute the action of bosonization on quantum states (as opposed to just the action on operators).  In particular, in section \ref{sec: bosonization of fPEPS}, we use the bosonization TNO to map fermionic tensor network states to bosonic tensor network states.  

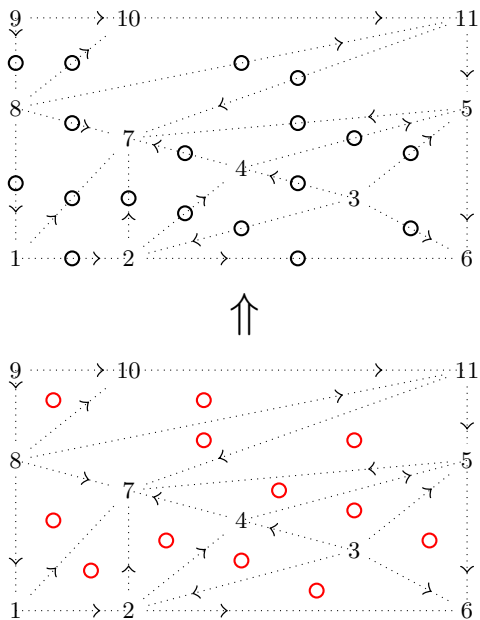
\begin{figure}
    \centering
     \begin{tikzpicture}[Sbase,xscale=1.5,yscale=0.8]
  \randt;
     
\node[berc] at ($1/2*(1)+1/2*(2) $){};
\node[berc] at ($1/2*(1)+1/2*(7) $){};
\node[berc] at ($1/2*(2)+1/2*(7) $){};
\node[berc] at ($1/2*(2)+1/2*(4) $){};
\node[berc] at ($1/2*(2)+1/2*(3) $){};
\node[berc] at ($1/2*(2)+1/2*(6) $){};
\node[berc] at ($1/2*(3)+1/2*(5) $){};
\node[berc] at ($1/2*(3)+1/2*(6) $){};
\node[berc] at ($1/2*(3)+1/2*(4) $){};
\node[berc] at ($1/2*(4)+1/2*(7) $){};
\node[berc] at ($1/2*(4)+1/2*(5) $){};
\node[berc] at ($1/2*(5)+1/2*(7) $){};
\node[berc] at ($1/2*(8)+1/2*(7) $){};
\node[berc] at ($1/2*(11)+1/2*(7) $){};
\node[berc] at ($1/2*(8)+1/2*(11) $){};
\node[berc] at ($1/2*(8)+1/2*(10) $){};
\node[berc] at ($1/2*(8)+1/2*(9) $){};
\node[berc] at ($1/2*(8)+1/2*(1) $){};
 \end{tikzpicture}
 
 \begin{tikzpicture}[Sbase]
  \node[scale=2] at (4,-1.5) {$ \Uparrow$};
 \end{tikzpicture}
 
    \begin{tikzpicture}[Sbase,xscale=1.5,yscale=0.8]
   \randt'
\node[ferc] at ($1/3*(1)+1/3*(2)+1/3*(7) $){};
\node[ferc] at ($1/3*(2)+1/3*(4)+1/3*(7) $){};
\node[ferc] at ($1/3*(2)+1/3*(4)+1/3*(3) $){};
\node[ferc] at ($1/3*(3)+1/3*(4)+1/3*(5) $){};
\node[ferc] at ($1/3*(2)+1/3*(3)+1/3*(6) $){};
\node[ferc] at ($1/3*(3)+1/3*(5)+1/3*(6) $){};
\node[ferc] at ($1/3*(1)+1/3*(7)+1/3*(8) $){};
\node[ferc] at ($1/3*(4)+1/3*(5)+1/3*(7) $){};
\node[ferc] at ($1/3*(7)+1/3*(11)+1/3*(8) $){};
\node[ferc] at ($1/3*(5)+1/3*(7)+1/3*(11) $){};
\node[ferc] at ($1/3*(8)+1/3*(9)+1/3*(10) $){};
\node[ferc] at ($1/3*(8)+1/3*(11)+1/3*(10) $){};
    \end{tikzpicture}
 \caption{ The bosonization duality maps a system of spinless complex fermions to a system of spin-1/2 degrees of freedom. The bottom picture shows the fermionic degrees of freedom (red circles) at each triangular face $f$. The top picture shows the spin-1/2 bosonic degrees of freedom (black circles) on each edge $e$.}
    \label{fig:2Dlattice}
    \label{fig:my_label}
\end{figure}

\subsection{Review of the operator-level bosonization duality}
\label{subsec: review of 2D}
To begin, we describe the lattice on which the duality is defined and set some notation.  The duality in Ref. [\onlinecite{ChenYA17}] can be defined on an arbitrary triangulation of a 2D manifold with boundary [\onlinecite{Ellison19},\onlinecite{Djordje18}].
It is also required that the lattice has a branching structure, i.e. each edge has an orientation (see Fig.~\ref{fig:2Dlattice}) such that the edges around any triangle do not form a cycle. The branching structure yields an ordering of the vertices around a triangle and allows us to define an orientation of each triangle relative to the orientation of the underlying oriented manifold. We denote the ordered vertices of the triangular face $f$ as $f_0$, $f_1$ and $f_2$, where $f_j$ is the $j$-vertex of the triangle, and $j$ refers to the number of edges of the triangle $f$ that point toward $f_j$. We adopt the convention that a triangle is positively oriented if $f_0$, $f_1$ and $f_2$  appear in counter-clockwise order, and otherwise it is negatively oriented. Further, we label the edges of $f$ by $f_{01}$, $f_{12}$, and $f_{02}$, such that $f_{jk}$ is the edge pointing from $f_j$ to $f_k$. We also find it convenient to denote the endpoints of the edge $e$ as $e_0$ and $e_1$, with $e$ pointing from $e_0$ to $e_1$. 

Let us illustrate the notation above using examples from Fig.~\ref{fig:2Dlattice}. (Note that the vertices in Fig.~\ref{fig:2Dlattice} are labeled with integers arbitrarily simply to guide the discussion. They do not denote a global ordering of the vertices.) If $f = \langle 3,2,4 \rangle$ then $f_0 = \langle 3 \rangle$, $f_1 = \langle 2 \rangle$, $f_2 = \langle 4 \rangle $ and $f_{01} = \langle 3,2 \rangle$, $f_{12} = \langle 2,4 \rangle$, $f_{02} =\langle 3,4 \rangle $.  Further, if $e=\langle 2,4 \rangle$, then $e_0=2$, $e_1=4$.  The triangle ${\langle 8,1,7 \rangle}$ is positively oriented while ${\langle 3,2,4\rangle}$ is negatively oriented.

We are now in a position to describe the fermionic degrees of freedom on the lattice and the corresponding operator algebra mapped by the bosonization duality. Each triangle $f$ hosts a spinless complex fermion, and as explained in section \ref{subsec: representing fermions}, its operator algebra is generated by Majorana operators, $\gamma_f$ and $\gammabar_f$. The total fermionic algebra is generated by the set of $\gamma_f, \gammabar_f$ for all triangles $f$. 

The bosonization duality is defined on a subset of the full fermionic operator algebra to ensure that the duality maps local operators to local operators. Specifically, the duality is defined on the subalgebra of even operators $\mathcal{E}$, i.e., the operators that commute with the global fermion parity operator $\prod_{f}P_{f}$, where
\begin{align}
    P_{f}=-i\gamma_f\gammabar_f
\end{align}
is the fermion parity operator at $f$. $\mathcal{E}$ is generated by fermion parity $ P_{f}$ at each triangle $f$, and \textit{hopping operators} $S_e$ at each edge $e$ defined as:
\begin{align} \label{eq:Sdefinition}
    S_{e} = i(-1)^{\eta_e} \gamma_{L_{e}}\gammabar_{R_{e}}.
\end{align}
Here, $L_{e}$ and $R_{e}$ denote the triangle to the left and right of the edge $e$, respectively. For example, in Fig.~\ref{fig:2Dlattice}, we have $L_{\langle 2,4 \rangle }=\langle {2,4,7} \rangle$ and $R_{\langle 2,4 \rangle} = \langle 3,2,4 \rangle$. 
$(-1)^{\eta_e}$ is a sign that comes from a choice of the so-called \textit{spin-structure} $\eta$ [\onlinecite{Ellison19}]. We postpone a detailed discussion of  spin-structure until section \ref{subsec: choosing SS} below. For now, $\eta$ should be understood as a chosen set of edges with $\eta_e$ defined as:
\begin{align}
    \eta_e= \begin{cases}
     1 & \text{if }e \in \eta\\
     0 & \text{otherwise.}
    \end{cases}
\end{align}
As we will explain below, $\eta$ is dependent upon the branching structure, and roughly speaking, ensures that the bosonization duality is uniform across the 2D manifold. 

 
We now discuss the relations satisfied by the generators of the even algebra $\mathcal{E}$. First, all parity operators commute with each other: $P_fP_{f'} = P_{f'}P_f,$ for all $f,f'$. However, not all hopping operators commute with each other. Instead, they satisfy the following commutation relations:
\begin{align}
   S_e S_{e'} = (-1)^{\delta_{L_e,L_{e'}}} (-1)^{\delta_{R_e,R_{e'}}} S_{e'}S_e.
\end{align}
That is, two hopping operators anticommute if and only if they have a common triangle to the left or to the right. For example, in Fig.~\ref{fig:2Dlattice}, $S_{\langle 2,4\rangle}$ and $S_{\langle 3,2\rangle}$ anti-commute because they have a common triangle to the right: $R_{\langle 2,4\rangle} = R_{\langle 3,2\rangle}= \langle324 \rangle$. However, $S_{\langle 2,4\rangle}$ and $S_{\langle 3,4\rangle}$ commute because they do not have a common right or left triangle. Parity operators and hopping operators anti-commute if they share a triangle:
\begin{align}\label{id:PfSeReg}
    S_e P_f = (-1)^{\delta_{e \subset f}} P_f S_e,
\end{align}
otherwise they commute. ($\delta_{e \subset f}=1$ if edge $e$ is part of the triangle, otherwise it is 0.)
 Furthermore, the fermion parity operators and hopping operators are not independent, since for each vertex $v$, they satisfy the relation [\onlinecite{Ellison19}]:
\begin{align} \label{SePfidentity}
    \prod_{e:e_0=v}S_e\prod_{e:e_1=v}S_e \prod_{f:f_0,f_2=v} P_f = 1.
\end{align}
In equation \eqref{SePfidentity}, the first product is over all edges $e$ for which the $e_0$ vertex is $v$, the second product is over all edges $e$ for which  $v$ is the $e_1$ vertex, and the last product is over all triangles for which $v$ is either a $0$-vertex or a $2$-vertex. Note that the sign of $(-1)^{\eta_e}$ in the definition of the hoping operator [Eq.~\eqref{eq:Sdefinition}] is crucial to obtain $1$ on the right hand side of Eq.~\eqref{SePfidentity}. This completes our description of the algebra $\mathcal{E}$ on the fermionic side of the duality\footnote{For each non-contractible cycle of the manifold there is an additional relation between the parity operators and hopping operators. These relations correspond to a certain product of $S_e$ and $P_f$ along the cycle. With an appropriate choice of $\eta$ we are in the $+1$ sector of these relations. See \unexpanded{[\onlinecite{Djordje18}]} for more detail.}, and we move on to describe the bosonic side of the duality. 

On the bosonic side of the duality, as shown in Fig.~\ref{fig:2Dlattice}, we have a spin-1/2 degree of freedom at each edge $e$.  The operator algebra at $e$ is generated by the Pauli operators $X_e$ and $Z_e$, and the full bosonic algebra is generated by the set containing $X_e$ and $Z_e$ for all edges $e$. The bosonization duality maps to just a subalgebra of the full bosonic algebra, where the subalgebra is defined by a certain $\Z_2$ gauge constraint.  The explicit form for the gauge constraint will emerge naturally from the mapping of operators described below. 

The bosonization duality $\mathfrak{D}$, is a homomorphism 
from the algebra of fermion parity even operators $\mathcal{E}$ to a particular bosonic subalgebra.  
$\mathfrak{D}$ is defined by its action on the generators of $\mathcal{E}$, $P_f$ and $S_e$. It maps fermion parity $P_{f}$ to an operator that measures the $\Z_2$ flux at triangle $f$, namely:
\begin{align}\label{def:Wf}
    W_{f} \equiv  Z_{f_{01}}Z_{f_{12}}Z_{f_{02}}. 
\end{align}
Since $S_e$ and $P_f$ anticommute whenever $e$ borders $f$, a natural first guess for the image of $S_e$ under $\mathfrak{D}$ is the operator $X_e$.  $X_e$ creates a pair of $\Z_2$ fluxes and hence anti-commutes with the operator that measures flux on a neighboring triangle. However, mapping $S_e$ to $X_e$ does not preserve the commutation relations with the other hopping operators.  To remedy this, we dress $X_e$ with Pauli $Z$ operators:
\begin{align} \label{def:Ue}
    U_e \equiv X_e \prod_{ f \in \lbrace L_e, R_e \rbrace} Z_{f_{01}}^{\delta_{e,f_{12}}}.
\end{align}
In words, the expression in Eq.~\eqref{def:Ue} says that if $e$ is the $f_{12}$ edge of the triangle to the left, then we include a factor of $Z_{f_{01}}$ on the $f_{01}$ edge of that triangle. Likewise, if $e$ is the $f_{12}$ edge of the triangle to the right, then we include a factor of $Z_{f_{01}}$ on the $f_{01}$ edge of that triangle. For example, looking at Fig.~\ref{fig:2Dlattice}, we have $U_{\langle 3,4\rangle} = X_{\langle 3,4\rangle}$, $U_{\langle 2,4\rangle} = X_{\langle 2,4\rangle} Z_{\langle 3,2\rangle}$, $U_{\langle 5,7 \rangle} = X_{\langle 2,7\rangle} Z_{\langle 4,5\rangle}Z_{\langle 11,5 \rangle}$, etc. 

Lastly, we must check that the relation in Eq.~\eqref{id:PfSeReg} is preserved by the bosonization duality. For each vertex $v$, we find:
\begin{align} \label{2UWrel1}
 \prod_{e:e_0=v}U_e\prod_{e:e_1=v}U_e  \prod_{f:f_0,f_2=v} W_f=G_v,
\end{align}
where $G_v$ is equal to:
\begin{align} \label{Gp}
    G_v=\prod_{e \supset v }X_{e}  \prod_{f: f_0=v} W_{f}. 
\end{align}
{The first product in Eq.~\eqref{Gp} is over all edges $e$ connected to $v$.}
Thus, to preserve the relation \eqref{id:PfSeReg}, we need to impose the gauge constraint $G_v=1$ for all $v$. 

Denoting by $\mathcal{G}$ the bosonic subalgebra generated by the set of $W_{f}$ and $U_e$ with the gauge constraint $G_v=1$ for all $v$, we see that the 2D bosonization duality $\mathfrak{D}$ is a {{bijective}} map from $\mathcal{E}$ to $\mathcal{G}$ defined by:
\begin{align} \label{def:duality2D}
\mathfrak{D}(P_{f}) &= W_{f}, \nonumber\\
\mathfrak{D}(S_e) &= U_e.
\end{align}
The choice of spin-structure $\eta$ ensures that the gauge constraint on the bosonic side of the duality is $G_v=1$ at every vertex $v$. In section \ref{sec: bosonization of fPEPS}, we detail a prescription for choosing a suitable spin structure $\eta$. 
\subsection{TNO representation of the 2D duality }
\label{TNOrepresentation}

Having reviewed both $\Z_2$-graded tensor networks and the operator-level 2D bosonization duality, we can now describe one of our main results -- a realization of 2D bosonization at the level of quantum states.  To accomplish this, we represent the bosonization duality  $\mathfrak{D}$ in Eq.~\eqref{def:duality2D} using a TNO, $\M{D}$. We say that a TNO $\M{D}$ represents the duality $\mathfrak{D}$, if it satisfies: 
\begin{align} \label{dualitytno1}
\begin{tikzpicture}[Sbase]
 \draw[red,loosely dotted] (-0.25,0)--(0,0);
\draw[red,loosely dotted] (3,0)--(3.25,0);
\foreach \j in {0.5,1,...,2.5}
{\draw[red] (\j,0)--+(0,-1); 
}
\foreach \j in {0.25,0.75,...,2.75}
{\draw[black!99] (\j,0)--+(0,1);}
\draw[draw=black,fill=white] (0.95,-0.8) rectangle node{A} ++(1.1,0.4);
\draw[black,fill=white] (0,-0.15) rectangle (3,0.15); \node at (1.5,0) {$ \Dt$};
\end{tikzpicture} = & \begin{tikzpicture}[Sbase]
\node at (1.5,0) {$ \Dt$};\draw[red,loosely dotted] (-0.25,0)--(0,0);
\draw[red,loosely dotted] (3,0)--(3.25,0);
\foreach \j in {0.5,1,...,2.5}
{\draw[red] (\j,0)--+(0,-1); 
}
\foreach \j in {0.25,0.75,...,2.75}
{\draw[black!99] (\j,0) --+(0,1); 
}
\draw[draw=black,fill=white] (0.65,0.35) rectangle node{$\mathfrak{D}(A)$} ++(1.75,0.45);
\draw[black,fill=white] (0,-0.15) rectangle (3,0.15); \node at (1.5,0) {$ \Dt$};
\end{tikzpicture},
\end{align}
for all fermion parity even operators $A \in \mathcal{E}$. Algebraically, this is:
\begin{align} \label{MDdef}
    \M{D} \bdot  A = \mathfrak{D}(A) \bdot\M{D}.
\end{align}
In Eq.~\eqref{MDdef}, we have used the operation $\bdot$ defined in section \ref{subsec:contraction} for the contraction of $\Z_2$-graded tensors. For Eq.~\eqref{MDdef} to hold, it suffices to show that $\M{D}$ satisfies Eq.~\eqref{MDdef} for the generators of $\mathcal{E}$, since for any $A,B,C \in \mathcal{E}$ we have:
\begin{align}
    \M{D} \bdot  (AB+C)= & \M{D} \bdot  AB+ \M{D} \bdot  C\nonumber\\
    =& \mathfrak{D}(A) \bdot \M{D} \bdot B + \mathfrak{D}(C) \bdot \M{D}\nonumber \\
    =& \mathfrak{D}(A)\mathfrak{D}(B) \bdot \M{D} + \mathfrak{D}(C) \bdot \M{D} \nonumber\\
    =& \mathfrak{D}(AB+C) \bdot \M{D}
\end{align}
Hence, we need only find a $\M{D}$ that satisfies:
\begin{subequations}
\begin{align} 
  \M{D} \bdot  P_{f} =&   W_{f} \bdot \M{D} \label{dualityconstraint1} \\
  \M{D}    \bdot S_e=&   U_e \bdot \M{D}, \label{dualityconstraint2}
\end{align}
\end{subequations}
for all triangles $f$ and edges $e$. 
To this end, we propose the TNO ansatz for $\Dt$ shown in Fig.~\ref{fig:2DTNO}.

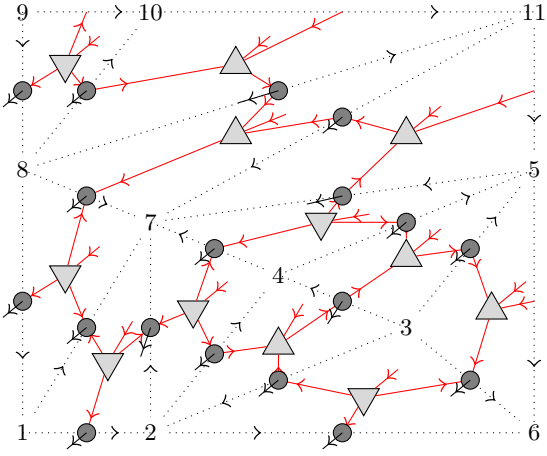
\begin{figure}
    \centering
     \begin{tikzpicture}[Sbase,xscale=1.7,yscale=1.4]
    \randt;
\Fposf{(1)}{(2)}{(7)}{(0,0.2,-0.5)};
\Fnegf{(3)}{(2)}{(4)}{(0,0.2,-0.5)};
\Fnegf{(3)}{(5)}{(6)}{(0,0,-0.7)};
\Fnegf{(3)}{(4)}{(5)}{(0,0,-0.7)};
\Fposf{(2)}{(4)}{(7)}{(0,0,-0.7)};
\Fposf{(3)}{(2)}{(6)}{(0,0,-0.7)};
\Fposf{(4)}{(5)}{(7)}{(0.3,0,-0.2)};
\Fnegf{(8)}{(11)}{(7)}{(0.2,0,-0.5)};
\Fposf{(1)}{(7)}{(8)}{(0,0,-0.7)};
\Fnegf{(8)}{(10)}{(11)}{(0,0,-0.7)};
\Fposf{(9)}{(8)}{(10)}{(0,0,-0.7)};
\Fnegf{(11)}{(5)}{(7)}{(0,0,-0.7)};
\Btwopf{(1)}{(2)}{(0,0,0.4)};\Btwopf{(1)}{(7)}{(0,0,0.4)};\Btwopf{(2)}{(7)}{(0,-0.2,0.2)};
\Btwopf{(2)}{(4)}{(0,0,0.4)};\Btwopf{(3)}{(2)}{(0,0,0.4)};\Btwopf{(6)}{(2)}{(0,0,0.4)};\Btwopf{(3)}{(5)}{(0,0,0.4)};\Btwopf{(3)}{(6)}{(0,0,0.4)};\Btwopf{(3)}{(4)}{(0,-0.1,0.3)};\Btwopf{(4)}{(7)}{(0,0,0.4)};\Btwopf{(4)}{(5)}{(0,0,0.4)};\Btwopf{(5)}{(7)}{(-0.2,0,0.2)};\Btwopf{(7)}{(8)}{(0,0,0.4)};\Btwopf{(7)}{(11)}{(0,0,0.4)};\Btwopf{(8)}{(11)}{(-0.2,0,0.3)};\Btwopf{(8)}{(10)}{(0,0,0.4)};\Btwopf{(8)}{(9)}{(0,0,0.4)};\Btwopf{(1)}{(8)}{(0,0,0.4)};
    \end{tikzpicture}
    \caption{TNO representation of the bosonization duality on a general triangulation of a 2D torus. The TNO is constructed from three types of tensors: $\Ft$ on positive triangles (downward pointing triangular nodes), $\bar \Ft$ on negative triangles (upward pointing triangular nodes), and $\Bt_\eta$ on edges (circular nodes). The TNO is a map from the fermionic legs (red legs, pointing towards the triangular nodes from behind) of $\Ft$ and $\bar \Ft$ tensors to the bosonic legs of $\Bt_\eta$ (black legs, pointing out of the page). }
    \label{fig:2DTNO}
\end{figure}

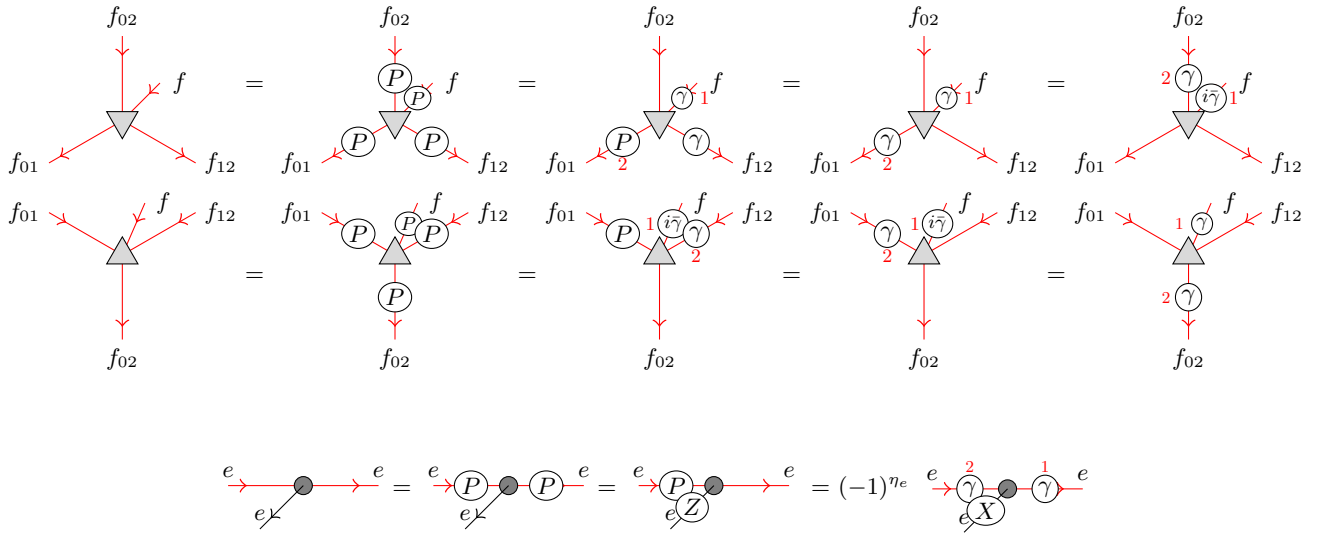
\begin{figure*}
    \centering
    \begin{minipage}{\linewidth}
    \begin{tikzpicture}[Sbase]
     \pic[ scale=1.3] at (0,0) {newf};
     \end{tikzpicture}=
     \begin{tikzpicture}[Sbase]
     \pic[ scale=1.3] at (0,0) {newf}; \node[OP,scale=0.8] at (d) {$P$};
\node[OP] at (a) {$P$};\node[OP] at (b) {$P$};\node[OP] at (c) {$P$};
     \end{tikzpicture}=\begin{tikzpicture}[Sbase]
     \pic[ scale=1.3] at (0,0) {newf};
\node[OP] at (a) {$\gamma $};\node[OP] at (c) {$P$};\node[OP,scale=0.8] at (d) {$\gamma$};
\path (d)++(0.3,0)node[red,scale=0.8]{1};
\path (c)++(0,-0.3)node[red,scale=0.8]{2};
     \end{tikzpicture}=\begin{tikzpicture}[Sbase]
     \pic[ scale=1.3] at (0,0) {newf};
\node[OP] at (c) {$\gamma $};\node[OP,scale=0.8] at (d) {$\gamma$};
\path (d)++(0.3,0)node[red,scale=0.8]{1};
\path (c)++(0,-0.3)node[red,scale=0.8]{2};
     \end{tikzpicture}=\begin{tikzpicture}[Sbase]
     \pic[ scale=1.3] at (0,0) {newf};
\node[OP] at (b) {$\gamma $};\node[OP,scale=0.8] at (d) {$i\gammabar$};
\path (d)++(0.3,0)node[red,scale=0.8]{1};
\path (b)++(-0.3,0)node[red,scale=0.8]{2};
     \end{tikzpicture} 
    \end{minipage}
    \vspace{1cm}
 \begin{minipage}{\linewidth}
 \begin{tikzpicture}[Sbase]
     \pic[ scale=1.3] at (0,0) {newfbar};
     \end{tikzpicture}=
     \begin{tikzpicture}[Sbase]
     \pic[ scale=1.3] at (0,0) {newfbar};\node[OP,scale=0.8] at (d) {$P$};
\node[OP] at (a) {$P$};\node[OP] at (b) {$P$};\node[OP] at (c) {$P$};
     \end{tikzpicture}=\begin{tikzpicture}[Sbase]
     \pic[ scale=1.3] at (0,0) {newfbar};\node[OP,scale=0.8] at (d) {$i\gammabar$};
\node[OP] at (a) {$\gamma $};\node[OP] at (c) {$P$};
\path (d)++(-0.3,0)node[red,scale=0.8]{1};
\path (a)++(0,-0.3)node[red,scale=0.8]{2};
     \end{tikzpicture}=\begin{tikzpicture}[Sbase]
     \pic[ scale=1.3] at (0,0) {newfbar};
\node[OP] at (c) {$\gamma $};\node[OP,scale=0.8] at (d) {$i\gammabar$};
\path (d)++(-0.3,0)node[red,scale=0.8]{1};
\path (c)++(0,-0.3)node[red,scale=0.8]{2};
     \end{tikzpicture}=\begin{tikzpicture}[Sbase]
     \pic[ scale=1.3] at (0,0) {newfbar};
\node[OP] at (b) {$\gamma $};\node[OP,scale=0.8] at (d) {$\gamma$};
\path (d)++(-0.3,0)node[red,scale=0.7]{1};
\path (b)++(-0.3,0)node[red,scale=0.7]{2};
     \end{tikzpicture}
    \end{minipage}
    
\begin{minipage}{0.8\linewidth}
\begin{tikzpicture}[Sbase,scale=1]
  \Bnorthp;
 \end{tikzpicture}=
 \begin{tikzpicture}[Sbase,scale=1]
    \Bnorthp; \node[OP] at (b) {$P $};\node[OP] at (c) {$P $};
 \end{tikzpicture}= \begin{tikzpicture}[Sbase,scale=1]
  \Bnorthp;\node[OP] at (b) {$P $};\node[OP] at (a) {$ Z$};
 \end{tikzpicture} = $(-1)^{\eta_e}$ \begin{tikzpicture}[Sbase,scale=1]
    \Bnorthp;\node[OP] at (b) {$\gamma $};\node[OP] at (c) {$\gamma $};\node[OP] at (a) {$X$};
    \path (b)++(0,0.3)node[red,scale=0.7] {$2$}; 
    \path (c)++(0,0.3)node[red,scale=0.7] {$1$}; 
 \end{tikzpicture}
\end{minipage}
 \caption{Graphical representation of the symmetries in Eqs.~\eqref{F2sym}, \eqref{F2sym2}, and \eqref{B2sym} for tensors $\Ft[f]$ (downward pointing triangular nodes), $\bar\Ft[f]$ (upward pointing triangular nodes), and  $\Bt_\eta[e]$ (circular nodes). }
    \label{fig:2Dsym}
\end{figure*}

The ansatz depicted in Fig.~\ref{fig:2DTNO} is created by contracting together three kinds of tensors: tensors  $\Ft[f]$ on positivly oriented triangles, tensors $\bar \Ft[f]$ on negativly oriented triangles, and tensors $\Bt_\eta[e]$ on edges.
In explicit component form, the tensors $\Ft[f]$ and $\bar\Ft[f]$ are:
\begin{align}
    \Ft[f]\equiv &\sum_{j,a,b,c} F^j_{a,b,c}|c)_{f_{01}}|a)_{f_{12}}(b|_{f_{02}} (j|_f\nonumber\\
    \bar \Ft[f] \equiv &  \sum_{j,a,b,c} \bar{F}^j_{a,b,c}|b)_{f_{02}}(a|_{f_{12}}(c|_{f_{01}}(j|_f,
\end{align}
where all sums are over binary values. Diagrammatically, we represent $\Ft[f]$ and $\bar\Ft[f]$ respectively as:
\begin{align}
     \begin{tikzpicture}[Sbase,scale=1.5]
\pic at (0,0) {newf};
\end{tikzpicture}, \, \begin{tikzpicture}[Sbase,scale=1.5]
\pic at (0,0) {newfbar};
\end{tikzpicture}.
\end{align}
The legs labeled by $f$ are the physical legs and extend into the page.  These legs contract with fermionic operators or a fermionic tensor network state when the TNO is applied. 

The tensor $\Bt_{\eta}[e]$ at each edge is obtained by making a spin-structure dependent modification to a tensor $\Bt[e]$. $\Bt[e]$ has the component form:
\begin{align}
    \Bt[e]=\sum_{j,a,b} B^j_{a,b}|a)_e|j\rangle_e (b|_e,
\end{align}
while the component form of $\Bt_{\eta}[e]$ is: 
 \begin{align}
     \Bt_{\eta}[e] =& Z^{\eta_e}_e\bdot \Bt \nonumber\\
     =& \sum_{j,a,b} (-1)^{j\eta_e}B^j_{a,b}|a)_e|j\rangle_e (b|_e,
 \end{align}

which is pictorially represented as: 
\begin{align}
    \begin{tikzpicture}[Sbase]
    \Bnorthp;
    \end{tikzpicture}  \equiv  \begin{tikzpicture}[Sbase]
    \Bnorth; \node[OP] at (1,0,1.5) {$Z^{\eta_e}$};
    \end{tikzpicture}.
\end{align}
The darker node on the left hand side represents $\Bt_{\eta}$, and the lighter node on the right hand side represents $\Bt$. The physical legs are bosonic Hilbert spaces depicted in black and pointing out of the page.


Now, we view the constraints in \eqref{dualityconstraint1} and  \eqref{dualityconstraint2} as symmetries of the tensor $\M{D}$. These symmetries can be further reduced to symmetries of the local tensors of $\Dt$, which then fixes the values of the local tensors. Indeed, we  now show that $\Dt$ satisfies Eqs.~\eqref{dualityconstraint1} and \eqref{dualityconstraint2}, if the local tensors $\Ft$, $\bar \Ft $ and $\Bt$ satisfy the symmetries depicted in Fig.~\ref{fig:2Dsym}. Algebraically, we write the symmetries for $\Ft[f]$ and $\bar\Ft[f]$ as: 
\begin{align}\label{F2sym}
\Ft &= P_{f_{01}}\bdot P_{f_{12}}\bdot  \Ft \bdot P_{f_{02}}\bdot  P_f = P_{f_{01}} \bdot \gamma_{f_{12}}\bdot \Ft\bdot  \gamma_f \nonumber \\
&=\gamma_{f_{01}}\bdot \Ft\bdot  \gamma_f = \Ft\bdot \gamma_{f_{02}} \bdot  i\gammabar_f  \\ \label{F2sym2}
\bar\Ft &= P_{f_{02}} \bdot \bar\Ft \bdot P_{f_{01}}\bdot P_{f_{12}}\bdot  P_f= \bar\Ft \bdot \gamma_{f_{12}} \bdot P_{f_{01}}\bdot  i\gammabar_f \nonumber \\
&=\bar\Ft\bdot  \gamma_{f_{01}}\bdot i\gammabar_f=\gamma_{f_{02}}\bdot \bar\Ft \bdot \gamma_f
\end{align}

and the symmetries for $\Bt[e]$ can be written as: 
\begin{align}\label{B2sym}
    \Bt_{\eta}=P_e\bdot \Bt_{\eta}\bdot  P_e = P_e\bdot  Z_e\bdot \Bt_{\eta} = (-1)^{\eta_e}\gamma_e\bdot  X_e \bdot \Bt_{\eta} \bdot \gamma_e,
\end{align}
where the contractions in Eqs.~\eqref{F2sym}, \eqref{F2sym2}, and \eqref{B2sym} should be read in conjunction with the diagrams in Fig.~\ref{fig:2Dsym}. We note that the first symmetries of $\Ft$, $\bar\Ft$, and $\Bt$ imply that each of these tensors is fermion parity even.

To see how the symmetries of the local tensors ensure that $\Dt$ satisfies the relations in Eqs.~\eqref{dualityconstraint1} and  
\eqref{dualityconstraint2} we use the graphical representations of the symmetries shown in Fig.~\ref{fig:2Dsym}. For example, consider the action of the TNO on the parity operator at face $f$: 
\begin{align*}
\M{D} \bdot P_{f} &=    \begin{tikzpicture}[Sbase]
 \DP; \node[OP]at (a){$P$};
\end{tikzpicture}\nonumber \\* & =  \begin{tikzpicture}[Sbase]
 \DP; \path (-30:{sqrt(3)})+(-30:0.4)node[OP]{$P$}+(90:0.4)node[OP]{$P$}+(-150:0.4)node[OP]{$P$};
\end{tikzpicture}\nonumber \\* & =\begin{tikzpicture}[Sbase]
 \DP; \node[OP]at (b){$Z$};\node[OP]at (c){$Z$};\node[OP]at (d){$Z$};
\end{tikzpicture} \nonumber \\* & = W_f \bdot \M{D}.
\end{align*}
Here, we have applied the symmetries of $\Ft[f]$ and $\Bt$ in succession to show that $\Dt$ satisfies Eq.~\eqref{dualityconstraint1}.
Similarly, for hopping operator we have:
\begin{align}
\M{D} \bdot S_{e} & = (-1)^{\eta_e}
    \begin{tikzpicture}[Sbase,scale=0.9]
 \path (30:{sqrt(3)})++(0,0.25,-0.5)node[OP]{$i\gammabar$}; 
\path (30:{2*sqrt(3)})++(0,0,-1)node[OP]{$\gamma$};\DSsw;
\path (30:{sqrt(3)})++(0,0.25,-0.5)++(0,0.45)node[red,scale=0.8]{2};
\path (30:{2*sqrt(3)})++(0,0,-1)++(0.3,0)node[red,scale=0.8]{1};
\end{tikzpicture}   \nonumber \\ &
= (-1)^{\eta_e}\begin{tikzpicture}[Sbase,scale=0.9]
\DSsw ;
\path (30:{sqrt(3)}) --node[OP,pos=1/4]{$\gamma$}node[OP,pos=3/4]{$\gamma$}
 (30:{2*sqrt(3)});
 \path (30:{sqrt(3)})++(150:{sqrt(3)/4})node[OP]{$P$};
 \path (30:{sqrt(3)})++(0,0.4) --node[red,scale=0.8,pos=1/4]{2}node[red,scale=0.8,pos=3/4]{1}
 ++(30:{sqrt(3)});
\end{tikzpicture}  \nonumber \\ &
=\begin{tikzpicture}[Sbase,scale=0.9]
\DSsw;
\path (60:3/2)++(0,0,1)node[OP]{$Z$};
\path (30:{3/2*sqrt(3)})++(0,-.25,0.5)node[OP]{$X$}; 
\end{tikzpicture}  \nonumber \\* &
=U_e  \bdot \M{D}.
\end{align}
Thus, $\Dt$ satisfies \eqref{dualityconstraint2} as well. This implies that $\Dt$ formed from $\Ft$, $\bar\Ft$, and $\Bt_\eta$ is indeed a representation of the operator-level duality of Ref. [\onlinecite{ChenYA17}].

The tensors $\Ft$, $\bar\Ft$, and $\Bt$ can be computed explicitly using their symmetries in Fig.~\ref{fig:2Dsym}. This is because the symmetries are independent, commute with each other, and square to identity. Hence, for $\Ft$ and $\bar\Ft$, they form a $\Z_2^5$ symmetry group, and for $\Bt$ they form a $\Z_2^3$ symmetry group.  Since $\Ft$ and $\bar\Ft$ belong to $2^5$ dimensional spaces, and $\Bt$ belongs to a $2^3$ dimensional space, their symmetries fix their values uniquely up to a normalization factor. It can be shown that the following tensors satisfy their respective symmetries: 
\begin{align} \label{FFbarBexp}
  \Ft[f]   \propto & \sum_{a,b,c}|c)_{f_{01}}|a)_{f_{12}}(b|_{f_{02}} (a+b+c|_f \nonumber\\
    \bar   \Ft[f]  \propto & \sum_{a,b,c} |b)_{f_{02}}(c|_{f_{01}}(a|_{f_{12}} (a+b+c|_f \nonumber\\
   \Bt[e] \propto & \sum_{a} |a)_{e}|a\rangle_{e}(a|_{e}.
\end{align}
Remember that all indices take values in $\{0,1\}$, and $a+b+c \equiv a+b+c \text{ mod }2$. 

\subsection{Bosonization of quantum states}

We are now able to define the bosonization of quantum states, wherein a fermionic state is bosonized by simply applying the bosonization TNO. Before providing a simple example, we comment on constraints of the state-level duality that arise from the symmetries of $\Dt$. In particular, we show that fermion parity odd states belong to the kernel of $\Dt$ and that $\Dt$ maps to bosonic states satisfying the constraint $G_v=1$ for all $v$. Hence, fermion parity even states are mapped to bosonic states in a certain $\Z_2$ gauge theory.

To show that fermion parity odd states are in the kernel of the bosonization TNO, we use that $\prod_f W_f=1$ on a closed manifold. This leads to:
\begin{align}\label{eq:domainD1}
    \Dt=\prod_f W_f \bdot \Dt = \Dt \bdot \prod_f P_f.
\end{align}
When $\Dt$ is applied to a fermionic state $|\psi_f)$, Eq.~\eqref{eq:domainD1} implies:
\begin{align}
    \Dt|\psi_f)=\Dt \bdot \prod_f P_f |\psi_f).
\end{align}
Thus, if $|\psi_f)$ is fermion parity odd, we have: ${\prod_f P_f |\psi_f)=-|\psi_f)}$, and it must be that $\Dt|\psi_f)=0$.

The constraints on the image of $\Dt$ can be determined using the relation in Eq.~\eqref{SePfidentity}. We see that:
\begin{align}
    \Dt =& \Dt \bdot  \left(\prod_{e:e_0=v}S_e\prod_{e:e_1=v}S_e \prod_{f:f_0,f_2=v} P_f\right) \nonumber\\
    =&  \left(\prod_{e:e_0=v}U_e\prod_{e:e_1=v}U_e \prod_{f:f_0,f_2=v} W_f\right) \bdot \Dt \nonumber\\
    =& G_v \bdot \Dt .\nonumber 
\end{align}
Hence, for any bosonic state $\langle \psi_b|$:
\begin{align}
     \langle \psi_b| \Dt = \langle  \psi_b| G_v \bdot \Dt, 
\end{align}
which implies that $\Dt$ projects to the $G_v=1$ subspace for each vertex $v$.

Now, we give a first example of the state-level duality and use the symmetries of $\Dt$ to show that the bosonization of an atomic insulator state yields a ground state of the toric code (a deconfined $\Z_2$ gauge theory). The atomic insulator state $|\psi_{AI})$ is the unique ground state of the Hamiltonian: $H_{AI}=-\sum_f P_f$. $H_{AI}$ is certainly unfrustrated, so $|\psi_{AI})$ satisfies $P_f|\psi_{AI})=|\psi_{AI})$ for all $f$. Applying $\Dt$ to $|\psi_{AI})$, we find:
\begin{align}
    \Dt |\psi_{AI})= \Dt \bdot P_f |\psi_{AI}) = W_f \bdot \Dt |\psi_{AI}),\, \forall f.
\end{align}
Therefore, the bosonized state $\Dt |\psi_{AI})$ is in the $+1$ eigenspace of $W_f$ for all $f$. Given the constraint on the image of $\Dt$, the bosonized state is also in the $+1$ eigenspace of $G_v$ for all $v$. Hence, $\Dt |\psi_{AI})$ is a ground state of the unfrustrated Hamiltonian ${H=-\sum_v G_v-\sum_f W_f}$.  Recalling the definition of $G_v$ defined in Eq.~\eqref{Gp}:
\begin{align}
    G_v=\prod_{e \supset v }X_{e}  \prod_{f: f_0=v} W_{f},
\end{align}
we see that the $G_v$ terms in $H$ can be replaced by $\prod_{e \supset v }X_{e}$ without changing the ground states. (${G_v=\prod_{e \supset v }X_{e}}$ in the subspace where $W_f=1$.) Thus, $\Dt |\psi_{AI})$ is a ground state of the toric code Hamiltonian ${H_{TC}=-\sum_v \prod_{e \supset v }X_{e}-\sum_f W_f}$. 

To gain intuition for the mapping, we consider acting with $\Dt$ on a state with non-trivial fermion occupancy. In particular, we apply a hopping operator $S_e$ at edge $e$ to the atomic insulator state $|\psi_{AI})$ to obtain a state with fermions at the two faces neighboring $e$. The image of $S_e|\psi_{AI})$ under $\Dt$ is:
\begin{align}
    \Dt \bdot S_e |\psi_{AI}) = U_e  \bdot \Dt |\psi_{AI}) =  U_e |\psi_{TC}\rangle.
\end{align}
$U_e$ (defined in Eq.\eqref{def:Ue}) creates a $\Z_2$ flux ($-1$ eigenvalue of $W_f$) at each face bordering the edge $e$ and moves $\Z_2$ charges ($-1$ eigenvalue of $\prod_{e \supset v }X_{e}$) to the $0$-vertices of $L_e$ and $R_e$. A $\Z_2$ flux bound to a $\Z_2$ charge has fermionic statistics -- it is an \textit{emergent} fermion.  Therefore, physical fermions are mapped to emergent fermions in the $\Z_2$ gauge theory. The gauge constraint $G_v=1,\, \forall v$ removes ambiguity in this mapping, since it enforces that charges are bound to fluxes, with the charges located at the $0$-vertex of the corresponding triangle. 

Any fermion parity even state can be created from $|\psi_{AI})$ by applying operators in $\mathcal{E}$. Hence, one strategy for mapping an arbitrary even fermion parity state $|\psi_f)$ is to identify an even operator $\mathcal{O}\big(\{S_e\}_e,\{P_f\}_f\big)$, written here explicitly in terms of the generators of $\mathcal{E}$, such that:
\begin{align}
   {|\psi_f)=\mathcal{O}{\big(}\{S_e\}_e,\{P_f\}_f{\big)}|\psi_{AI})}. 
\end{align}
Then, the duality maps:
\begin{align}
    |\psi_f) \rightarrow \mathcal{O}\big(\{U_e\}_e,\{W_f\}_f\big)|\psi_{TC}).
\end{align}
In general, it may be challenging to find an operator, expressed in terms of the generators of $\mathcal{E}$, that creates $|\psi_f)$ from $|\psi_{AI})$. Moreover, the analysis of bosonizing states, thus far, has only required the operator-level bosonization duality.  In the next section, we illustrate the true potential of the bosonization TNO. Given a fermion parity even state $|\psi_f)$ constructed from the contraction of local tensors, we show that $|\psi_f)$ can be bosonized by using $\Dt$ to modify each of the local tensors. The resulting state can then be written as a bosonic tensor network state. 

\section{Bosonization of fPEPS}
\label{sec: bosonization of fPEPS}
\begin{figure}
    \centering
     \begin{tikzpicture}[Sbase,xscale=1.7,yscale=1.4]
    \randt;
\Tposf{(1)}{(2)}{(7)}{(0,0,0.6)};
\Tnegf{(3)}{(2)}{(4)}{(0,0,0.6)};
\Tnegf{(3)}{(5)}{(6)}{(0,0,0.6)};
\Tnegf{(3)}{(4)}{(5)}{(0,-.2,0.5)};
\Tposf{(2)}{(4)}{(7)}{(0,-.1,0.6)};
\Tposf{(3)}{(2)}{(6)}{(0,0,0.7)};
\Tposf{(4)}{(5)}{(7)}{(0,-.2,0.5)};
\Tnegf{(8)}{(11)}{(7)}{(0,-0.2,0.5)};
\Tposf{(1)}{(7)}{(8)}{(0,-0.2,0.5)};
\Tnegf{(8)}{(10)}{(11)}{(0,0,0.7)};
\Tposf{(9)}{(8)}{(10)}{(0,-0.2,0.5)};
\Tnegf{(11)}{(5)}{(7)}{(-.1,0,0.5)};
    \end{tikzpicture}
    \caption{An example fPEPS on an arbitrarily triangulated torus. The square nodes represent the tensors $\M{T}$ and $\M{\bar{T}}$ in Eqs.~\eqref{Texp} and \eqref{Tpics}. The legs affixed to the center of the square nodes and pointing out of the page are the physical legs of the fPEPS. All other legs are contracted with a leg of a neighboring tensor.}
    \label{fig:fPEPS}
\end{figure}
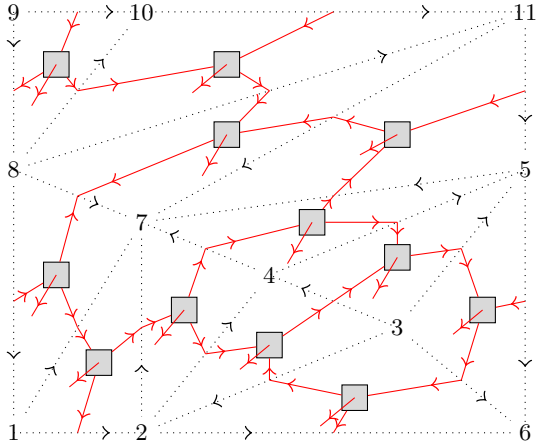

In the previous section, we introduced the bosonization of a fermionic state $|\psi_f)$ as the action of the bosonization TNO $\Dt$ on $|\psi_f)$. As we now show, the TNO is especially useful when the fermionic $|\psi_f)$ is represented as a fermionic tensor network state.  While the action of $\Dt$ on the fermionic tensor network state $|\psi_f)$ indeed yields a bosonic state $\Dt|\psi_f)=|\psi_b\rangle$, $|\psi_b\rangle$ is not manifestly a bosonic tensor network state. This is due, in part, to the $\Z_2$-graded virtual legs of the bosonization TNO.  However, if $|\psi_f)$ is in the form of a fermionic projected entangled pair state (fPEPS) (see Fig.~\ref{fig:fPEPS} for an example), we can explicitly rewrite $|\psi_b\rangle$ as a bosonic projected entangled pair state (bPEPS).  In this section, we give a detailed algorithm for converting bosonized fPEPS into bPEPS, which is well defined on arbitrary triangulations of orientable 2D manifolds without boundary. 

\subsection{Contracting the bosonization TNO with an fPEPS}

An fPEPS on a triangulated manifold is built from $\Z_2$-graded tensors $\M{T}[f']$ on positively oriented triangles and $\M{\bar T}[f']$ on negatively oriented triangles.

Assuming that the tensors are fermion parity even, they can be written in component form as:
\begin{align}\label{Texp}
\M{T}[f'] \equiv \sum_{a,b,c} T_{abc}^{(f')}|c)_{f'_{01}}|a)_{f'_{12}}(b|_{f'_{02}} (a+b+c|_{f'} 
 \nonumber\\
\M{\bar{T}}[f'] \equiv \sum_{a,b,c} \bar{T}^{(f')}_{abc}|b)_{f'_{02}}(a|_{f'_{12}}(c|_{f'_{01}}(a+b+c|_{f'},
\end{align}

where for generality, the tensor components are position dependent. $\M{T}[f']$ and $\M{\bar T}[f']$ can then be represented, respectively, as follows:
\begin{align} \label{Tpics}
 \begin{tikzpicture}[Sbase,scale=1]
 \pic[scale=1] at (0,0) {newt};
\end{tikzpicture}, \, \begin{tikzpicture}[Sbase,scale=1]
\pic[scale=1] at (0,0) {newtbar};
\end{tikzpicture}.
\end{align}
Fig.~\ref{fig:fPEPS} shows an fPEPS formed from contracting $\M{T}[f']$ and $\M{\bar T}[f']$ on an arbitrary triangulation of a torus.

In general, one can insert matrix product operators (MPO) before closing an fPEPS on a closed manifold. Though we believe our construction can be extended to such cases, in the interest of brevity and clarity, we restrict the discussion to fPEPS without any MPO insertions. 

To apply $\Dt$ to an fPEPS $|\psi_f)$, we contract the physical indices of $\Ft$ tensors with those of the $\M{T}$ tensors, and likewise, we contract $\bar \Ft$ tensors with $\M{\bar T}$ tensors. Thus, the first step in bosonizing an fPEPS is to calculate the tensors $\M{M_f}=\Ft\bdot \M{T}$ for positively oriented triangles and the tensors $\M{\bar{M_f}}= \M{\bar{F}\bdot \bar{T}} $ for negatively oriented triangles. Graphically, $\M{M_f}$ and  $ \M{\bar{M}_f}$ can be drawn, respectively, as:
 \begin{align}\label{cal:FT}
 \begin{tikzpicture}[Sbase,scale=1]
\newft;
\end{tikzpicture} ,
\begin{tikzpicture}[Sbase,scale=1]
\newftbar;
\end{tikzpicture} .
\end{align}
The bosonized state $|\psi_b\rangle$ is a tensor network state (in fact an fPEPS) generated by tensors $\M{M_f}$,  $\M{\bar{M}_f}$, as well as $\Bt_\eta[e]$ on edges. 
Since there are two layers of virtual legs to be contracted,  we refer to them as the ``state layer''  and the ``TNO layer''. Note that $\M{M_f}$ and $\M{\bar{M}_f}$ tensors have virtual legs on both layers, but $\Bt_\eta[e]$ is only on the TNO layer. 

While $|\psi_b\rangle$ is a tensor network state and an fPEPS, it is not generically a bPEPS, as there are fermionic virtual indices remaining. The challenge is then to re-express $|\psi_b\rangle$ as a bPEPS, or, in a sense, to convert the fermionic virtual legs to bosonic virtual legs.  We accomplish this by systematically accounting for the signs accrued in contracting the fermionic virtual legs -- the so-called \textit{Koszul signs}. To make our strategy clear, we first discuss {Koszul signs} and introduce the idea of a \textit{removable grading}. These concepts play a key role in the rest of this section, so we describe them in generality before returning to the problem of converting the fermionic virtual legs of $|\psi_b\rangle$ to bosonic virtual legs.

\subsection{Koszul signs and removable grading}
\label{subsub:koszul signs}

The bosonized fPEPS encodes a bosonic quantum state $|\psi_b\rangle=\sum_{\{\phi\}} C_\phi |\phi\rangle$, where the collection of $|\phi\rangle$ form a complete set of product states. The coefficients $C_\phi$ can be recovered from the bosonized fPEPS by fixing the physical indices according to $|\phi\rangle$ and summing over the virtual indices. Given the $\Z_2$-grading of the virtual legs, there are signs picked up upon re-ordering and contracting the $\Z_2$-graded vectors, which can contribute to the coefficient $C_\phi$.  A natural question is whether the grading of a virtual index is essential to the tensor network, i.e., if the grading of a particular virtual index is removed, does the value of the bosonized fPEPS change?

For illustration, consider the two simple graded tensors ${\M{A}=\sum_{a,a'}A_{aa'}(a'|_s  (a|_t}$ and ${\M{B} = \sum_{b,b'} B_{bb'}|b')_s  |b)_t}$. (We can think of $s$ and $t$ as indices corresponding to the state and TNO layers, respectively.) We want to calculate the tensor network (which is a scalar in this case) $\M{A}\bdot\M{B}$. Let us describe the contraction of these tensors as a two step process. In the first step, we contract the basis tensors:
\begin{align}
  (a'|_s  (a|_t\bdot |b')_s  |b)_t = & (-1)^{|b'||a|}\delta_{ab}\delta_{a'b'},
\end{align}
and in the second step, we calculate the components: 
\begin{align}
     \sum_{a,a,b,b'}(-1)^{|b'||a|}\delta_{ab}\delta_{a'b'}A_{aa'} B_{bb'}  =  \sum_{a,a'}(-1)^{|a'||a|}A_{aa'}  B_{aa'}.
\end{align} 
Notice that we produce an additional sign of $(-1)^{|b'||a|}$ in the basis contraction step due to the graded nature of indices. This is the key difference between virtual indices of fermionic and bosonic tensor networks -- bosonic indices do not produce any additional signs in basis contraction. We refer to these additional signs of basis contractions as Koszul signs. The point is that the grading of a virtual index contributes to the fPEPS only through possible Koszul signs. Therefore, we can remove the grading of a virtual index as long as we properly account for the Koszul signs. 

Sometimes this can be done simply by {picking a specific internal ordering for the fermionic tensors and interpreting their components, with respect to this ordering, as components of purely bosonic tensors.  We say that the grading of the virtual indices in the original fermionic tensor network can be removed, if the contraction of this new bosonic tensor network is the same as that of the original fermionic tensor network.  When we have some a priori internal ordering for the fermionic tensors in mind already -- one that does produce Koszul signs -- then we can refer to this process as \textit{changing the internal ordering} to eliminate the Koszul signs.}  For example, consider changing the internal ordering of $\M{A}$ to ${\M{A}=\sum_{a,a'}A_{aa'}(-1)^{|a'||a|} (a|_t(a'|_s }$.  Now, there is no Koszul sign in the basis contraction: ${ (a|_t(a'|_s \bdot|b')_s |b)_t  =\delta_{a,b}\delta_{a',b'}}$. Removing the grading from $\M{A}$ and $\M{B}$ yields:
\begin{align}
    \M{A_b}&=\sum_{a,a'}A_{aa'}(-1)^{|a'||a|} \langle a|_t\langle a'|_s \nonumber \\
    \M{B_b}&= \sum_{b,b'} B_{bb'}|b'\rangle_s  |b\rangle_t.
\end{align}
$\M{A_b}$ and $\M{B_b}$ are purely bosonic tensors, and they produce the same tensor network: $\M{A_b}\bdot\M{B_b}=\M{A}\bdot\M{B}$. Note that the grading is removable for only particular choices of the internal ordering.

In other cases, the Koszul signs can be accounted for with a removal of the grading, followed by an insertion of  additional operators into the tensor network. To see an example of this, consider the two even tensors $\M{A} = \sum_{a}A_{a}|a)_p(a|_q$ and $\M{B} = \sum_{b}B_{b}|b)_q(b|_p$. We then aim to compute the tensor network (a scalar) $\text{tr}\left[\M{A}\bdot \M{B}\right]$, where the $\bdot$ denotes the contraction of the $q$ leg and the trace over the $p$ index is to emphasize that we are contracting the first index with the last index to close the loop. 
Contracting the basis tensors yields:
\begin{align}
  \text{tr}\left[|a)_p(a|_q \bdot |b)_q(b|_p \right]  = \delta_{ab} \text{tr}\left[|a)_p (b|_p\right] = (-1)^{|a|}\delta_{ab}.
\end{align}
The grading of the $q$ vector did not produce a sign, so it can be removed without affecting the tensor network. However, if we try to remove the grading of the $p$ vector as well, the sign $(-1)^{|a|}$ is no longer accounted for. One way to reproduce the sign $(-1)^{|a|}$ is to insert a $Z=\sum_c(-1)^{|c|}|c\rangle \langle c|$ operator on leg $p$ after removing the grading. That is, grading removal gives bosonic tensors $\M{A_b}=\sum_a A_a |a\rangle_p \langle a|_q$ and $\M{B_b}=\sum_b B_b |b \rangle_q \langle b|_p$, which satisfy:
\begin{align}
    \text{tr}\left[\M{A_b}\bdot \M{B_b} \bdot  Z_p\right]= & \sum_{a,b,c}(-1)^{|c|}A_aB_b\text{tr}\left[|a\rangle_p\langle a|_q \bdot |b\rangle_q \langle b|_p \bdot |c\rangle_p \langle c|_p \right]\nonumber\\
    =&\text{tr}\left[\M{A}\bdot \M{B}\right]
\end{align}

When the grading of a virtual index can be accounted for by inserting an additional operator $O$ on un-graded indices, we will say that the grading is ``removable with $O$-insertion''.  

\subsection{Koszul signs in the bosonized fPEPS}
\label{sign in 2D}

We now return to the problem of re-writing the bosonized fPEPS as an explicit bPEPS. We will find that, for a particular internal ordering, the grading of the virtual legs in the bosonized fPEPS is removable with $(Z_t\otimes Z_s)^{\eta_e}$-insertion. Here, $Z_t$ is a Pauli $Z$ operator acting in the TNO layer, and $Z_s$ is a Pauli $Z$ operator acting in the state layer. In other words, the state represented by the bosonized fPEPS may be equivalently represented by the bPEPS obtained by removing the grading of the virtual legs (assuming a certain internal ordering) and applying $(Z_t\otimes Z_s)^{\eta_e}$ before contracting the tensors at each edge.  To show this, we will compute the Koszul signs explicitly. We will see that the Koszul signs have a nice geometric interpretation in terms of the branching structure of the triangulated manifold. 

\subsubsection{Simplifying the Koszul sign calculation}

{We begin by simplifying the problem. First, $\Bt_\eta=Z^{\eta_e}\bdot \Bt$ does not contribute to any Koszul signs, since $\Bt_{\eta}$ can always be contracted with one of its neighboring $\M{M_f}$ or $\M{\bar{M}_f}$ tensors without any change of internal ordering. This is possible due to the even parity of $\Bt_\eta$ and its simple, two-virtual-leg form $\Bt_\eta=\sum_{a} (-1)^{a\eta_e}|a)_{e}|a\rangle_{e}(a|_{e}$.}
Therefore, the Koszul signs accrued in contracting the tensors $\M{M_f}$, $\M{\bar{M}_f}$, and $\Bt_\eta$ are equivalent to the Koszul signs from directly contracting the $\M{M_f}$ and $\M{\bar{M}_f}$ tensors without $\Bt_\eta$. 

We continue to simplify the calculation of the Koszul signs by reducing $\M{M_f}$ and $\M{\bar{M}_f}$ from two layers of fermionic virtual legs as in Eq.~\eqref{cal:FT} to a single layer of fermionic virtual legs. The first step is to choose the following internal ordering for the tensors $\M{M_f}$ and $\M{\bar{M}_f}$, respectively:
\begin{align}\label{ftinorder}
    \begin{tikzpicture}[Sbase]
    \ftorder;
    \end{tikzpicture}\, \begin{tikzpicture}[Sbase]
    \ftbarorder;
    \end{tikzpicture}.
\end{align}
Notice that for outward pointing legs (ket vectors), the state layer index comes before the TNO layer index, while for inward pointing legs (bra vectors), the order is reversed.  
Letting $|a')_s$ and $|a)_t$ be the state and TNO layer vectors, respectively, then with the ordering in Eq.~\eqref{ftinorder}, we have $(a|_{t}(a'|_{s} \bdot |a')_{s}|a)_{t}=1$, and no Koszul sign is produced between $a'$ and $a$. Therefore, we can combine the legs and consider a composite index $(a',a)$ with tensors written in terms of $|a',a)_{st}$ and $(a',a|_{st}$. The Hilbert space of the composite leg corresponds to the Hilbert space of two spinless fermions. This is isomorphic to a single spinless fermion and a spin-1/2 under the isomorphism:
\begin{align}\label{spinless}
    |a',a) \leftrightarrow |a+a') |a\rangle.
\end{align}
Since the spin-1/2 degree of freedom does not affect the Koszul signs, we may disregard it for the present computation. 

In summary, we have reduced the calculation of the Koszul signs of the bosonized fPEPS to a calculation of the Koszul signs obtained in the contraction of single layer tensors with internal orderings inherited from Eq.~\eqref{ftinorder} and pictured below:
\begin{align}\label{KosT}
 \begin{tikzpicture}[Sbase,scale=0.8]
    \coordinate (C) at (-30:{sqrt(3)}) ;
    \draw[dotted,->-=0.75] (0,0)node[above]{0}--+(0:3)node[above,black]{2};
    \draw[dotted,->-=0.75] (0,0)--+(-60:3)node[below,black]{1};
    \draw[dotted,-<-=0.75] (0:3)--(-60:3);
    \draw[red,->-=0.75] (C)--(-60:3/2)node[left,black]{(i)};
     \draw[red,->-=0.75] (C)--+(-30:{sqrt(3)/2})node[right,black]{(ii)};
      \draw[red,-<-=0.75] (C)--(0:3/2)node[above,black]{(iii)};
    \node[Dtriangle] at (C) {$  $};
    \end{tikzpicture} , \, 
    \begin{tikzpicture}[Sbase,scale=0.8]
    \coordinate (C) at (30:{sqrt(3)}) ;
    \draw[dotted,->-=0.75] (0,0)node[below]{0}--+(0:3)node[below,black]{2};
    \draw[dotted,->-=0.75] (0,0)--+(60:3)node[above,black]{1};
    \draw[dotted,-<-=0.75] (0:3)--(60:3);
    \draw[red,-<-=0.75] (C)--(60:3/2)node[left,black]{(iii)};
     \draw[red,-<-=0.75] (C)--+(30:{sqrt(3)/2})node[right,black]{(ii)};
      \draw[red,->-=0.75] (C)--(0:3/2)node[below,black]{(i)};
    \node[Utriangle] at (C) {$ $};
    \end{tikzpicture}.
\end{align}
In Eq.~\eqref{KosT}, we have again used triangular nodes, but these tensors should not be confused with the four legged $\M{F}$ and $\M{\bar F}$ tensors. 
It should be noted that similar simplifications can be performed for the contraction (inner product) of any two fPEPS (built from fermion parity even local tensors). Consequently, the calculation of the Koszul signs below holds more generally than the application at hand -- turning a bosonized fPEPS into a bPEPS.

\subsubsection{Contraction of basis tensors}

As mentioned in section \ref{subsub:koszul signs}, the contraction of tensors can be performed in two steps: (i) the basis tensors are contracted, and (ii) the components are calculated. The Koszul signs arise only in the first step. Therefore, to calculate the Koszul signs, we focus on the contraction of basis tensors with the ordering in Eq.~\eqref{KosT}.  We denote the set of basis tensors at a positively oriented face $f$ as $\mathcal{Q}[f]$ and the set of basis tensors at a negatively oriented face $f$ as $\mathcal{\bar Q}[f]$. Explicitly, we have:
\begin{align}\label{basisset}
    \mathcal{Q}[f]= & \lbrace |a+b)_{f_{01}}|a)_{f_{12}}(b|_{f_{02}},\, a,b=0,1\rbrace \nonumber\\
    \mathcal{\bar Q}[f] =& \lbrace |b)_{f_{02}}(a|_{f_{12}}(a+b|_{f_{01}},\, a,b=0,1\rbrace.
\end{align} 
Note that the tensors in $\mathcal{Q}[f]$ and $\mathcal{\bar Q}[f]$ are fermion parity even by construction. $\M{M_f}$ and $\M{\bar{M}_f}$ are fermion parity even, so their component value for any fermion parity odd basis tensor is necessarily zero. Thus, we can disregard fermion parity odd basis tensors in computing the Koszul signs.

We now analyze the contraction of basis tensors in Eq.~\eqref{basisset}.  For each triangle, we have an independently chosen element of either $\mathcal{Q}[f]$ or $\mathcal{\bar Q}[f]$ (depending on the orientation of $f$). The resulting product of basis tensors evaluates to $0$, $-1$, or $1$. If an odd vector $|1)$ is paired with an even vector $|0)$ at any edge, then the product is $0$. (This is simply the statement that $(0|1)=(1|0)=0$.) Thus, the configurations of basis tensors that evaluate to a nonzero value must have odd legs paired at edges. Since the elements of $\mathcal{Q}[f]$ and $\mathcal{\bar Q}[f]$ have even fermion parity (an even number of odd legs), this implies that the odd legs form closed loops (on the dual lattice) for any configuration that gives a nonzero value. 


The computation of the Koszul signs then distills down to calculating the $\pm 1$ valued contraction of configurations with closed loops of $|1)$ states at edges. To formalize the problem, we define $g_e$ as the $\{0,1\}$ valued index at the edge $e$, and $\hatsigma( \lbrace g_e \rbrace)= \pm 1$ as the sign obtained by evaluating the tensor {{contractions}} corresponding to the configuration $\{ g_e \}$.


\subsubsection{Basis contraction and cohomology}

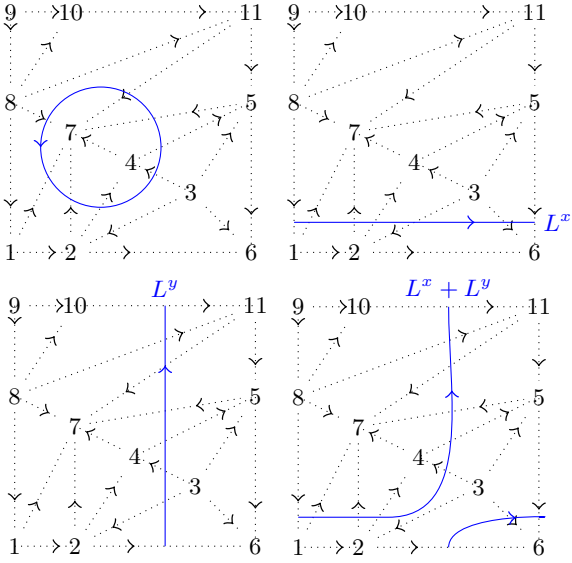
\begin{figure}
    \centering
   \begin{tikzpicture}[Sbase, scale=0.8]
    \randt;
\draw[blue,->-=0.5](1.5,1.75 ) circle (1);
 \end{tikzpicture}   \begin{tikzpicture}[Sbase, scale=0.8]
    \randt;
     \draw[blue,->-=0.75,thin] (0,0.5)--+(4,0)node[right]{$L^x$};

 \end{tikzpicture}
 
    \begin{tikzpicture}[Sbase,scale=0.8]
    \randt;
     \draw[blue,->-=0.75,thin] (2.5,0)--+(0,4)node[above]{$L^y$};
 \end{tikzpicture}
    \begin{tikzpicture}[Sbase, scale=0.8]
    \randt;
  \draw[blue,->-=0.75,thin] (0,0.5)--++(1.5,0)  to[out=0,in=-90] (2.5,4)node[above]{$L^x+L^y$};
    \draw[blue,->-=0.75,thin] (2.5,0)to[out=90,in=0](4,.5);
 \end{tikzpicture}
    \caption{Examples of $1$-cochains. Edges intersected by the blue line have coefficient $g_e=1$, while all other edges have $g_e=0$. The top left picture is an example of a contractible $1$-cocycle. The other three pictures are representative $1$-cocycles of the three non-trivial classes.}
    \label{fig:homoexample}
\end{figure}

To make our arguments precise, we find it convenient to describe configurations of odd edges using the language of cohomology. To this end, we define a $0$\textit{-cochain} as a sum $\sum_v g_v \mathbf{v}$, where $\mathbf{v}$ is a $\Z_2$-valued function of vertices such that $\mathbf{v}$ evaluates to $1$ on the vertex $v$ and $0$ otherwise, and $g_v$ are coefficients in $\Z_2$.  Similarly, \textit{$1$-cochains} and \textit{$2$-cochains} may be defined as sums $\sum_{e} g_e \mathbf{e}$ and $\sum_f g_f \mathbf{f}$, respectively.  A configuration of odd edges $\{g_e\}$ then naturally corresponds to the $1$-cochain $\sum_{e} g_e \mathbf{e}$. Furthermore, $j$-cochains can be added by combining the coefficients component wise, i.e., ${\sum_{e} g_e \mathbf{e}+\sum_{e} g'_e \mathbf{e}=\sum_{e} (g_e+g'_e) \mathbf{e}}$. 

The \textit{coboundary operator} $\delta$ from $j$-cochains to $j+1$-cochains is defined by:
\begin{align}
    \delta \mathbf{v} = & \sum_{e \supset v} \mathbf{e}, \quad \quad \delta \mathbf{e} = \sum_{f \supset e} \mathbf{f},
\end{align}
where the sum on the left is over all edges sharing the vertex $v$ and the sum on the right is over the two faces bordering the edge $e$.
For example, in Fig.~\ref{fig:homoexample}: 
 \begin{align}
     \delta \boldsymbol{\langle} \mathbf{4}\boldsymbol{\rangle} = & \boldsymbol{\langle} \mathbf{2,4} \boldsymbol{\rangle} +\boldsymbol{\langle} \mathbf{3,4} \boldsymbol{\rangle}+ \boldsymbol{\langle} \mathbf{4,5} \boldsymbol{\rangle} +\boldsymbol{\langle} \mathbf{4,7} \boldsymbol{\rangle} \nonumber\\
     \delta \boldsymbol{\langle} \mathbf{4,7}\boldsymbol{\rangle}= & \boldsymbol{\langle} \mathbf{2,4,7}\boldsymbol{\rangle}+\boldsymbol{\langle} \mathbf{4,5,7}\boldsymbol{\rangle}
 \end{align}
We call a cochain $C$ \textit{closed} if $\delta C=0$. Note that each of the $1$-cochains depicted in Fig.~\ref{fig:homoexample}, for example, are closed. More generally, a closed $1$-cochain, or $1$-cocycle, can be thought of as a sum of loops along the dual lattice. As such, the configurations $\{g_e\}$, obtained from basis contraction, are examples of $1$-cocycles.

A $1$-cochain $C$ is called a \textit{$1$-coboundary} if there exists a $0$-cochain $R$ such that $C=\delta R$. $\delta$ can be understood as a boundary operator on the dual lattice, so intuitively, a $1$-coboundary is a boundary of a region on the dual lattice. For example, the top left picture of Fig.~\ref{fig:homoexample} depicts a $1$-coboundary -- it is equal to $\delta R$ for $R= \boldsymbol{\langle} \mathbf{7} \boldsymbol{\rangle} +\boldsymbol{\langle}\mathbf{4} \boldsymbol{\rangle}$. In general, $1$-coboundaries are sums of contractible loops, which are generated by small loops $L_v \equiv \delta \mathbf{v}$ enclosing a single vertex. A configuration $L$ with a single, contractible loop is a $1$-coboundary of a $0$-cochain $R$ containing vertices enclosed by the loop, i.e., $L=\sum_{v \in L}L_{v}$ with the sum being over vertices $v$ enclosed by $L$. Some loops of odd edges such as the $1$-cocycles $L^x$, $L^y$, and $L^x+L^y$ in Fig.~\ref{fig:homoexample}, are \textit{non-contractible}. These are $1$-cocycles that cannot be written as $\delta R$ for any $0$-cochain $R$. 

We can further define an equivalence of $1$-cocycles where $C_1 \sim C_2$ if there exists a $0$-cochain $R$ such that $C_1 = C_2 + \delta R$. In other words, two $1$-cocycles are equivalent if one can be constructed from the other by appending, or adding contractible loops. Hence, all $1$-coboundaries belong to the same equivalence class -- the class of trivial $1$-cocycles. For a torus, it is well known that there are four inequivalent classes of $1$-cocycles. These may be represented by $L^x$, $L^y$, $L^x+L^y$, and $\delta R$ for a $0$-cochain $R$. Therefore, an arbitrary $1$-cocycle on a torus can be expressed as:
\begin{align}\label{Cdecomposition}
     C= g_{x}L^x +g_{y}L^y+\sum_v g_v L_v,
\end{align}
for some choice of $g_{x},g_{y},g_v \in \Z_2$.


\subsubsection{Koszul signs from a single loop}

Given that a $1$-cocycle can be decomposed in terms of constituent loops, as in Eq.~\eqref{Cdecomposition}, we begin by calculating the Koszul sign $\hatsigma(L)$ for configurations $L$ with a single loop of odd edges along the path $L$ in the dual lattice.  To propose an exact value for $\hatsigma(L)$, we introduce the following notation. We assign a direction to the path $L$ so that, with respect to a global orientation of the 2D manifold, the loop has a ``left side'' and a ``right side''.\footnote{More formally, let $v_1$ be the unit tangent vector along $L$, in the direction of $L$.  Then we say that the unit normal vector $v_2$ points to the ``left'' side of $L$ if $v_1 \wedge v_2$ is equal to the orientation of the underlying $2$D manifold.
}  $L$ overlaps with a triangle $f$ at two edges, and we call the common vertex of these two edges $f_L$. There are six possibilities for $f_L$: it can be a 0-, 1-, or 2-vertex of the triangle $f$, and it can lie to the left or to the right of the loop. We let $\bar{l}_L$ and $\bar{r}_L$ be the sets of $f_L$ for which $f_L$ is a 1-vertex of $f$ and is to the left or right of $L$, respectively.
We use $n(\bar{l}_L)$ and $n(\bar{r}_L)$ to denote the cardinality of $\bar{l}_L$ and $\bar{r}_L$. Then we have:
\begin{myprop}\label{prop:sigmaC}
\begin{align}\label{sigmaC}
     \hatsigma(L) = - (-1)^{\frac{1}{2}\boldsymbol{(}n(\bar{l}_L)-n(\bar{r}_L)\boldsymbol{)}}.
\end{align}
\end{myprop}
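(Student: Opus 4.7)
The plan is to compute $\hat\sigma(L)$ by directly tracking the signs incurred when the odd basis vectors around the loop $L$ are reordered so that each bra at an edge $e$ ends up adjacent to the ket at the same edge, allowing the contractions $(1|_e |1)_e = 1$ to be carried out without further Koszul contributions. The starting point is the internal ordering fixed by Eq.~\eqref{KosT}: for each positive triangle the legs are written in the order (i)$\,f_{01}$, (ii)$\,f_{12}$, (iii)$\,f_{02}$, and oppositely for negative triangles. On the loop $L$, exactly two of the three legs at each triangle carry $|1)$ or $(1|$; the third carries a $|0)$ or $(0|$ which has trivial grading and a fixed position, so it does not contribute to any Koszul sign and can be dropped from the outset.

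First I would traverse $L$ in its assigned direction and, at each visited triangle $f$, record the data tuple (orientation of $f$, vertex-type of the apex $f_L\in\{0,1,2\}$, side of $f_L$ relative to the loop). From these, I read off which of the two loop-edges of $f$ carries a ket and which carries a bra and in which order they appear inside $f$. Next I would perform a case analysis over the six inequivalent (vertex-type, side) configurations: for each case, extract the pair of odd vectors at $f$ and move them to a canonical ``outgoing'' position so that the ket at edge $e\subset f$ sits immediately to the left of the bra at the same $e$ coming from the neighbouring triangle. The number of odd vectors that each extraction swaps past gives a local contribution $s_f=\pm 1$.

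Carrying out the six cases, I expect to find that when $f_L$ is a $0$- or $2$-vertex the two odd vectors at $f$ are already adjacent (or separated only by the ignored even vector), so $s_f=+1$; and when $f_L$ is a $1$-vertex they are separated in a way that produces a nontrivial swap, giving $s_f=-1$ on one side and $s_f=+1$ on the other. Multiplying around the loop then gives $\prod_f s_f = (-1)^{n(\bar l_L)}$ (up to rewriting using the parity relation $n(\bar l_L)\equiv n(\bar r_L)\pmod 2$, which holds for any closed dual loop on an orientable surface with branching structure and yields $(-1)^{n(\bar l_L)} = (-1)^{\frac12(n(\bar l_L)-n(\bar r_L))}$). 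The overall $-1$ prefactor in Eq.~\eqref{sigmaC} must come from closing the cyclic sequence: after all local reorderings the last remaining bra must be moved past the remaining (even number of) odd vectors to meet the first ket, producing one residual supertrace sign.

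The main obstacle will be the bookkeeping in the case analysis. Each $s_f$ depends not only on the local ordering at $f$ but on the position of the matching bra/ket inside the neighbouring triangle, so one must consistently define a ``canonical position'' and track the cumulative parity of swaps as the loop is traversed. A secondary technical point is a clean proof of the parity constraint $n(\bar l_L)\equiv n(\bar r_L)\pmod 2$; I expect it to follow from a discrete Gauss--Bonnet-type identity for the branching structure, noting that as one traverses a closed dual loop the ``turning'' at each triangle is balanced modulo $2$ between the two sides.
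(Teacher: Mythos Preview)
Your proposal has a genuine gap: the algebraic identity you invoke, $(-1)^{n(\bar l_L)} = (-1)^{\frac12(n(\bar l_L)-n(\bar r_L))}$, is simply false in general. Knowing only that $n(\bar l_L)+n(\bar r_L)$ is even, the two expressions agree if and only if $n(\bar l_L)+n(\bar r_L)\equiv 0\pmod 4$. For a concrete counterexample, take the small dual loop $L_v$ around a single vertex $v$ in the regular triangulation of Fig.~\ref{fig:triangulartorus}: there $v$ is the $1$-vertex of exactly two triangles, and with $v$ on the left of the loop one has $n(\bar l_L)=2$, $n(\bar r_L)=0$. The proposition gives $\hat\sigma(L_v)=-(-1)^{1}=+1$ (consistent with $v$ being non-singular), while your formula $-(-1)^{n(\bar l_L)}=-(-1)^2=-1$ gives the wrong sign. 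Consequently the guessed outcome of your case analysis, namely $s_f=-1$ at left $1$-vertices and $s_f=+1$ otherwise, cannot be right; no assignment of local $\pm1$ factors $s_f$ depending only on (vertex-type, side) can reproduce a quantity that is genuinely $\bmod\ 4$ in $n(\bar l_L)-n(\bar r_L)$.

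What the paper does instead is precisely to avoid $\pm1$ bookkeeping. It rewrites each basis tensor by inserting compensating factors of $i$: edges pointing to the left of $L$ contribute $i|1)$ and $i(1|$, edges pointing to the right contribute $|1)$ and $(1|$. With this convention every edge contraction gives $+1$ with no Koszul sign, and the leftover phase at each triangle is $i$ when $f_L\in\bar l_L$, $-i$ when $f_L\in\bar r_L$, and $1$ otherwise. Multiplying these $\pm i$ factors around the loop and including the single supertrace $-1$ yields $-\,i^{\,n(\bar l_L)-n(\bar r_L)}$, which is Eq.~\eqref{sigmaC}. The key idea you are missing is that the per-triangle contribution is a fourth root of unity, not a sign; your swap-counting approach could in principle be made to work, but it would have to track the ket/bra pattern across neighboring triangles, and the answer would not factor into local $s_f\in\{\pm1\}$ as you assumed.
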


\begin{proof}
See Appendix \ref{app:proofofprop1}.
\end{proof}

\subsubsection{Winding number and Koszul signs}
\label{subsub: winding}
\begin{figure}
    \centering
  \begin{tikzpicture}[Sbase]
\interpos;
\end{tikzpicture} 
  \begin{tikzpicture}[Sbase]
\interneg;
\end{tikzpicture} 
    \caption{The branching structure is interpolated into the interior of each triangle to form the continuous, non-vanishing vector field  $\mathcal{V}$.}
    \label{fig:interpolation}
\end{figure}
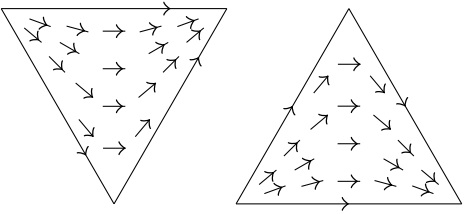
$\hatsigma(L)$ is closely related to the winding number of a certain vector field along the oriented path $L$. In particular, $\hatsigma(L)$ can be computed from the continuous, non-vanishing vector field $\mathcal{V}$ obtained from the branching structure by interpolating it into the interior of the triangles, as shown in Fig.~\ref{fig:interpolation} (see Ref. [\onlinecite{Gaiotto16}]). To calculate the winding number of $\mathcal{V}$ along $L$ we define $\hat n$ to be the left pointing unit normal vector of the loop $L$ and $\hat v$ to be the local vector of $\mathcal{V}$. Then, we integrate the {{derivate of the}} angle $\theta=\cos^{-1}(\hat v\cdot\hat n)$ between $\hat n$ and $\hat v$ along $L$. Given that $\theta$ is continuous, the change in $\theta$ around $L$ must be $2 \pi m$, where $m$ is an integer.  $m$ gives the winding number of $\mathcal{V}$ along $L$, which we denote as $w(L)$.\footnote{Note that the definition of winding number here is the winding number of the vector field relative to the normal vector of the loop $L$. We emphasize that this differs by a sign from a notion of the winding number of a vector field sometimes used in physics.} For definiteness, we choose clockwise rotation to be positive. 

\begin{myprop}\label{prop:wc}
\begin{align}
    w(L)=\frac{1}{2}\boldsymbol{(}n(\bar{l}_L)-n(\bar{r}_L)\boldsymbol{)}.
\end{align}
Equivalently, 
\begin{align}
   \hatsigma(L) = -(-1)^{w(L)}.
\end{align}
\end{myprop}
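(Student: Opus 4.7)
My plan is to compute $w(L)$ by decomposing the contour integral
\begin{align}
  2\pi\, w(L) \;=\; \oint_L d\theta \;=\; \sum_{f} \Delta\theta_f,
\end{align}
where $f$ ranges over the triangles traversed by $L$ and $\Delta\theta_f$ denotes the change in $\theta$ along the segment of $L$ lying inside $f$. Since $\theta$ is continuous along $L$ and $L$ is closed, this decomposition is exact. My goal is to show that $\Delta\theta_f = 0$ whenever $f_L$ is a $0$- or $2$-vertex, and that $\Delta\theta_f = +\pi$ (resp.\ $-\pi$) whenever $f_L$ is a $1$-vertex lying to the left (resp.\ right) of $L$. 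Summing then yields $2\pi\, w(L) = \pi\bigl(n(\bar{l}_L) - n(\bar{r}_L)\bigr)$, which is the claimed formula; the equivalent statement $\hatsigma(L) = -(-1)^{w(L)}$ follows upon combining this with Proposition~\ref{prop:sigmaC}.

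To carry out the case analysis I would use a local model of the interpolated vector field $\mathcal{V}$, as suggested by Fig.~\ref{fig:interpolation}: $\mathcal{V}$ is radially outgoing near the $0$-vertex, radially incoming near the $2$-vertex, and flows through at the $1$-vertex. The loop $L$ separates $f_L$ from the other two vertices of $f$, so the segment of $L$ inside $f$ may be regarded as a small arc around $f_L$. When $f_L$ is a $0$-vertex, $\mathcal{V}$ is radial away from $f_L$ along the whole arc, while the leftward normal $\hat n$ is also radial (toward $f_L$ if $f_L$ is on the left, away if on the right). In either subcase the angle between $\mathcal{V}$ and $\hat n$ is constant along the arc, so $\Delta\theta_f = 0$. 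The same argument treats the $2$-vertex case with the radial sense of $\mathcal{V}$ reversed. When $f_L$ is a $1$-vertex, however, $\mathcal{V}$ flips its radial component between the two edges of $f$ meeting at $f_L$, so that $\mathcal{V}$ rotates by exactly $\pi$ relative to the essentially-constant-direction normal $\hat n$; the orientation of this half-turn is fixed by the ``clockwise positive'' convention to be $+\pi$ when $f_L$ sits on the left of $L$ and $-\pi$ when it sits on the right.

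The step I expect to require the most care is the sign bookkeeping in the $1$-vertex case. The cleanest way to pin this down is to fix a local coordinate chart in which the segment of $L$ runs horizontally with $\hat n$ pointing upward, and then to track the horizontal and vertical components of $\mathcal{V}$ at the two boundary crossings; the orientability of the surface combined with the prescribed rotation sign convention then forces the asserted $\pm\pi$ assignment. A secondary concern is that the interpolation of the branching structure into the interiors of triangles is only canonical up to a smooth homotopy through non-vanishing vector fields, but because $\Delta\theta_f/2\pi$ is invariant under such homotopies, the particular local model above may be chosen without loss of generality. Finally, the integrality of $\tfrac{1}{2}\bigl(n(\bar{l}_L) - n(\bar{r}_L)\bigr)$ is automatic once the formula is established, since $w(L)$ is the winding number of a continuous non-vanishing vector field along a closed loop and thus lies in $\Z$.
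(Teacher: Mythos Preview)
Your proposal is correct and follows essentially the same approach as the paper: both decompose $w(L)$ into per-triangle contributions $\Delta\theta_f$, perform a case analysis on whether $f_L$ is a $0$-, $1$-, or $2$-vertex, and obtain $\Delta\theta_f=0$ in the first two cases and $\Delta\theta_f=\pm\pi$ in the $1$-vertex case with sign governed by whether $f_L$ lies to the left or right of $L$. The only notable presentational difference is that the paper handles the case analysis by drawing explicit pictures of the interpolated field along the crossing segment for each of the relevant triangle/side combinations, whereas you abstract this into a local source/sink/saddle model for $\mathcal{V}$ near $f_L$ and invoke homotopy invariance to justify the model; the content of the two arguments is the same.
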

\begin{proof}
We consider the ways in which $L$ can pass through triangles and in each case, identify the change in $\theta=\cos^{-1}(\hat v \cdot \hat n)$. When $f_L$ is a $0$- or $2$-vertex, the total change in $\theta$ is $0$. This is illustrated in the following example, where $f_L$ is a $2$-vertex:
\begin{align}
    \begin{tikzpicture}[Sbase]
    \interpos;
    \draw[blue,-<-=0.75](3/2,-{sqrt(3)/2})--++(90:{sqrt(3)});
    \draw[blue,->-=0.75](3/2,-{sqrt(3)/2})--++(-30:{sqrt(3)});
    \draw[blue,->-=0.75](3/2,0.5)--++(0.35,0)node[right]{$\hat n$};
    \draw[blue,->-=0.75](3/2,-{sqrt(3)/2})++(-30:1.2)--++(60:0.35)node[right]{$\hat n$};
    \end{tikzpicture}.
\end{align}
The change in $\theta$ through the triangle above is 0, since the vector field is nearly parallel to $\hat n$ along the path. A similar argument applies whenever $f_L$ is a $0$- or $2$-vertex. Thus, the only crossings that can contribute to the winding number are when $f_L$ is a $1$-vertex. 


We first examine the case where $f_L$ is a $1$-vertex to the left of $L$, i.e., $f_L\in \bar{l}_L$. There are two such crossings:
\begin{align}
 \begin{tikzpicture}[Sbase,scale=1]
    \interneg;
    \draw[blue,-<-=0.9](3/2,{sqrt(3)/2})--++(150:{sqrt(3)});
    \draw[blue,->-=0.9](3/2,{sqrt(3)/2})--++(30:{sqrt(3)});
    \draw[blue,->-=0.9](3/2,{sqrt(3)/2})++(150:1.2)--++(60:0.35);
    \draw[blue,->-=0.9](3/2,{sqrt(3)/2})++(30:1.2)--++(120:0.35);
    \end{tikzpicture} \hspace{1cm} \begin{tikzpicture}[Sbase,scale=1]
    \interpos;
    \draw[blue,->-=0.9](3/2,-{sqrt(3)/2})--++(210:{sqrt(3)});
    \draw[blue,-<-=0.9](3/2,-{sqrt(3)/2})--++(-30:{sqrt(3)});
    \draw[blue,->-=0.9](3/2,-{sqrt(3)/2})++(210:1.2)--++(300:0.35);
    \draw[blue,->-=0.9](3/2,-{sqrt(3)/2})++(-30:1.2)--++(-120:0.35);
    \end{tikzpicture}.
\end{align}
(Note the triangle on the left is negatively oriented while the triangle on the right is positively oriented.)
For both crossings, moving along $L$, the vector field rotates clockwise relative to $\hat n$, and $\theta$ changes by $+\pi$.  If instead, $f_L$ is to the right of $L$ then the corresponding crossings are:
\begin{align}
    \begin{tikzpicture}[Sbase,scale=1]
    \interpos;
    \draw[blue,-<-=0.9](3/2,-{sqrt(3)/2})--++(210:{sqrt(3)});
    \draw[blue,->-=0.9](3/2,-{sqrt(3)/2})--++(-30:{sqrt(3)});
    \draw[blue,->-=0.9](3/2,-{sqrt(3)/2})++(210:1.2)--++(120:0.35);
    \draw[blue,->-=0.9](3/2,-{sqrt(3)/2})++(-30:1.2)--++(60:0.35);
    \end{tikzpicture} \hspace{1cm} \begin{tikzpicture}[Sbase,scale=1]
    \interneg;
    \draw[blue,->-=0.9](3/2,{sqrt(3)/2})--++(150:{sqrt(3)});
    \draw[blue,-<-=0.9](3/2,{sqrt(3)/2})--++(30:{sqrt(3)});
    \draw[blue,->-=0.9](3/2,{sqrt(3)/2})++(150:1.2)--++(240:0.35);
    \draw[blue,->-=0.9](3/2,{sqrt(3)/2})++(30:1.2)--++(-60:0.35);
    \end{tikzpicture}.
\end{align}
We see that, in this case, the vector field winds counterclockwise along $L$, and $\theta$ changes by $-\pi$.

In conclusion, whenever $f_L$ belongs to $\bar{l}_L$, $\theta$ changes by $\pi$, and when $f_L$ is in $\bar{r}_L$, $\theta$ changes by $-\pi$.  Accordingly, the winding number along $L$, with respect to $\hat n$, is: 
\begin{align}
    w(L)=\sum_{f_L} \left(\frac{1}{2} \delta_{f_L\in \bar{l}_L} -\frac{1}{2} \delta_{f_L\in \bar{r}_L}\right)= \frac{1}{2}\boldsymbol{(}n(\bar{l}_L)-n(\bar{r}_L)\boldsymbol{)}, 
\end{align}
{where $\delta_{f_L\in \bar{l}_L}$ and $\frac{1}{2} \delta_{f_L\in \bar{r}_L}$ are indicator functions for the sets $\bar{l}_L$ and $\bar{r}_L$, respectively.}
\end{proof}

{In Refs. [\onlinecite{Cimasoni07},\onlinecite{Gaiotto16}], it is argued that a function on loops of the form $-(-1)^{w(L)}$, such as $\hat \sigma(L)$, gives a \textit{quadratic refinement of the intersection pairing}. This is to say that, as a consequence of Prop.~\ref{prop:wc}, $\hat \sigma$ satisfies:
\begin{align}\label{quadraticrefinement}
    \hat \sigma(L_1+L_2)=(-1)^{\langle L_1, L_2 \rangle}\hat \sigma(L_1) \hat \sigma(L_2),
\end{align}
where $\langle L_1,L_2 \rangle$ is the intersection number ($\text{mod } 2$) of $L_1$ and $L_2$. For example, the non-contractible cycles $L^x$ and $L^y$ on a torus in Fig.~\ref{fig:homoexample} have an intersection number $\langle L^x, L^y \rangle = 1 \text{ mod }2$. Therefore, by Eq.~\eqref{quadraticrefinement}, we have: $\hat \sigma(L^x+L^y)=-\hat \sigma(L^x) \hat \sigma(L^y)$. }

{Importantly, Eq.~\eqref{quadraticrefinement} allows us to relate the sign $\hat \sigma(C)$ for a  general configuration $C=\sum_i L_i$ to the signs $\hat \sigma(L_i)$ of constituent loops. For a single contractible loop $L$, which can be decomposed into a sum of loops $L=\sum_{v \in L}L_{v}$, the sign $\hat \sigma(L)$ can be written as [using Eq.~\eqref{quadraticrefinement}]:
\begin{align}\label{Lcontractiblesingvert}
    \hat \sigma(L)=\prod_{v \in L}\hat \sigma(L_{v}).
\end{align}
The product in Eq.~\eqref{Lcontractiblesingvert} is over vertices enclosed by the loop $L$.}

{We call a vertex $v$ \textit{singular} if the loop $L_v$, enclosing only $v$, is such that $\hat \sigma(L_v)=-1$. Referring to Eq.~\eqref{Lcontractiblesingvert}, the sign $\hat \sigma(L)$ for a contractible loop $L$ can be computed by simply counting the singular vertices enclosed by $L$. Explicitly, $\hat \sigma(L)$ for a contractible loop $L$ is:
\begin{align}\label{sigmasv}
    \hat \sigma(L)=(-1)^{n_{sv}(L)},
\end{align}
where $n_{sv}(L)$ is the number of singular vertices enclosed by the loop $L$. This is a manifestation of Stokes' theorem for the winding number of the vector field along $L$. We note that, using Prop.~\ref{prop:sigmaC}, a vertex $v$ is singular if it is the $1$-vertex of $4m$ triangles, for an integer $m$. Alternatively, using Prop.~\ref{prop:wc}, $v$ is singular if $-(-1)^{w(L_v)}=-1$.}

\subsection{{Removing grading and choosing spin-structure}} \label{subsec: choosing SS}

{The function $\hat \sigma$ captures the Koszul signs accrued in the contraction of the fermionic virtual legs of the bosonized fPEPS. The goal of this section is to replace the $\Z_2$-graded virtual legs of the bosonized fPEPS with un-graded legs and simulate the Koszul signs given by $\hat \sigma$ by inserting Pauli $Z$ operators on certain bosonic virtual legs.} 

{More specifically, we first convert the fermionic virtual legs to bosonic virtual legs, i.e., with the internal ordering fixed, we map a fermion parity even state $|0)$ to an up spin $|0\rangle$ (in the $Z$ basis) and a fermion parity odd state $|1)$ to a down spin $|1\rangle$. The bosonic virtual legs fail to replicate the Koszul signs that were obtained by contracting the fermionic virtual legs. Thus, second, we fix this by choosing a set of edges $\eta$ (a choice of spin-structure) and including an extra $Z$ operator on edges $e \in \eta$ before contraction. When down spins $|1\rangle$ contract on an edge $e \in \eta$, the extra Pauli $Z$ operator results in a sign $-1$. We need to choose $\eta$ so that the contraction of a configuration $C$ of loops of down spins $|1\rangle$ yields the sign $\hat \sigma(C)$.}


{We begin by accounting for the Koszul signs $\hat \sigma(L)$ accrued by contractible loops $L$. Next, we discuss a matrix product operator (MPO) which captures the Koszul signs from non-contractible loops. We focus on the case when the manifold is a torus and only outline the procedure for more general 2D manifolds.}

\subsubsection{{Reproducing Koszul signs for contractible loops}}\label{contractibleloops}

 \begin{figure}
     \centering
    \begin{tikzpicture}[Sbase,scale=0.8]
    \randt;
  \path[](3)--node[]{$Z$}(4)--node[]{$Z$}(7);
  \path[](8)--node[]{$Z$}(10);
    \node at (2,-0.3) {(a)};
 \end{tikzpicture}
 \begin{tikzpicture}[Sbase,scale=0.8]
    \randt;
  \path[](7)--node[]{$Z$}(8);
  \path[](3)--node[]{$Z$}(4)--node[]{$Z$}(2);
    \node at (2,-0.3) {(b)};
 \end{tikzpicture}
     \caption{An arbitrary triangulation of a torus with the four singular vertices: $\langle 3 \rangle$, $\langle 7 \rangle$, $\langle 5=8 \rangle$, $\langle 2=10 \rangle$, $\langle 2 = 11 \rangle$.  (a) $Z$-operators placed at edges corresponding to the spin-structure ${\eta = \lbrace \langle 3,4\rangle,\langle 4,7 \rangle,\langle 8,10 \rangle  \rbrace}$. (b) $Z$-operators placed at edges for an alternative choice of spin-structure ${\eta = \lbrace \langle 3,4\rangle,\langle 4,2 \rangle,\langle 7,8 \rangle  \rbrace}$.  }
     \label{fig:spinstructureexample}
 \end{figure}
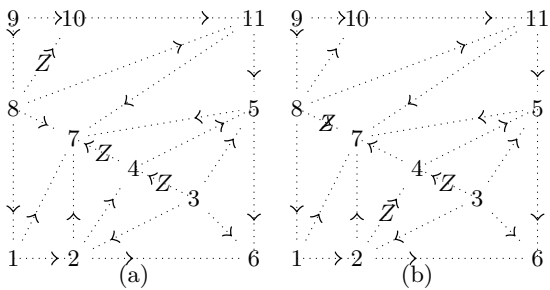

{Our strategy for accounting for $\hat \sigma(L)$, when $L$ is a contractible loop is to `pair-up' the singular vertices and construct the set $\eta$ from edges that connect the two singular vertices in each pair. More precisely, Stokes' theorem guarantees an even number of singular vertices on a closed manifold, so we can always find a set of edges $\eta$ such that the boundary of $\eta$ gives the set of singular vertices. Here, the boundary of $\eta$ is defined as the set of vertices that are endpoints of an odd number of edges in $\eta$. Intuitively, $\eta$ can then be understood as `pairing-up' singular vertices with each other through arbitrary paths. Fig.~\ref{fig:spinstructureexample} provides an example of choosing $\eta$ on a torus.}

{To replicate the sign $\hat \sigma(L)$, we insert $Z$ operators on all edges in $\eta$ (see Fig.~\ref{fig:spinstructureexample}). Now, in evaluating a configuration with a single loop $L$ of down spins $|1\rangle$, we incur the sign:
\begin{align} \label{sigmaeta}
    \sigma_\eta(L) \equiv (-1)^{n(L,\eta)},
\end{align}
where $n(L,\eta)$ denotes the number of common edges (or crossings) between the loop $L$ and the edges in $\eta$. Given our construction of $\eta$, $n(L,\eta)$ is equal ($\text{mod }2$) to the number of singular vertices enclosed in $L$. Therefore, for any contractible loop $L$:
\begin{align} \label{sigmaeta2}
    \sigma_\eta(L)=(-1)^{n_{sv}(L)},
\end{align}
in agreement with Eq.~\eqref{sigmasv}. Consequently, for any $1$-cocycle $C$ and $0$-cochain $R$, we have:
\begin{align}\label{trivialquadratic}
    \sigma_\eta(C+\delta R)=\sigma_\eta(C)\sigma_\eta(\delta R)=\sigma_\eta(C)\prod_{v \subset R}\sigma_\eta(L_v).
\end{align}
The product $\prod_{v \subset R}$ is over all vertices such that the coefficient of $\mathbf{v}$ in $R$ is nontrivial. }

{We note that a choice of $\eta$ can be modified by including any set of edges forming a contractible loop. We call two sets $\eta$ and $\eta'$ equivalent spin-structures, if one can be obtained from the other by appending contractible loops of edges.}

\subsubsection{{Reproducing Koszul signs for non-contractible loops}}

{$\sigma_\eta(C)$ simulates $\hat \sigma(C)$ for trivial $1$-cocycles $C$. This is sufficient to account for Koszul signs when the fermionic system is defined on a sphere or an infinite plane. However, $\sigma_\eta(C)$ does not capture the sign $\hat \sigma(C)$ when $C$ is a non-trivial $1$-cocycle. To account for the Koszul signs on a torus or higher genus manifolds, we insert MPOs along non-contractible loops to perform a certain sum over inequivalent spin-structures.  In the following, we describe the case of a torus in detail and only sketch the generalization to higher genus manifolds.}

{To start, we consider a particular triangulation of a torus without any singular vertices, as shown in Fig.~\ref{fig:triangulartorus}(a-c). Since there are no singular vertices, $\hat \sigma(L)=1$ for any contractible loop $L$. For non-contractible loops, however, the Koszul signs are non-trivial.  To see this, we let $L^x$ and $L^y$ be distinct non-contractible loops lying parallel to the $x$-axis and $y$-axis, respectively. The specific choices of $L^x$ and $L^y$ do not matter, because contractible loops can freely be appended without changing $\hat \sigma(L^x)$ and $\hat \sigma(L^y)$. This follows from Eq.~\eqref{quadraticrefinement} and the fact that there are no singular vertices. Now, using either Prop.~\ref{prop:sigmaC} or Prop.~\ref{prop:wc}, one finds:
\begin{align}\label{noncontractguwen}
    \hat \sigma(L^x)=\hat \sigma(L^y)=\hat \sigma(L^x+L^y)=-1.
\end{align}}

{Hence, after converting the $\Z_2$-graded virtual legs to bosonic virtual legs, a modification is necessary to simulate the sign in Eq.~\eqref{noncontractguwen}. A possible solution is to insert Pauli $Z$ operators along non-contractible loops. Naively, we can insert $Z$ operators along only the $x$-axis [Fig.~\ref{fig:triangulartorus}(a)], only the $y$-axis [Fig.~\ref{fig:triangulartorus}(b)], or both the $x$-axis and the $y$-axis [Fig.~\ref{fig:triangulartorus}(c)]. We find that all of these options fail to reproduce the sign in Eq.~\eqref{noncontractguwen}. The solution is a certain superposition of these options, which can be expressed using an MPO. Before describing this MPO, we develop some notation and discuss the effects of inserting $Z$ operators along the axes.}

{First, we define the spin-structure $\eta^x$, which contains the edges along the $x$-axis.
The product of $Z$ operators applied along the edges in $\eta^x$ can be expressed as $\prod_e Z_e^{\eta^x}$, where, in this expression, $\eta^x$ is the indicator function for the set $\eta^x$. The operator $\prod_e Z_e^{\eta^x}$ is pictured in Fig.~\ref{fig:triangulartorus}(a). Then, letting $\sigma_{\eta^x}(L)$ be the sign obtained in contracting a configuration of down spins along $L$ with the added operator $\prod_e Z_e^{\eta^x}$, we have:
\begin{align}\label{sigmax}
    \sigma_{\eta^x}(L^x)=1,\quad \sigma_{\eta^x}(L^y)=-1, \quad \sigma_{\eta^x}(L^x+L^y)=-1.
\end{align}}
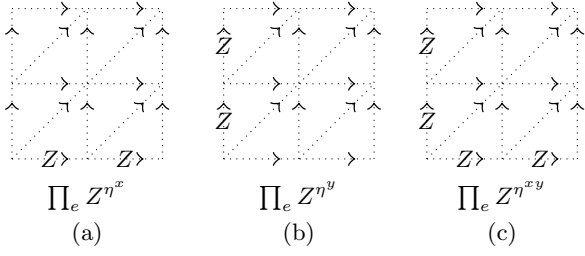
\begin{figure}
    \centering


\begin{minipage}{0.3\linewidth} 
 \begin{tikzpicture}[Sbase,scale=1]
\foreach \j in {0,1,2}
{\draw[dotted,OES={AR1=0.75}](0,\j)--(1,\j)--(2,\j);
\draw[dotted,OES={AR1=0.75}](\j,0)--(\j,1)--(\j,2); }
\foreach \j in {(0,0),(0,1),(1,0),(1,1)}
{\draw[dotted,OES={AR1=0.75}] \j --++(45:{sqrt(2)}); }
\path (0.5,0)node{$Z$}(1.5,0)node{$Z$};
\node at (1,-0.5) {$\prod_e Z^{\eta^x}$};
\node at (1,-1) {(a)};
\end{tikzpicture}
\end{minipage}
\begin{minipage}{0.3\linewidth}
\begin{tikzpicture}[Sbase,scale=1]
\foreach \j in {0,1,2}
{\draw[dotted,OES={AR1=0.75}](0,\j)--(1,\j)--(2,\j);
\draw[dotted,OES={AR1=0.75}](\j,0)--(\j,1)--(\j,2); }
\foreach \j in {(0,0),(0,1),(1,0),(1,1)}
{\draw[dotted,OES={AR1=0.75}] \j --++(45:{sqrt(2)}); }
\path (0,0.5)node{$Z$}(0,1.5)node{$Z$};
\node at (1,-1) {(b)};
\node at (1,-0.5) {$\prod_e Z^{\eta^y}$};
\end{tikzpicture}
\end{minipage}
\begin{minipage}{0.3\linewidth}
\begin{tikzpicture}[Sbase, scale=1]
\foreach \j in {0,1,2}
{\draw[dotted,OES={AR1=0.75}](0,\j)--(1,\j)--(2,\j);
\draw[dotted,OES={AR1=0.75}](\j,0)--(\j,1)--(\j,2); }
\foreach \j in {(0,0),(0,1),(1,0),(1,1)}
{\draw[dotted,OES={AR1=0.75}] \j --++(45:{sqrt(2)}); }
\path (0.5,0)node{$Z$}(1.5,0)node{$Z$};
\path (0,0.5)node{$Z$}(0,1.5)node{$Z$};
\node at (1,-1) {(c)};
\node at (1,-0.5) {$\prod_e Z^{\eta^{xy}}$};
\end{tikzpicture} 
\end{minipage}
    \caption{Triangulation of a torus without any singular vertices, but with $Z$-operators placed along (a) the $x$-axis (b) the $y$-axis (c) both the $x$-axis and the $y$-axis.}
    \label{fig:triangulartorus}
\end{figure}

{Next, we define the spin-structures $\eta^y$ and $\eta^{xy}$ similarly.  $\eta^y$ is the set of edges along the $y$-axis and corresponds to the insertion of the operator $\prod_e Z_e^{\eta^y}$, depicted in Fig.~\ref{fig:triangulartorus}(b). In this case, the sign accrued in contracting the bosonic legs is:
\begin{align}\label{sigmay}
    \sigma_{\eta^y}(L^x)=-1,\quad \sigma_{\eta^y}(L^y)=1, \quad \sigma_{\eta^y}(L^x+L^y)=-1.
\end{align}
If we instead insert $Z$ operators along both the $x$-axis and $y$-axis, as in Fig.~\ref{fig:triangulartorus}(c), and define $\eta^{xy}$ as the union of $\eta^x$ and $\eta^y$, we obtain the signs:
\begin{align}\label{sigmaxy}
    \sigma_{\eta^{xy}}(L^x)=-1,\quad \sigma_{\eta^{xy}}(L^y)=-1, \quad \sigma_{\eta^{xy}}(L^x+L^y)=1.
\end{align}}

{In each case above [Eqs.~\eqref{sigmax}-\eqref{sigmaxy}], the operator insertion fails to replicate the sign in Eq.~\eqref{noncontractguwen}. However, the sign in Eq.~\eqref{noncontractguwen} can be simulated using the following superposition of operators:
\begin{align}\label{MPOtargetoperator}
    \frac{1}{2}\left( -1+\prod_{e}Z_e^{\eta^x}+\prod_{e}Z_e^{\eta^y}+\prod_eZ_e^{\eta^{xy}} \right).
\end{align}
Explicitly, the sign obtained in contracting an arbitrary loop of down spins $L$ is then:
\begin{align}\label{superpositionofsigns1}
    \frac{1}{2}\boldsymbol{(}-1+\sigma_{\eta^x}(L)+\sigma_{\eta^y}(L)+\sigma_{\eta^{xy}}(L)\boldsymbol{)}.
\end{align}
One can check that, for loops $L^x$, $L^y$, and $L^x+L^y$, the sign given by Eq.~\eqref{superpositionofsigns1} matches the sign in Eq.~\eqref{noncontractguwen}. Furthermore, the sign in Eq.~\eqref{superpositionofsigns1} agrees with $\hat \sigma$ on all loops.}

\begin{figure}
    \centering
    \begin{tikzpicture}[Sbase,scale=1.5]
\ttorus;
\draw[](0,2)--(0,0)--(2,0);
\draw[] (-0.25,1.5)--node[Sq={0},scale=0.7]{}++(0.5,0);\draw[] (-0.25,.5)--node[Sq={0},scale=0.7]{}++(0.5,0);
\draw[] (1.5,-0.25)--node[Sq={0},scale=0.7]{}++(0,.5);\draw[] (.5,-0.25)--node[Sq={0},scale=0.7]{}++(0,.5);
\node[Bcir,scale=1] at (0,0){};\node[Bcir,scale=1] at (0,0){$\M{G}$};
\end{tikzpicture}
    \caption{Triangulation of a torus without any singular vertices, and with the MPOs generated by $\M{W}$ (square nodes) and $\M{G}$ (circular nodes). The MPOs generated by $\M{W}$ wrap around both the $x$-axis and the $y$-axis and the $\M{G}$ tensor is placed at their intersection. }
    \label{fig:insertmpo}
\end{figure}
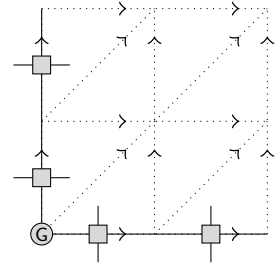

{The operator in Eq.~\eqref{MPOtargetoperator} can be represented using MPOs. We start by considering an MPO of the form:
\begin{align}
\begin{tikzpicture}[Sbase]
  \draw[](0,0)--(6,0);\draw[loosely dotted](-0.5,0)--(0,0);
  \draw[loosely dotted](6,0)--+(0.5,0);
    \foreach \j in {1,3,5}
    { \draw[](\j,0)++(0,-0.5)--++(0,1); \node[Sq={0},scale=0.9]  at (\j,0){$\M{W}$} ; }
\end{tikzpicture},
\end{align}
generated by the local tensors $\M{W}$, given by:
\begin{align}\label{Wtensor}
     \begin{tikzpicture}[Sbase,scale=0.6]
     \draw[](-1,0)node[below]{$p$}--++(2,0)node[below]{$q$};\draw[](0,-1)node[below]{$r$}--++(0,2)node[above]{$s$};\node[Sq={0},scale=1] {$\M{W}$};
     \end{tikzpicture}=&\sum_{a,b}(-1)^{(a)(b)} |a\rangle_p|b\rangle_r \langle b|_s\langle a|_q \nonumber\\
     = & |0\rangle I \langle 0|+ |1\rangle Z \langle 1|.
\end{align}
When the virtual (horizontal) legs take value $0$, $W$ acts as the identity, and when they take value $1$, $W$ acts as a $Z$ operator. Therefore, $\M{W}$ generates a {{controlled $Z$}} operator of the form $\ldots III\ldots + \ldots ZZZ\ldots $.}

{If we insert this MPO on the virtual level of the bosonic tensor network, say, along the $x$-axis, it is equivalent to inserting the operator $1+\prod_eZ_e^{\eta^x}$. Similarly, inserting it along the $y$-axis is equal to the operator $1+\prod_eZ_e^{\eta^y}$.  If we insert the MPO along both the $x$-axis and the $y$-axis they cross as a single vertex, and we link the MPOs together at their intersection using another tensor $\M{G}$ (see Fig.~\ref{fig:insertmpo}). We take $\M{G}$ to be: 
 \begin{align}\label{Lwtensor}
     \begin{tikzpicture}[Sbase,scale=0.6]
     \draw[](-1,0)node[below]{$p$}--++(2,0)node[below]{$q$};\draw[](0,-1)node[below]{$r$}--++(0,2)node[above]{$s$};\node[Bcir,scale=1] {$\M{G}$};
     \end{tikzpicture}=&\frac{1}{2}\sum_{a,b}(-1)^{(a+1)(b+1)} |a\rangle_p|b\rangle_r \langle b|_s\langle a|_q\nonumber\\
     =&-\frac{1}{2}|0\rangle_p|0\rangle_r \langle 0|_s\langle 0|_q+\frac{1}{2}|0\rangle_p|1\rangle_r \langle 1|_s\langle 0|_q \nonumber\\ &+\frac{1}{2}|1\rangle_p|0\rangle_r \langle 0|_s\langle 1|_q+\frac{1}{2}|1\rangle_p|1\rangle_r \langle 1|_s\langle 1|_q.
 \end{align}
Now, when virtual legs of the MPOs in both the $x$ and $y$ direction take value $0$, $\M{G}$ is $-\frac{1}{2}$, and otherwise it is $\frac{1}{2}$. Thus, the total MPO produces the superposition of operators: 
\begin{align}
     \frac{1}{2}\left( -1+\prod_{e}Z_e^{\eta^x}+\prod_{e}Z_e^{\eta^y}+\prod_eZ_e^{\eta^{x}} \prod_eZ_e^{\eta^{y}}\right).
\end{align}
Recalling that $\eta^{xy}$ is the union of $\eta^x$ and $\eta^y$, we see that the operator above is equivalent to the operator in Eq.~\eqref{MPOtargetoperator}. The Koszul signs, therefore, can be accounted for using the MPO generated by $\M{W}$ and $\M{G}$, pictured in Fig.~\ref{fig:insertmpo}.} 

{In effect, the MPO implements a sum over inequivalent spin-structures to capture the Koszul signs of non-contractible loops. The tensor $\M{G}$ dictates the particular sum over spin-structures and, in general, depends on the branching structure. To see this, we next consider a general triangulation of a torus, where we must incorporate $\sigma_\eta$, accounting for singular vertices, with the sum over inequivalent spin-structures given by the MPO. }

\subsubsection{{General triangulation of a torus}}\label{generaltriG}

{Thus far, we have argued that we can account for Koszul signs in the following two cases: (i) trivial $1$-cocycles formed from contractible loops of $|1)$ states and (ii) non-contractible loops formed by $|1)$ states in the absence of singular vertices. To reproduce the Koszul signs from contraction on a general triangulation of a torus, we then must be able to simulate the Koszul signs from non-contractible loops in the presence of singular vertices.  We will find that we require a branching structure dependent choice of the tensor $\M{G}$ to obtain an appropriate sum over inequivalent spin-structures.}  

{The first step is to account for the Koszul signs of contractible loops, as in \ref{contractibleloops}. That is, we choose a set of edges $\eta$ such that the edges in $\eta$ pair up the singular vertices and insert $Z$ operators at the edges in $\eta$. The sign from evaluating a loop $L$ of down spins is then $\sigma_\eta(L)$ as in Eqs.~\eqref{sigmaeta} and \eqref{sigmaeta2}.}

{After choosing $\eta$ we can account for the Koszul signs from non-contractible loops. As before, we choose representative non-contractible loops $L^x$ and $L^y$ lying parallel to the $x$- and $y$-axis, respectively, such as those in Fig.~\ref{fig:homoexample}.  However, unlike the case with no singular vertices, the choice of $L^x$ and $L^y$ matters. For example, the sign $\hat \sigma(L^x)$ changes if $L^x$ is shifted across a singular vertex. Likewise the sign of $\sigma_\eta(L^x)$ changes if $L^x$ is shifted across a singular vertex. Therefore, to remove the ambiguity, we define the $\{0,1\}$ valued $\alpha_x$ and $\alpha_y$ by:
\begin{align}\label{alpha1}
    (-1)^{\alpha_x}&\equiv{\hat \sigma(L^x) }/{ \sigma_\eta(L^x)} \\ \label{alpha2}
    (-1)^{\alpha_y}&\equiv{\hat \sigma(L^y) }/{ \sigma_\eta(L^y)} \\ \label{alpha3}
    (-1)^{\alpha_{x}+\alpha_y+1}&={\hat \sigma(L^x+L^y) }/{ \sigma_\eta(L^x+L^y)}.
\end{align}
We emphasize that the expressions above are independent of the particular choice of representative non-contractible loops $L^x$ and $L^y$, using Eqs.~\eqref{quadraticrefinement} and 
\eqref{trivialquadratic}.}

{We now only need to reproduce the signs on the left side of Eqs.~\eqref{alpha1},~\eqref{alpha2}, and \eqref{alpha3} for non-contractible loops using the MPO generated by $\M{W}$ and $\M{G}$. A superposition of operators that yields these signs from contracting bosonic legs is:
\begin{align}
\frac{1}{2}(-1)^{\alpha_x\alpha_y}\bigg(1&+(-1)^{\alpha_y}\prod_{e}Z_e^{\eta^x}\nonumber \\
&+(-1)^{\alpha_x}\prod_{e}Z_e^{\eta^y}+(-1)^{\alpha_x+\alpha_y}\prod_eZ_e^{\eta^{xy}}\bigg).
\end{align}
It can be shown that this operator is generated by $\M{W}$ and $\M{G}$ with the components of $\M{G}$ given by:
\begin{align} \label{gab2}
    G_{ab}=\frac{1}{2}(-1)^{(\alpha_y+a)(\alpha_x+b)}.
\end{align}
For the special case of the triangulation in Fig.~\ref{fig:triangulartorus}, $\alpha_x=\alpha_y=1$ and Eq.~\eqref{gab2} gives $G_{ab}=\frac{1}{2}(-1)^{(a+1)(b+1)}$, which matches our previous result. Given the spin structure in Fig.~\ref{fig:etatorus}, $\alpha_x=\alpha_y=0$. Thus, in this case, to capture the Koszul signs from non-contractible loops, the components of $\M{G}$ should be $G_{ab}=\frac{1}{2}(-1)^{ab}$.}

\subsubsection{{Higher genus manifolds}}

\begin{figure}
    \centering
  \includegraphics[scale=0.1,trim={2.5cm 10 10 0},clip]{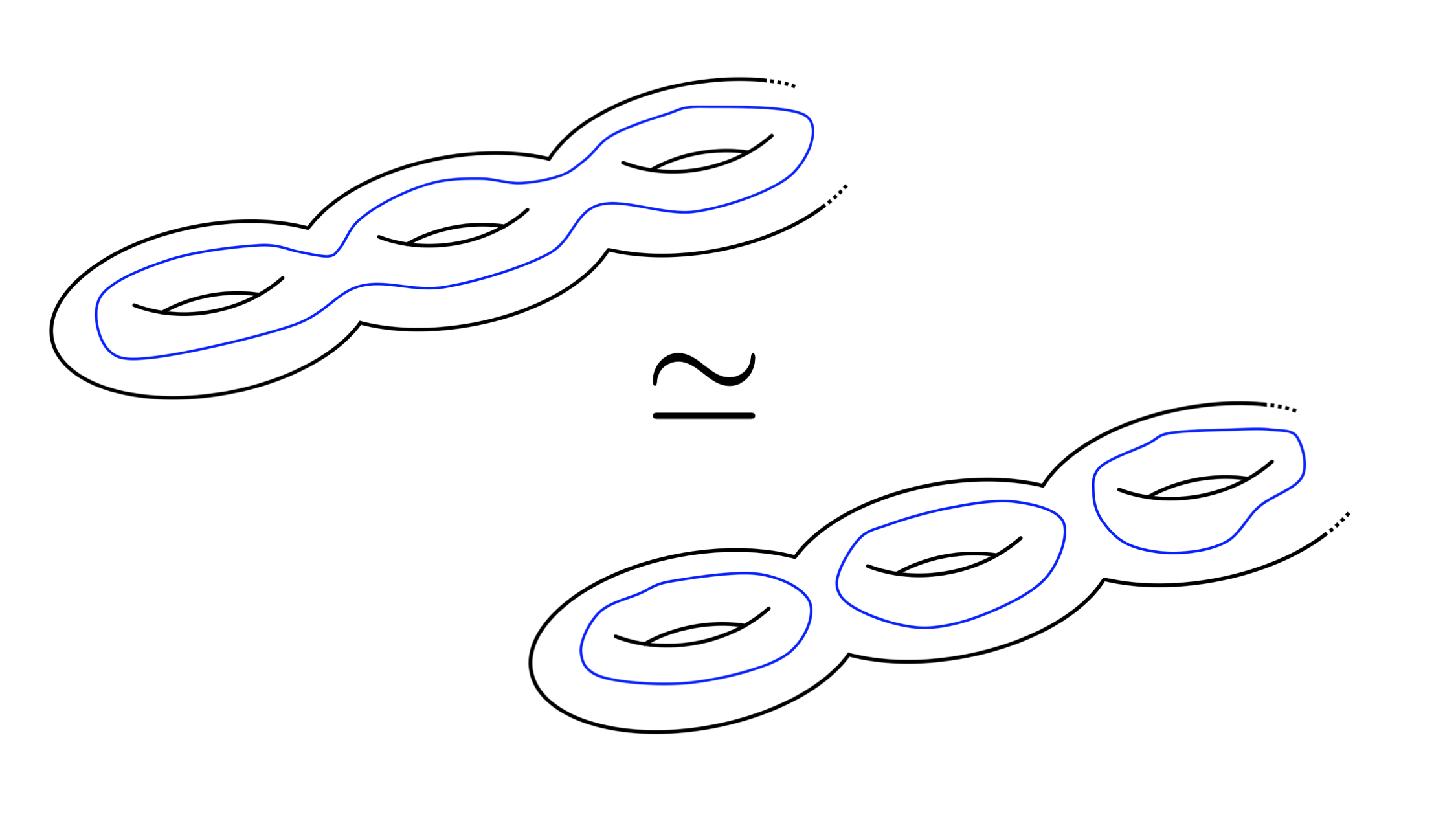}  
\caption{A cocycle on a genus $g$ manifold is cohomologous to a $\Z_2$ sum of cocycles on the component torii. The non-trivial cocycles on independent torii on the right hand side have a trivial intersection number. }
    \label{fig:HigherGenusDecomp}
\end{figure}

\begin{figure}
    \centering
  \includegraphics[scale=0.08,trim={5cm 10 15 0},clip]{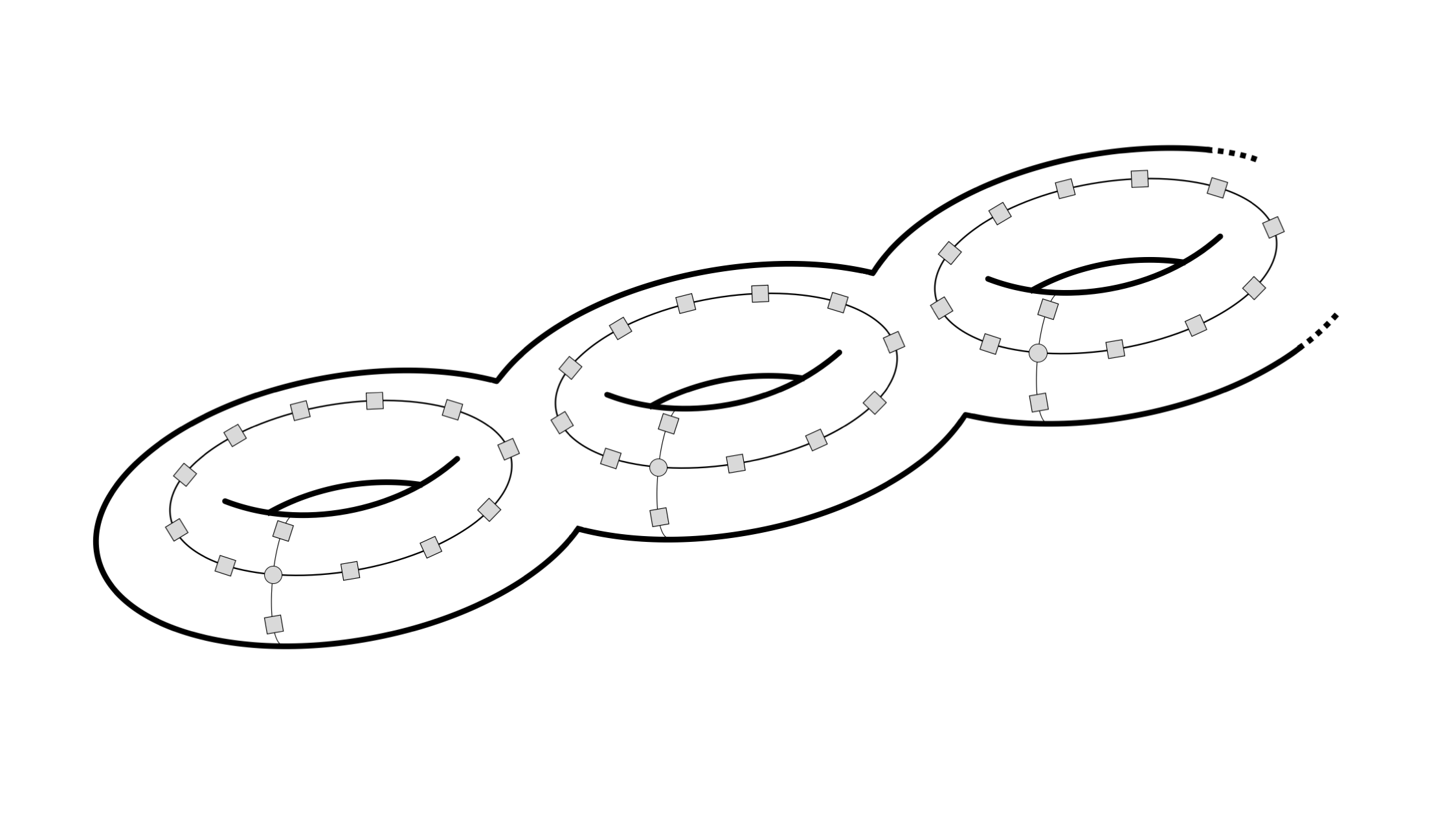}   
\caption{MPOs generated by $\M{W}$ and $\M{G}$ are inserted on each component torii. The $\M{G}$ tensor may differ between the torii.}
    \label{fig:HigherGenusMpo}
\end{figure}

We briefly describe how our results can be extended to higher genus manifolds. We exploit the fact that any 2D oriented manifold $M$ with genus $g$ is topologically equivalent to a manifold constructed from the connected sum $\#$ of a sphere with $g$ torii: $M\simeq S^2\#T^2\#\cdots\#T^2$. Furthermore, given a decomposition of $M$ into a connected sum of a sphere and torii, any cocycle $C$ can be written as:
\begin{align}\label{cocycledecompHG}
    C = \sum_{j=1}^{g} a_{j,x} L^{j,x} + \sum_{j=1}^{g} a_{j,y}L^{j,y} + \delta R.
\end{align}
Here, $L^{j,x}$ and $L^{j,y}$ denote generators of the non-trivial cocycles around the $j^\text{th}$ torus in the connected sum decomposition, and $\delta R$ gives a trivial cococyle. According to Eq.~\eqref{cocycledecompHG}, any cocycle $C$ is cohomologous to one of the form [see Fig.~\ref{fig:HigherGenusDecomp}]:
\begin{align}\label{cocycledecompHG2}
    \sum_{j=1}^{g} a_{j,x} L^{j,x} + \sum_{j=1}^{g} a_{j,y}L^{j,y}.
\end{align}

With this, we can now describe how to account for the Koszul signs from contraction on a genus $g$ manifold. As before, the Koszul signs from contractible loops can be taken care of by making a choice of $\eta$ and inserting Pauli $Z$ operators along the edges in $\eta$. As for non-trivial cocycles, we first decompose the cocycle as in Eq.~\eqref{cocycledecompHG}. Then, we identify the cohomologous cocycle given in Eq.~\eqref{cocycledecompHG2}, which differs by a trivial cocycle. (The difference in the Koszul sign between the cohomologous cocycles is already accounted for by the choice of $\eta$.) Using that the Koszul sign corresponds to a quadratic refinement of the intersection pairing [Eq.~\eqref{quadraticrefinement}], the computation of the Koszul signs for a cocycle in the form of Eq.~\eqref{cocycledecompHG2} reduces to a computation of the Koszul signs for the loops $L^{j,x}$ and $L^{j,y}$. This is because loops belonging to different torii have trivial intersection number: 
\begin{align}
   \langle L^{x,j}, L^{x,k}\rangle =  \langle L^{x,j}, L^{x,k}\rangle  =  \langle L^{x,j}, L^{x,k}\rangle  
   =0 \text{ mod }2,
\end{align}
for all $j \neq k$. Therefore, the problem is reduced to that of $g$ independent arbitrarily triangulated torii. The Koszul signs of non-contractible loops can be accounted for by inserting MPOs generated by $\M{W}$ and $\M{G}$ as in Fig.~\ref{fig:HigherGenusMpo}. A similar strategy as in section \ref{generaltriG} can be used to choose $\M{G}$ at each intersection of the MPOs.

\subsubsection{{Grading removal for the bosonized fPEPS}}

{Now, we return to the original problem of writing bosonized fPEPS as bPEPS. To compute the Koszul signs, we worked with a single layer of fermionic virtual legs, while a bosonized fPEPS has both the state layer and TNO layer of fermionic virtual legs. Therefore, we need to translate our results for accounting for Koszul signs back to the case of two layers of virtual legs.}

{To simplify the calculation of the Koszul sign, we noticed that [Eq.\eqref{spinless}], with the chosen ordering of the fermionic virtual legs, the pair of virtual legs $|a)_t|a')_s$ could be mapped to a spinless fermionic degree of freedom and a spin-1/2 via the isomorphism:
\begin{align}
    |a)_t|a')_s \rightarrow |a+a')|a\rangle.
\end{align}
Then, we worked only with the fermionic leg. Ultimately, we converted the fermionic legs $|a+a')$ to bosonic legs $|a+a'\rangle$ with the addition of $Z$ operators on certain edges. A $Z$ operator acting on $|a+a'\rangle$ correponds to acting with a parity operator on $|a+a')$ or the operator $P_s\otimes P_t$ on $|a)_t|a')_s$. Therefore, to replace the two layers of fermionic legs with two layers of bosonic legs: $|a)_t|a')_s \rightarrow |a\rangle_t|a'\rangle_s$, we see that we need to apply operators $Z_t \otimes Z_s$ at edges to account for Koszul signs. }

{In summary, we convert the two layers of fermionic legs (with the fixed internal ordering) to bosonic legs. Then, we insert $(Z_t\otimes Z_s)^{\eta_e}$ at every edge to account for the Koszul signs from contractible loops. To account for the Koszul signs from non-contractible loops, we modify the MPO so that $\M{W}$ in Eq.~\eqref{Wtensor} is: $|0\rangle (I_t \otimes I_s) \langle 0|+ |1\rangle (Z_t \otimes Z_s) \langle 1|$. }

\subsection{{Algorithm for bosonizing an fPEPS}}

{The following gives an algorithm for bosonizing an fPEPS on a torus and writing it explicitly as a bPEPS.}

{\begin{enumerate}
   \item Given a triangulated 2D manifold with branching structure, determine the singular vertices. Singular vertices are those that are $1$-vertices of $4m$ number of triangles. Pair singular vertices along convenient paths. The edges along these paths are the elements of $\eta$.
   \item Construct the tensors $\M{M_f}=\M{F}\bdot \M{T}$ and $\M{\bar{M}_f}=\M{\bar F}\bdot \M{\bar T}$. Rearrange the virtual indices to match the order shown in Eq.~\eqref{ftinorder}. Remove the grading of the virtual indices of $\M{M_f}$, $\M{\bar{M}_f}$, and $\Bt_\eta$. The resulting tensors are $\M{M_b}$, $\M{\bar{M}_b}$, and $\M{B_b}$, respectively. 
   \item Choose convenient generators of the non-contractible loops parallel to the $x$-axis and $y$-axis, say $L^x $ and $L^y$, and calculate $(-1)^{\alpha_x}= \hatsigma(L^x)/\sigma_\eta(L^x)$ and $(-1)^{\alpha_y}= \hatsigma(L^y)/\sigma_\eta(L^y)$. Determine the tensor $\M{G}$ using Eq.~\eqref{gab2}. 
   \item Insert factors of $Z_t \otimes Z_s$ on virtual legs corresponding to the edges in $\eta$. Insert tensors $\M{W}$ along convenient generators of non-contractible loops parallel to the $x$- and $y$-axes, and glue the MPOs at their intersection with the $\M{G}$ tensor calculated in the previous step.  Finally, contract $\M{M_b}$, $\M{\bar{M}_b}$, and $\M{B_b}$ with the inserted factors of $Z_t \otimes Z_s$ and the MPO generated by $\M{W}$ and $\M{G}$.  
\end{enumerate}}

\subsection{{Example of bosonizing an fPEPS}} \label{examplebosonizingfpeps}

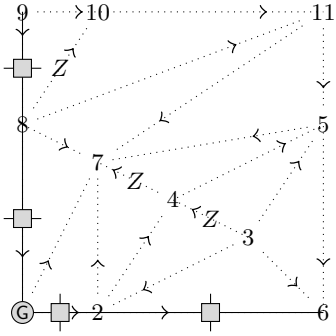
\begin{figure}
     \centering
    \begin{tikzpicture}[Sbase,scale=1]
    \randt;
  \path[](3)--node[]{$Z$}(4)--node[]{$Z$}(7);
  \path[](8)--node[]{$Z$}(10);
   \draw[](0,4)--(0,0)--(4,0);
\draw[] (-0.25,3.25)--node[Sq={0},scale=0.7]{}++(0.5,0);\draw[] (-0.25,1.25)--node[Sq={0},scale=0.7]{}++(0.5,0);
\draw[] (2.5,-0.25)--node[Sq={0},scale=0.7]{}++(0,.5);\draw[] (.5,-0.25)--node[Sq={0},scale=0.7]{}++(0,.5);
\node[Bcir,scale=1] at (0,0){$\M{G}$};
 \end{tikzpicture}
     \caption{  Choice of spin structure $\eta = \lbrace \langle 3,4\rangle,\langle 4,7 \rangle,\langle 8,10 \rangle  \rbrace$ and placement of the MPO generated by $\M{W}$ and $\M{G}$. The $Z^{\eta_e}$ operators shown represent the $(Z_t \otimes Z_s)^{\eta_e}$-operators that are inserted in the example of section \ref{examplebosonizingfpeps}.}
     \label{fig:etatorus}
 \end{figure}
 
{As an example, we bosonize the atomic insulator state $|\psi_{AI} )$ on a 2D torus, triangulated as shown in Fig.~\ref{fig:etatorus}. 
The tensor network representation has a vacuum tensor everywhere: $T_{000}=1, \bar{T}_{000}=1$ and all other components zero.} 

\vspace{2mm}

\noindent {\textbf{Step 1:} The singular vertices are $\langle3\rangle$, $\langle7\rangle$, $\langle8\rangle$ and $\langle10\rangle$. We pair them along the paths shown in Fig.~\ref{fig:etatorus}, so the spin structure $\eta$ is $\eta = \lbrace \langle 3,4\rangle,\langle 4,7 \rangle,\langle 8,10 \rangle  \rbrace$.}

\vspace{2mm}

\noindent {\textbf{Step 2:} We compute the tensors $\M{M_f}$, $\M{\bar{M}_f}$, and $\Bt_\eta$, order the legs according to Eq.~\eqref{ftinorder}, and replace the fermionic legs with bosonic legs to obtain: 
\begin{align}
 \M{M_b}=&\sum_{a,b,c}\delta_{a+b+c,0} 
     |c\rangle_{f_{01}}|0\rangle_{f'_{01}} |a\rangle_{f_{12}} |0\rangle_{f'_{12}} \langle 0|_{f'_{02}} \langle b|_{f_{02}} \\
  \M{\bar{M}_b} =& \sum_{a,b,c} \delta_{a+b+c,0} 
    |b\rangle_{f_{02}} |0\rangle_{f'_{02}}  \langle 0|_{f'_{12}} \langle a|_{f_{12}} \langle 0|_{f'_{01}} \langle c|_{f_{01}}  \\ \label{Bbdef}
    \M{B_b}=&\sum_{a} Z_e^{\eta_e} |a\rangle_{e} |a\rangle_{e} \langle a|_{e}.
\end{align}}

\vspace{2mm}

\noindent {\textbf{Step 3:} We use $L^x$ in the upper right corner of Fig.~\ref{fig:homoexample} to calculate:
\begin{align}
    (-1)^{\alpha_x} = {\hatsigma(L^x)}/{\sigma_\eta(L^x)}=1,
\end{align}
and we use $L^y$ in the lower left corner of Fig.~\ref{fig:homoexample} to calculate:
\begin{align}
     (-1)^{\alpha_y} = {\hatsigma(L^y)}/{\sigma_\eta(L^y)}=1.
\end{align}
$\alpha_x=\alpha_y=0$, so $\M{G}$  has components $ G_{ab}=(-1)^{ab}$.}

\vspace{2mm}

\noindent {\textbf{Step 4:} We insert the operator $(Z_t \otimes Z_s)^{\eta_e}$ at each edge and insert the MPO generated by $\M{W}$ and $\M{G}$ as in Fig.~\ref{fig:etatorus}. The state layer has only $0$ indices, so the factors of $Z_s$ do not affect the state. We freely remove all of the state layer indices. The factors of $Z_t^{\eta_e}$ cancel with the factor of $Z_e^{\eta_e}$ in the definition of $\M{B_b}$ in Eq.~\ref{Bbdef}.}

\vspace{2mm}

{The result is a bPEPS generated by the tensors:
\begin{align}
\M{M_b} =& \sum_{a,b}
 |a+b\rangle_{f_{01}}|a\rangle_{f_{12}} \langle b|_{f_{02}}\\
\M{\bar{M}_b} =& \sum_{a,b} |b\rangle_{f_{02}} \langle a|_{f_{12}}\langle a+b|_{f_{01}} \\
\M{B_b} =& \sum_{a} |a\rangle_{e} |a\rangle_{e} \langle a|_{e},
\end{align}
with the MPO generated by $\M{W}$ and $\M{G}$ inserted along the $x$- and $y$-axes (Fig.~\ref{fig:etatorus}). The bPEPS is in a ground state of Kitaev's toric code Hamiltonian. One way to see this is to notice that $\M{M_b}$ and $\M{\bar{M}_b}$ have the MPO symmetries:
\begin{align}
    Z_{f_{01}}Z_{f_{12}}Z_{f_{02}}\bdot \M{M_b} &= \M{M_b} \nonumber \\
    Z_{f_{01}}Z_{f_{12}}Z_{f_{02}}\bdot \M{\bar{M}_b} &= \M{\bar{M}_b},
\end{align}
indicative of the toric code phase. Moreover, the symmetry implies that $\M{M_b}$ and $\M{\bar{M}_b}$ have an even number of down spins. Since $\M{B_b}$ copies the virtual legs to the physical leg, the ground state is a superposition of all loops (on the dual lattice) of down spins. The tensor $\M{G}$ dictates the particular toric code ground state. For the branching structure in Fig.~\ref{fig:etatorus}, the  ground state is an equal amplitude superposition of all loops (on the dual lattice) of down spins acted on by the following operator:
\begin{align}
    \frac{1}{2}\left( 1-\prod_e Z_e^{\eta^x} - \prod_e Z_e^{\eta^y} + \prod_e Z^{\eta^{xy}} \right).
\end{align}}
{While the example of an atomic insulator state is rather simple, we expect the algorithm to extend naturally to more complicated problems.}

\section{Conclusion and future work}

Tensor networks provide a powerful framework for studying quantum many-body systems.  Their simple parameterization allows for efficient numerical computations, and their diagrammatic representation elucidates the structure of entanglement in quantum states. More abstractly, tensor networks provide a uniform language for discussing quantum many-body systems. Here, we have extended the formalism of tensor networks to exact bosonization dualities. In particular, we have constructed a TNO that implements the two dimensional bosonization duality first discussed in Ref. [\onlinecite{ChenYA17}]. Furthermore, our bosonization TNO can be applied directly to fermionic tensor network states, thus defining bosonization at the level of quantum states. 

A critical step of our bosonization procedure is to express the bosonized state as an explicit bosonic tensor network. To this end, we described how to account for Koszul signs accrued in contracting fermionic tensor networks, and we constructed matrix product operators to be placed along non-trivial cycles for this purpose. As a result, our bosonization duality at the level of states can be applied to fermionic systems on arbitrary triangulations of 2D manifolds without a boundary. 

{We would also like to emphasize that the calculation of Koszul signs in section \ref{sec: bosonization of fPEPS} has potential for applications outside of the bosonization of fPEPS. In fact, the calculation applies to the contraction of \textit{any}\footnote{Assuming the 2D fPEPS is defined on a triangulation of an orientable 2D manifold.} 2D fPEPS generated by fermion parity even local tensors and without fermionic physical legs. In particular, it may be useful for efficiently evaluating the overlap between two fPEPS. Explicitly, one can use the technology developed in section \ref{sec: bosonization of fPEPS} to replace the fermionic virtual legs with bosonic virtual legs and account for the Koszul signs. Notably, for a regular triangular lattice or square lattice, our results show that the fermionic virtual legs can freely be replaced with bosonic virtual legs as long as the MPO generated by $\M{W}$ and $\M{G}$ is inserted before closing the tensor network on a manifold with non-trivial genus.}

Directions for future work include generalizing our bosonization duality, identifying tensor network representations of wider classes of dualities, and utilizing the bosonization TNO to study fermionic systems.  A natural generalization is to develop a 3D bosonization duality at the level of quantum states. Recently, [\onlinecite{Chen18}] presented a bosonization duality in 3D, and we expect that this duality admits a tensor network representation. Formulating a TNO for the 3D duality might also make it clear how to bosonize in dimensions greater than three. Another possible generalization is to extend our bosonization duality to manifolds with boundaries. 

It would be interesting to construct tensor network representations for other operator-level dualities. Ref. [\onlinecite{Djordje18}] describes a duality for parafermionization in 2D, in which a system of constrained spins is dual to a system of parafermions.  Formulating a corresponding TNO would require a careful understanding of paraspin-structure -- a generalization of spin-structure to parafermions. We also expect that recently developed dualities for gauging subsytem symmetries can be naturally formulated in terms of tensor networks [\onlinecite{Shirley19}]. Further, it would be nice to interpret the results of Ref. [\onlinecite{Kubica18}] using tensor networks.

We anticipate that our bosonization procedure will be useful for studying fermionic topological orders. Beginning with a fermionic tensor network state, one can apply the bosonization procedure outlined in the text and subsequently analyze the topological order of the bosonic state using the myriad of techniques developed to study bosonic topological orders.  In addition, MPO symmetries of the fermionic state can be tracked through the bosonization procedure to obtain the MPO symmetries of the bosonic system. 

Going the other direction, the Hermitian conjugate of the bosonization TNO can be applied to a bosonic state to obtain a fermionic tensor network state. Two dimensional (non-chiral) bosonic topological orders have been well studied using tensor networks, so we can use the known tensor network representations of fixed point states to construct fixed point states for fermionic topological phases. Furthermore, the MPO symmetries of the bosonic system descend to MPO symmetries of the fermionic system. While fixed point states for intrinsic fermionic topological orders and fermionic symmetry protected topological phases were identified in Refs. [\onlinecite{Bultinck17},\onlinecite{Bultinck17a}], our bosonization procedure gives a means for constructing and studying fixed point states for fermionic symmetry-enriched topological orders as well. 

\vspace{0.1in}
\noindent{\it Acknowledgements -- } SS would like to acknowledge Alex Turzillo for valuable discussions about related work on $\Z_2$-graded tensors. SS and TE also thank Dave Aasen for helpful conversations about fermion condensation. LF is supported by NSF DMR-1519579.

\appendix

\section{\texorpdfstring{$\mathbb{Z}_2$}{Z2}-graded tensor representation of Majorana operators}
\label{app: Majorana tensors}

In this appendix, we show that the tensors introduced in section \ref{sec:gradedTN} and rewritten here:
\begin{align} 
    \begin{tikzpicture}[Sbase]
        \draw[red,OES={AR2=0.5}] (-1,0)node[below,black]{$e$}--(0,0)--(1,0)node[below,black]{$e$}; \node[OP] {$\gamma$};
    \end{tikzpicture} &=  \sum_{a}|a+1)_e(a|_e  \\ 
    \begin{tikzpicture}[Sbase]
        \draw[red,OES={AR2=0.5}] (-1,0)node[below,black]{$e$}--(0,0)--(1,0)node[below,black]{$e$}; \node[OP] {$\gammabar$};
    \end{tikzpicture} 
    &= \sum_{a} (-1)^{a} i |a+1)_e(a|_e,
\end{align}
are indeed good representations of Majorana operators. To do so, we explicitly show that the algebraic relations of the Majorana tensors match those of the Majorana operators introduced in section \ref{subsec: representing fermions}.

We begin by analyzing the algebra at a single site $e$. To this end, we apply the Majorana tensors to an arbitrary state $\M{A}$ at site $e$:
\begin{align}
    \begin{tikzpicture}[Sbase]
      \draw[red,-<-=0.5](0,0)node[below,black]{$e$}--++(1,0); \node[Tsq] at (1,0) {$\M{A}$};
    \end{tikzpicture}  \equiv   \sum_{a} A_{a} |a)_{e}\nonumber.
\end{align}
According to section \ref{subsec: representing fermions}, at site $e$, $\gamma_e^2=\gammabar_e^2=1$. Applying a single $\gamma$ tensor to $\M{A}$, we find: 
\begin{align}
\begin{tikzpicture}[Sbase]
      \draw[red,-<-=0.5](0,0)node[below,black]{$ $}--++(1,0); \node[Tsq] at (1,0) {$\M{A}$};
      \draw[red,-<-=0.5](-1,0)node[below,black]{$e$}--++(1,0);
      \node[OP] at (0,0) {$\gamma$};
    \end{tikzpicture} 
    &\equiv \sum_{b} |b+1)_e(b|^\ct_e \sum_{a} A_{a} |a)^\ct_{e} \\ \nonumber
    &= \sum_{a} A_{a} |a+1)_{e}.
\end{align}
Then, acting with another $\gamma$ on $\gamma \M{A}$ gives: 
\begin{align}
\begin{tikzpicture}[Sbase]
     \draw[red,OES={AR2=0.5}](-1,0)--++(.5,0)--++(1,0)--++(0.5,0); \node[Tsq] at (1,0) {$\M{A}$};
      \draw[red,-<-=0.5](-1,0)node[below,black]{$e$}--++(1,0);
      \node[OP] at (-0.5,0) {$\gamma$}; \node[OP] at (0.25,0) {$\gamma$};
      \node[red,scale=0.7] at (-0.5,-0.4) {2};
      \node[red,scale=0.7] at (0.25,-0.4) {1};
    \end{tikzpicture} 
     &\equiv \sum_{b} |b+1)_e(b|^\ct_e \sum_{a} A_{a} |a+1)^\ct_{e} \\ \nonumber
     &= \sum_{a} A_a |a)_e.
\end{align}
Since $\M{A}$ was arbitrary, we see that $\gamma$ contracted in succession with another $\gamma$ acts as the identity. The relation $\gammabar_e^2=1$, can be shown similarly.

Next, we show that the relation $\gammabar_e \gamma_e=-\gamma_e \gammabar_e$ is represented by the Majorana tensors. $\gammabar \gamma \M{A}$ is: 
\begin{align} \label{gammagammabarstats1}
\begin{tikzpicture}[Sbase]
     \draw[red,OES={AR2=0.5}](-1,0)--++(.5,0)--++(1,0)--++(0.5,0); \node[Tsq] at (1,0) {$\M{A}$};
      \draw[red,-<-=0.5](-1,0)node[below,black]{$e$}--++(1,0);
      \node[OP] at (-0.5,0) {$\gammabar$}; \node[OP] at (0.25,0) {$\gamma$};
        \node[red,scale=0.7] at (-0.5,-0.4) {2};
      \node[red,scale=0.7] at (0.25,-0.4) {1};
    \end{tikzpicture}
    \equiv & \sum_{c} (-1)^c i|c+1)_e(c|^{\ct_2}_e \nonumber\\
    &\sum_{b} |b+1)^{\ct_2}_e(b|^{\ct_1}_e \sum_{a} A_a |a)^{\ct_1}_e\nonumber \\ 
    = & - \sum_{a} (-1)^{a} i |a)_e,
\end{align}
while $\gamma \gammabar \M{A}$ is:
\begin{align} \label{gammagammabarstats2}
\begin{tikzpicture}[Sbase]
      \draw[red,OES={AR2=0.5}](-1,0)--++(.5,0)--++(1,0)--++(0.5,0); \node[Tsq] at (1,0) {$\M{A}$};
      \draw[red,-<-=0.5](-1,0)node[below,black]{$e$}--++(1,0);
      \node[OP] at (-0.5,0) {$\gamma$}; \node[OP] at (0.25,0) {$\gammabar$};
        \node[red,scale=0.7] at (-0.5,-0.4) {2};
      \node[red,scale=0.7] at (0.25,-0.4) {1};
    \end{tikzpicture}
    \equiv & \sum_{b} |b+1)_e(b|^{\ct_2}_e   \nonumber\\
    &\sum_{c} (-1)^c i|c+1)^{\ct_2}_e(c|^{\ct_1}_e\sum_{a} A_a |a)^{\ct_1}_e \nonumber\\ 
    =& \sum_{a}(-1)^{a} i |a)_e.
\end{align}
Comparing \eqref{gammagammabarstats1} and \eqref{gammagammabarstats2}, we see that the tensors $\gamma$ and $\gammabar$ capture the relation $\gamma_e \gammabar_e=-\gammabar_e \gamma_e$. It is important to note that in going from \eqref{gammagammabarstats1} to \eqref{gammagammabarstats2}, the contractions are different. The difference in sign is not simply due to the odd grading of $\gamma$ and $\gammabar$.

Now, we consider the algebraic relations of Majorana operators at different sites. Majorana operators acting at different sites anti-commute, so we must show that switching the order of contraction, for Majorana tensors applied to different legs yields a sign. This property follows from the odd grading of the Majorana tensors. We write an arbitrary state $|\psi)$ with $N$ two dimensional fermionic site Hilbert spaces as:
\begin{align}
  &\begin{tikzpicture}[Sbase]
    \foreach \j in {1,2,4,5}
    { \draw[red,->-=0.75](\j,0.5)--++(0,1);  }
    \draw[fill=white,draw=black] (0,0) rectangle (6,0.5);
    \draw[dotted,red]  (2.75,1)--+(0.5,0); 
    \node at (3,0.25) {$|\psi)$};
    \end{tikzpicture} \nonumber \\
  & =\sum_{a_1,\ldots,a_N} \Psi_{a_1,\ldots,a_N} |a_1)_{e^1}\ldots|a_N)_{e^N}.
\end{align}
First acting with $\gamma$ at site $e^j$ and second acting with $\gamma$ at site $e^i$, we have:
\begin{widetext}
\begin{align} \label{gammaigammajpsi}
 &\begin{tikzpicture}[Sbase]
    \foreach \j in {1,2,4,5}
    { \draw[red,->-=0.75](\j,0.5)--++(0,1);  }
    \draw[fill=white,draw=black] (0,0) rectangle (6,0.5);
    \draw[dotted,red]  (2.75,1)--+(0.5,0); 
    \node at (3,0.25) {$|\psi)$};
    \node[OP] at (2,1) {$\gamma_i$};
    \node[red,scale=0.7] at (2.5,1) {$2$};
     \node[OP] at (4,1) {$\gamma_j$};
     \node[red,scale=0.7] at (4.5,1) {$1$};
    \end{tikzpicture} \nonumber\\
  &\equiv\left( \sum_{c} |c+1)_{e^i}(c|^{\ct_2}_{e^i}\right) \left(\sum_{b} |b+1)_{e^j}(b|^{\ct_1}_{e^j}\right) 
  \sum_{a_1,\ldots,a_N} \Psi_{a_1,\ldots,a_N} |a_1)_{e^1}\ldots|a_i)^{\ct_1}_{e^i}\ldots|a_j)^{\ct_2}_{e^j}\ldots|a_N)_{e^N} \\ \label{psiprimedef}
  &=\sum_{a_1,\ldots,a_N} \Psi'_{a_1,\ldots,a_N} |a_1)_{e^1}\ldots\left(\sum_{c} |c+1)_{e^i}(c|^{\ct_2}_{e^i}\right)|a_i)^{\ct_2}_{e^i}\ldots \left(\sum_{b} |b+1)_{e^j}(b|^{\ct_1}_{e^j}\right)|a_j)^{\ct_1}_{e^j}\ldots|a_N)_{e^N} \\ \label{switchcontraction}
    &=\sum_{a_1,\ldots,a_N} \Psi'_{a_1,\ldots,a_N} |a_1)_{e^1}\ldots\left(\sum_{c} |c+1)_{e^i}(c|^{\ct_1}_{e^i}\right)|a_i)^{\ct_1}_{e^i}\ldots \left(\sum_{b} |b+1)_{e^j}(b|^{\ct_2}_{e^j}\right)|a_j)^{\ct_2}_{e^j}\ldots|a_N)_{e^N}\\ \label{movemajoranaleft}
    &=\left(\sum_{c} |c+1)_{e^i}(c|^{\ct_1}_{e^i}\right) \left(\sum_{b} |b+1)_{e^j}(b|^{\ct_2}_{e^j}\right) \sum_{a_1,\ldots,a_N} \Psi_{a_1,\ldots,a_N} |a_1)_{e^1}\ldots|a_i)^{\ct_1}_{e^i}\ldots |a_j)^{\ct_2}_{e^j}\ldots|a_N)_{e^N} \\ \label{Majoranainterchange}
    &=- \left(\sum_{b} |b+1)_{e^j}(b|^{\ct_2}_{e^j}\right) \left(\sum_{c} |c+1)_{e^i}(c|^{\ct_1}_{e^i}\right) \sum_{a_1,\ldots,a_N} \Psi_{a_1,\ldots,a_N} |a_1)_{e^1}\ldots|a_i)^{\ct_1}_{e^i}\ldots |a_j)^{\ct_2}_{e^j}\ldots|a_N)_{e^N}\\
    &=-  \begin{tikzpicture}[Sbase]
    \foreach \j in {1,2,4,5}
    { \draw[red,->-=0.75](\j,0.5)--++(0,1);  }
    \draw[fill=white,draw=black] (0,0) rectangle (6,0.5);
    \draw[dotted,red]  (2.75,1)--+(0.5,0); 
    \node at (3,0.25) {$|\psi)$};
    \node[OP] at (2,1) {$\gamma_i$};
    \node[red,scale=0.7] at (2.5,1) {$1$};
     \node[OP] at (4,1) {$\gamma_j$};
     \node[red,scale=0.7] at (4.5,1) {$2$};
    \end{tikzpicture}
\end{align}
\end{widetext}
In \eqref{psiprimedef}, we have absorbed the signs from moving the Majorana tensors past odd vectors into the coefficient $\Psi'$. After moving the Majorana tensors, the ordering of the contractions are switched [line \eqref{switchcontraction}]. Lastly, we have moved the Majorana tensors to the left and interchanged their order [\eqref{movemajoranaleft} and \eqref{Majoranainterchange}].  The contraction $\ct_1$ is then to the right of $\ct_2$, and we can read line \eqref{Majoranainterchange} as first a $\gamma$ acts on site $e^i$ then a $\gamma$ acts on site $e^j$. We thus have that $\gamma$ tensors acting on different sites anti-commute. Looking at \eqref{gammaigammajpsi} and \eqref{Majoranainterchange}, we see that the difference in sign is purely a consequence of the odd parity of $\gamma$. Indeed, more generally, $\gamma$ tensors acting on different legs of an arbitrary tensor will anti-commute. 
An analogous calculation for $\gammabar$ tensors or a mixture of $\gamma$ and $\gammabar$ tensors shows that they anti-commute when acting on different legs.  

\section{Calculation of the Koszul sign for a single loop}
\label{app:proofofprop1}

Here, we provide a proof of Prop.~\ref{prop:sigmaC}. We choose an arbitrary edge of the loop $L$ to be $e^0$ and label the $j^{\text{th}}$ edge of the path as $e^j$. Starting with the triangle following $e^0$, along the orientation of $L$, we denote the $j^\text{th}$ triangle on the path as $f^j$. For each triangle $f^j$, we have a specific basis tensor $\M{Q}^L_{f^j}$ from the set $\mathcal{Q}[f]$ or $\mathcal{\bar Q}[f]$ [see Eq.~\eqref{basisset}] depending on the orientation of $f^j$. The sign to be calculated is then:
\begin{align} \label{Tf0Tfn}
    \hatsigma(L)= \text{tr}\left[\M{Q}_{f^0}^L\bdot\M{Q}_{f^1}^L\bdot \ldots \bdot \M{Q}_{f^n}^L\right].
\end{align}
As already mentioned, there are six possible ways for the loop to cross a triangle. We list the six possible crossings of a positive triangle and its associated basis tensors $\M{Q}_{f}^L$ (ignoring legs with even parity): 
\begin{align}
    \begin{tikzpicture}[Sc=0.6]
     \post;
     \node at (-60:1) {$e^j$}; \path  (0:3)++(-120:1) node{$e^{j+1}$};
\draw[-<-=0.75,blue](C)--+(210:{sqrt(3)});
\draw[->-=0.75,blue](C)--+(-30:{sqrt(3)});
    \end{tikzpicture}\equiv &|1)_{e^{j}}|1)_{e^{j+1}}=-i|1)_{e^{j}}\big[i|1)_{e^{j+1}}\big] ,\nonumber\\  \begin{tikzpicture}[Sc=0.6]
  \post;  \node at (-60:1) {$e^{j+1}$}; \path  (0:3)++(-120:1) node{$e^j$};
\draw[->-=0.75,blue](C)--+(210:{sqrt(3)});
\draw[-<-=0.75,blue](C)--+(-30:{sqrt(3)});
    \end{tikzpicture}\equiv& -|1)_{e^{j}}|1)_{e^{j+1}}=i|1)_{e^{j}}\big[i|1)_{e^{j+1}}\big]\nonumber\\
\begin{tikzpicture}[Sc=0.6]
  \post;\node at (0:1) {$e^j$}; \path  (0:3)++(-120:1) node{$e^{j+1}$};
\draw[-<-=0.75,blue](C)--+(90:{sqrt(3)});
\draw[->-=0.75,blue](C)--+(-30:{sqrt(3)});
    \end{tikzpicture}\equiv& |1)_{e^{j+1}}(1|_{e^{j}}={\big[}i(1|_{e^{j}}{\big]}{\big[}i|1)_{e^{j+1}}{\big]},\nonumber\\ \begin{tikzpicture}[Sc=0.6]
  \post;\node at (0:1) {$e^{j+1}$}; \path  (0:3)++(-120:1) node{$e^j$};
\draw[->-=0.75,blue](C)--+(90:{sqrt(3)});
\draw[-<-=0.75,blue](C)--+(-30:{sqrt(3)});
    \end{tikzpicture}\equiv& |1)_{e^{j}}(1|_{e^{j+1}}\nonumber\\
\begin{tikzpicture}[Sc=0.6]
  \post;\node at (0:1) {$e^j$}; \path  (-60:1) node{$e^{j+1}$};
\draw[-<-=0.75,blue](C)--+(90:{sqrt(3)});
\draw[->-=0.75,blue](C)--+(210:{sqrt(3)});
    \end{tikzpicture}\equiv& |1)_{e^{j+1}}(1|_{e^{j}}=\big[i(1|_{e^{j}}\big]\big[i|1)_{e^{j+1}}\big] \nonumber\\ \begin{tikzpicture}[Sc=0.6]
  \post;\node at (0:1) {$e^{j+1}$}; \path  (-60:1) node{$e^j$};
\draw[->-=0.75,blue](C)--+(90:{sqrt(3)});
\draw[-<-=0.75,blue](C)--+(210:{sqrt(3)});
    \end{tikzpicture}\equiv &|1)_{e^{j}}(1|_{e^{j+1}}.
\end{align}
The blue arrows denote the loop $L$, which enters at edge $e^j$ and exists from edge $e^{j+1}$. Notice that when the loop goes around a 0-vertex or a 2-vertex (bottom four pictures), both edges point to the same side of $L$, but when the loop goes around a 1-vertex (top two pictures), a right-left transition of edge directions occurs.  The relation between the diagrams and the tensors can be summarized as follows:
\begin{enumerate}[label=(\roman*)]
    \item Edges $e^j$ pointing to the \textit{right} of $L$ contribute $(1|_{e^j}$ to the tensor $\M{Q}^L_{f^{j-1}}$ and  $|1)_{e^j}$ to the tensor $\M{Q}^L_{f^{j}}$. 
    \item Edges $e^j$ pointing to the \textit{left} of $L$ contribute $i|1)_{e^j}$  to the tensor $\M{Q}^L_{f^{j-1}}$ and $i(1|_{e^j}$  to the tensor  $\M{Q}^L_{f^{j}}$. 
    \item If $f_L$ is an $1$-vertex, then we accrue an additional phase $i^{\delta_{f_L \in \bar{l}_L}}i^{-\delta_{f_L \in \bar{r}_L}}$, where $\delta_{f_L \in \bar{l}_L}=1$ if $f_L \in \bar{l}_L$ and $\delta_{f_L \in \bar{l}_L}=0$ otherwise. $\delta_{f_L \in \bar{r}_L}$ is defined similarly.  Therefore, if $f_L$ is a 1-vertex, we accrue a phase $i$, if it lies to the left of $L$ or a phase $-i$, if it lies to the right of $L$.
\end{enumerate}

Negatively oriented triangles also have 6 possible crossings. It can be checked that the same rules as in (i)-(iii) above apply to negative triangles. For example, consider the following crossing on a negative triangle:
\begin{align}\label{negTex}
    \begin{tikzpicture}[Sc=0.8]
     \negt;
     \node at (60:1) {$e^j$}; \path  (0:3)++(120:1) node{$e^{j+1}$};
\draw[-<-=0.75,blue](C)--+(-210:{sqrt(3)});
\draw[->-=0.75,blue](C)--+(30:{sqrt(3)});
    \end{tikzpicture}&\equiv (1|_{e^{j+1}}(1|_{e^{j}},
\end{align}
where the RHS is an element of \eqref{basisset} (ignoring even parity legs). Now, we verify that the rules (i)-(iii) yield the RHS of Eq.~\eqref{negTex}. Rule (ii) implies $e^j$ contributes $i(1|_{e^j}$, rule (i) implies edge $e^{j+1}$ contributes $(1|_{e^{j+1}}$, and finally, rule (iii) implies that the $f_L$ vertex contributes an $i$ phase. Putting it together, we get the tensor $ i(1|_{e^j} (1|_{e^{j+1}} i=(1|_{e^{j+1}}(1|_{e^{j}}$, which is indeed the RHS of Eq.~\eqref{negTex}.  The other five cases of crossing across negatively oriented triangles can be checked similarly. 

With this, we calculate the sign in Eq.~\eqref{Tf0Tfn}. We consider the contraction of tensors $ \M{Q}_{f^{j-1}}^L$ and $ \M{Q}_{f^{j}}^L$ at the edge $e^j$. If $e^j$ points to the the right of $L$, then, according to rule (i), $ \M{Q}_{f^{j-1}}^L$ has $(1|_{e^j}$ and $\M{Q}_{f^{j}}^L$ has  $|1)_{e^j}$. No Koszul sign is produced in contraction at $e^j$  because $(1|_{e^j}\bdot |1)_{e^j}=1$. Similarly, if $e^j$ points to the the left of $L$, then, according to rule (ii), $ \M{Q}_{f^{j-1}}^L$ has $i|1)_{e^j}$ and $\M{Q}_{f^{j}}^L$ has  $i(1|_{e^j}$, and again no Koszul sign is produced: $ i|1)_{e^j}\bdot i(1|_{e^j}=1$. The remaining sources of signs are triangles that contribute a sign $i^{\delta_{f_L \in \bar{l}_L}}i^{-\delta_{f_L \in \bar{r}_L}}$ according to rule (iii), and the overall $-1$ supertrace sign that comes from contracting the first and last indices in Eq.~\eqref{Tf0Tfn}. Therefore, the total sign is:
\begin{align}
    \hatsigma(L) = & - \prod_{f_L} i^{\delta_{f_L \in \bar{l}_L}}i^{-\delta_{f_L \in \bar{r}_L}} \nonumber\\
    =& -i^{\boldsymbol{(}n(\bar{l}_L)- n(\bar{r}_L)\boldsymbol{)}}=-(-1)^{\frac{1}{2}\boldsymbol{(}n(\bar{l}_L)- n(\bar{r}_L)\boldsymbol{)}}.
\end{align}
Note that $\hatsigma(C)$ is always $\pm 1$, because the total number of transition points $n(\bar{l}_L)+n(\bar{r}_L)$ has to be even. This implies $n(\bar{l}_L)- n(\bar{r}_L)$ is even as well. 

\section{Tensor Network Bosonization in 1D} \label{operatorduality1d}

For completeness, we give a detailed description of the TNO representation of bosonization in 1D. To start, we present 1D bosonization as a map of local fermionic operators to local bosonic operators.  


\subsection{Review of 1D bosonization} \label{reviewbosonization}
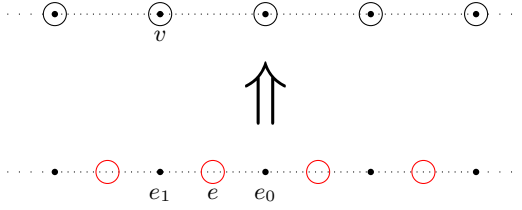
\begin{figure}
    \centering
\begin{tikzpicture}[scale=0.7]
 \draw[dotted](0,0)edge[loosely dotted]+(-1,0)--(8,0)edge[loosely dotted]+(1,0);
  \draw[dotted](0,-3)edge[loosely dotted]+(-1,0)--++(8,0)edge[loosely dotted]+(1,0);
 \foreach[count=\k from 1] \j in {0,2,...,8} {
 \node[circle,draw=black] at (\j,0) {};
 \ifthenelse{\j = 8}{}{\node[circle,draw=red] at (\j+1,-3) {}};
 \node[circle,inner sep=0.3mm,fill] at (\j,0) {};
 \node[circle,inner sep=0.3mm,fill] at (\j,-3) {};}
\node[below=1mm] at (2,0) {$ {v} $}; 
\node[below=3pt] at (3,-3) {$e$};
\node[below=3pt] at (2,-3) {$e_1$};\node[below=3pt] at (4,-3) {$e_0$};
 \node[scale=3] at (4,-1.5) {$ \Uparrow$};
\end{tikzpicture} 
\caption{The bosonization duality is a map from a fermionic system to a bosonic system. In the fermionic system there is a spinless complex fermion degree of freedom (red circles) at each edge $e$. In the bosonic system there is a spin-1/2 at each vertex $v$.}  
    \label{fig:1d chain}
\end{figure}
On the fermionic side of the duality, we consider a one dimensional lattice with a spinless complex fermion at each edge, as pictured in Fig.~\ref{fig:1d chain}.  The complex fermion at edge $e$ may be described using the familiar fermionic creation and annihilation operators: $c^{\dagger}_{e}$, $c_{e}$. These generate the full fermionic operator algebra at $e$. However, it will be convenient to instead work with Majorana operators, $\gamma_e, \gammabar_e$, as discussed in section \ref{subsec: representing fermions}. 

To ensure the bosonization duality maps local operators to local operators, we define the duality on a subset of the full fermionic operator algebra - the subalgebra of fermion parity even operators $\mathcal{E}$. The fermion parity even operators are those that commute with the global fermion parity operator $\prod_{e}P_{e}$, where $P_{e}$ is the fermion parity at the edge $e$. 
$\mathcal{E}$ can be generated by two types of operators: fermion parity $P_{e}$ at each edge and the hopping operators $S_v$ at each vertex $v$. The hopping operators transfer fermion parity between edges and are defined by:
\begin{align}
    S_v\equiv i \gamma_{{L_v}} \bar \gamma_{{R_v}},
\end{align}
with ${L_v}$ and ${R_v}$ the edge to the left and right of vertex $v$, respectively. The hopping operators are mutually commuting and commute with all parity operators besides the neighboring two, i.e.:
\begin{align}\label{1SPcom}
  S_v P_{e} =& (-1)^{\delta_{v \subset e}}  P_eS_v,
\end{align}
where $\delta_{v \subset e}=1$ if vertex $v$ is at one of the endpoints of the edge $e$ and $\delta_{v \subset e}=0$ otherwise.
With open boundary conditions, the set of fermion parity operators and hopping operators are independent.  However on a closed manifold they satisfy the relation:
\begin{align}
    \prod_v S_v \prod_e P_e = -1
\end{align}

On the bosonic side of the duality we have a spin-1/2 at each vertex (see Fig.~\ref{fig:1d chain}).  The operator algebra of the spin-1/2 at vertex $v$ can be generated by the Pauli operators: $X_v$, $Z_v$. Thus, the set of $X_v$ and $Z_v$ for all vertices generates the full bosonic operator algebra, which we denote as $\mathcal{A}$.

We now define the duality map $\mathfrak{D}: \mathcal{E} \to \mathcal{A}$ on the generators of $\mathcal{E}$:
\begin{align} \label{D1def}
         \mathfrak{D} (P_{e})= & Z_{e_0}Z_{e_1} \nonumber \\
         \mathfrak{D} (S_{v}) = & X_{v}.
\end{align}
where $e_0$ and $e_1$ denote the vertices at the endpoints of $e$ such that $e$ points from $e_0$ to $e_1$ (Fig.~\ref{fig:1d chain}). 
$\mathfrak{D}$ is an injective homomorphism from $\mathcal{E}$ to $\mathcal{A}$ so that for $A_1,A_2 \in \mathcal{E}$:
\begin{align}
    \mathfrak{D}(A_1+A_2) = & \mathfrak{D}(A_1) +\mathfrak{D}(A_2) \nonumber \\
        \mathfrak{D}(A_1 A_2) = & \mathfrak{D}(A_1) \mathfrak{D}(A_2). 
\end{align}
One can check that $\mathfrak{D}$ preserves the commutation relations in \eqref{1SPcom}.

{Note that the bosonization duality in Eq. \eqref{D1def} is not the usual Jordan-Wigner transformation, defined, for example, in Ref. [\onlinecite{Laumann09}]. 
$\mathfrak{D}$ is instead the composition of the familiar Jordan-Wigner transformation (restricted to $\mathcal{E}$) with the Kramers-Wannier duality.  We have chosen the duality $\mathfrak{D}$ to define bosonization, because it is locality preserving and more naturally relates to the $2$D bosonization in section \ref{2Dbosonization}.}

 
To translate the operator duality defined in \eqref{D1def} to a TNO, we employ the formalism of {$\Z_2$-graded Hilbert spaces} and {graded tensor products}.

\subsection{TNO representation of the duality}
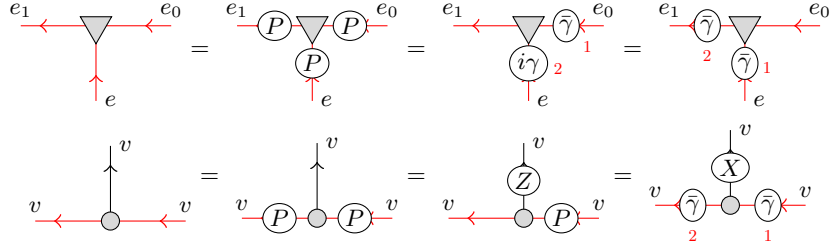
\begin{figure*}
    \centering
  \begin{tikzpicture}[Sbase]
  \Fer; 
  \end{tikzpicture}= \begin{tikzpicture}[Sbase]
  \Fer; \node[OP] at (1/2,0) {$P$}; \node[OP] at (3/2,0) {$P$}; \node[OP] at (1,-1/2) {$P$};
  \end{tikzpicture}= \begin{tikzpicture}[Sbase]
\Fer;\node[OP] (e0) at (3/2,0) {$\gammabar $}; \node[OP] (e) at (1,-1/2) {$i\gamma$};  \path (3/2,0)++(-45:0.4) node[red,scale=0.7]{1};\path (1,-1/2)++(-10:0.4) node[red,scale=0.7]{2};
  \end{tikzpicture}= \begin{tikzpicture}[Sbase]
  \Fer; \node[OP] (e1) at (1/2,0) {$ \gammabar$};\node[OP] (e) at (1,-1/2) {$\gammabar$};  \path (1/2,0)++(-90:0.4) node[red,scale=0.7]{2};\path (1,-1/2)++(-10:0.3) node[red,scale=0.7]{1};
  \end{tikzpicture} 
  
    \begin{tikzpicture}[Sbase]
  \Bos; 
\end{tikzpicture}= \begin{tikzpicture}[Sbase]]
  \Bos; \node[OP] at (1/2,0) {$P$}; \node[OP] at (3/2,0) {$P$};
  \end{tikzpicture}=
  \begin{tikzpicture}[Sbase]
  \Bos; \node[OP] at (3/2,0) {$P$}; \node[OP] at (1,1/2) {$Z$};
  \end{tikzpicture}=
  \begin{tikzpicture}[Sbase]
  \Bos; \node[OP] (e1) at (1/2,0) {$\gammabar $};\node[OP] (e0) at (3/2,0) {$\gammabar$}; \node[OP] at (1,1/2) {$X$};  \path (3/2,0)++(-90:0.4) node[red,scale=0.7]{1};\path (1/2,0)++(-90:0.4) node[red,scale=0.7]{2};
  \end{tikzpicture}
    \caption{Diagrammatic representation of the symmetries of $\Ft$ (first line) and $\Bt$ (second line) written algebraically in Eqs.~\eqref{Fsym} and \eqref{Bsym}.}
    \label{fig:1Dsym}
\end{figure*}
We now give a tensor network operator (TNO)  representation $\M{D}$ of the bosonization duality $\mathfrak{D}$ in Eq.~\eqref{D1def}. We begin with the following TNO ansatz:

\begin{align} \label{eq:UJod}
\M{D}=\begin{tikzpicture}[Sbase,xscale=0.9]
\draw[red,OES={AR2=0.55}] (0.5,0)--++(.5,0)--++(1.5,0)--++(1.5,0)--++(1.5,0)--++(1.5,0)edge[-<-=0.75]++(0.5,0);
\draw[red,loosely dotted] (0,0)--(0.5,0);
\draw[red,loosely dotted] (7.5,0)--(8,0);
\draw[red,-<-=0.75] (1,0)node[Dtriangle]{$\M{F}$}--+(0,-1); \draw[red,-<-=0.75] (4,0)node[Dtriangle]{$\M{F}$}node[black,above=2.5mm]{$e$}--+(0,-1);\draw[red,-<-=0.75] (7,0)node[Dtriangle]{$\M{F}$}node[black,above=1mm]{}--+(0,-1);
\draw[black,->-=0.75] (2.5,0)node[Bcir]{$\M{B}$}node[black,below=2.5mm]{$e_1$}--+(0,1);
\draw[black,->-=0.75] (5.5,0)node[Bcir]{$\M{B}$}node[black,below=2.5mm]{$e_0$}--+(0,1);
\end{tikzpicture}.
\end{align}
 For now, we leave the boundary conditions of $\M{D}$ unspecified -- they will enter the construction later. 

$\Dt$ is constructed by gluing together two kinds of local tensors, $\M{F}$ (triangular nodes) and $\M{B}$ (circular nodes), as pictured in \eqref{eq:UJod}.  An $\M{F}$ tensor is placed at each edge $e$ and is represented as follows:
\begin{align}\label{B2d}
    \begin{tikzpicture}[Sbase]
  \Fer;
\node[Dtriangle] at (1,0){$\M{F}$} ; 
\end{tikzpicture} \equiv \sum_{j,a,b} F^j_{a,b} |a)_{e_1} (j|_{e} (b|_{e_0}.
\end{align}
At each vertex $v$, we place a tensor $\M{B}$:
\begin{align} 
\begin{tikzpicture} [Sbase]
\Bos;
\node[Bcir] at (1,0){$\M{B}$} ; 
\end{tikzpicture} \equiv \sum_{j,a,b}  B^{j}_{a,b}|a)_v |j\rangle_v(b|_v.
\end{align}
Notice that in Eq.~\eqref{B2d}, we have three distinct Hilbert spaces labeled by the same site -- one fermionic space (to which $|a)_v$ belongs),  one dual fermionic space (to which $(a|_v$ belongs) and one bosonic space (to which $|j\rangle_v$ belongs). 

To implement the duality map $\mathfrak{D}$ of Eq.~\eqref{D1def}, we need to choose tensors $\Ft$ and $\Bt$ such that the following relations hold for all even operators $A \in \mathcal{E}$:
\begin{align} \label{dualitytno1}
    \M{D} \bdot  A = \mathfrak{D}(A) \bdot\M{D},
\end{align}
or diagrammatically:
\begin{align} \label{dualitytno2}
\begin{tikzpicture}[Sbase]
\draw[red] (0,0)--(3,0); \draw[red,loosely dotted] (-0.25,0)--(0,0);
\draw[red,loosely dotted] (3,0)--(3.25,0);
\foreach \j in {0.5,1,...,2.5}
{\draw[red] (\j,0)node[scale=0.4,Dtriangle]{}--+(0,-1); 
}
\foreach \j in {0.25,0.75,...,2.75}
{\draw[black!99] (\j,0)node[scale=0.4,Bcir]{}--+(0,1);}
\draw[draw=black,fill=white] (0.95,-0.75) rectangle node{A} ++(1.1,0.5);
\end{tikzpicture} = & \begin{tikzpicture}[Sbase]
\draw[red] (0,0)--(3,0);\draw[red,loosely dotted] (-0.25,0)--(0,0);
\draw[red,loosely dotted] (3,0)--(3.25,0);
\foreach \j in {0.5,1,...,2.5}
{\draw[red] (\j,0)node[scale=0.4,Dtriangle]{}--+(0,-1); 
}
\foreach \j in {0.25,0.75,...,2.75}
{\draw[black!99] (\j,0)node[scale=0.4,Bcir]{}--+(0,1); 
}
\draw[draw=black,fill=white] (0.65,0.25) rectangle node{$\mathfrak{D}(A)$} ++(1.75,0.5);
\end{tikzpicture}.
\end{align}
Note that, we need only show that the relations are satisfied for $P_e$ and $S_v$ -- the generators of $\mathcal{E}$. That is, we need to show that:
\begin{subequations}
\begin{align}
    \Dt  \bdot P_{e} = \mathfrak{D}(P_e) \bdot\Dt=Z_{e_0} Z_{e_1} \bdot \Dt \label{DP1d}\\
    \Dt \bdot S_{v} = \mathfrak{D}(S_v) \bdot\Dt=X_v  \bdot\Dt. \label{DS1d}
\end{align}
\end{subequations}

We can look at these constraints as symmetries of the tensor $\Dt$, which can be reduced to symmetries of local tensors $\Ft$ and $\Bt$. We claim that $\Dt$ satisfies \eqref{DP1d} and \eqref{DS1d} if $\Ft$ and $\Bt$ have the following symmetries: 
\begin{subequations} \label{FBsym}
\begin{align}
\Ft = P_{e_1} \bdot \Ft \bdot P_{e_0} \bdot P_{e} = \Ft \bdot i \gamma_{e} \bdot \gammabar_{e_0}   = \gammabar_{e_1}\bdot  \Ft\bdot \gammabar_e \label{Fsym}\\
    \Bt =  P_v \bdot \Bt \bdot P_{v} = Z_v\bdot \Bt \bdot P_{v}=  \gammabar_v\bdot X_v \bdot\Bt .
     \label{Bsym}
\end{align}
\end{subequations}
These symmetries are represented graphically in Fig.~\ref{fig:1Dsym}.

Using the diagrammatic representation of the symmetries, we can illustrate that $\Dt$ obeys \eqref{DP1d} and \eqref{DS1d}.  
By successive applications of the symmetries in Fig.~\ref{fig:1Dsym}, we have:
\begin{align} \label{P image}
\M{D} \bdot P_{e} =&\begin{tikzpicture}[Sbase,SS]
\DPone;\node[OP]  at (2,-.75) {$P_{e}$}; 
\end{tikzpicture} \nonumber\\
=&\begin{tikzpicture}[Sbase,SS]
\DPone;\node[OP] at (1.25,0) {$P$}; \node[OP]  at (2.75,0) {$P$}; 
\end{tikzpicture}\nonumber\\	
=&\begin{tikzpicture}[Sbase,SS]
\DPone;\node[OP]  at (0.5,0.75) {$Z_{e_1}$}; \node[OP]   at (3.5,0.75) {$Z_{e_0}$};
\end{tikzpicture} = Z_{e_1}Z_{e_0} \bdot \M{D}.
\end{align}
Similarly, for the hopping operator, we have:
\begin{align} \label{S image}
\M{D} \bdot S_{v}= &\M{D} \bdot i\gamma_{{L_v}} \gammabar_{{R_v}}  \\
=&\begin{tikzpicture}[Sbase,SS]
    \DS; \node[OP] at (0.5,-0.75) {$i\gamma_{{L_v}}$};\node[OP] at (3.5,-0.75) {$\gammabar_{{R_v}}$};
    \node[below=.9mm,red,scale=0.7] at (1,-0.75) {$2$};
    \node[below=.9mm,red,scale=0.7] at (4,-0.75) {$1$};
\end{tikzpicture} \nonumber \\
=& \begin{tikzpicture}[Sbase,SS]
    \DS; \node[OP] at (1.25,0) {$ \gammabar$};\node[OP] at (2.75,0) {$ \gammabar$};
    \node[below=.9mm,red,scale=0.7] at (2.95,0) {$1$};
    \node[below=.9mm,red,scale=0.7] at (1.45,0) {$2$};
\end{tikzpicture} \nonumber \\
=&\begin{tikzpicture}[Sbase,SS]
    \DS; \node[OP] at (2,.75) {$ X_v$};
\end{tikzpicture} \nonumber  = X_v \bdot\M{D}.
\end{align}
Hence, $\Dt$ is a good representation of the operator duality $\mathfrak{D}$.

Furthermore, we can use the symmetries of $\Ft$ and $\Bt$ to compute their explicit component form.  Notice that the three symmetries of $\Ft$ are independent, commute with each other, and square to the identity. Thus, they generate a $\Z^3_2 = \Z_2 \times \Z_2 \times \Z_2$ symmetry group. Similarly, the three symmetries of $\Bt$ form a $\Z^3_2$ group. Since both tensors are vectors in a $2^3=8$ dimensional Hilbert space, the symmetries fix the tensors completely (up to a normalization).  The explicit tensors can then be calculated by projecting the vaccuum tensor onto the symmetric subspace:
\begin{align}
    \Ft \propto & \sum_{a,b,c} (\gammabar_{e_1}\gammabar_{e})^a(i\gamma_e)^b P^c_{e_1} |0)_{e_1}(0|_e(0|_{e_0}P^c_{e}P^c_{e_0} \gamma_{e_0}^b \nonumber \\
    =&\sum_{a,b}  |a)_{e_1}(a+ b |_{e}(b|_{e_0}.
\end{align}
Applying the same strategy to compute $\Bt$, we find:
\begin{align}
     \Bt \propto & \sum_{a} |a)_{v}|a\rangle_{v}(a|_{v}.
\end{align}

Thus far, we have constructed a TNO that implements a map of local operators to local operators.  In the next subsection, we will illustrate one of the key advantages of the TNO representation of the bosonization duality.  That is, we will see that $\Dt$ may be applied to fermionic tensor network states to map them to bosonic tensor network states.

\section{Bosonization of fermionic matrix product states}
\label{subsec:TNbosonization1d}

We now show that certain fermionic matrix product states (fMPS) can be directly bosonized using the bosonization TNO, $\Dt$, defined in the previous subsection. In particular, we will describe the bosonization procedure for fMPS of the form: 
\begin{align} \label{Opsi}
 |\psi) = \begin{tikzpicture}[Sbase]
     \draw[dashdotted](0,-0.5)--+(0,1);\draw[dashdotted](6,-0.5)--+(0,1);
     \draw[red,OES={AR2=0.5}] (0,0)--++(0.5,0) --++(1,0)--++(1,0)--++(1,0); \draw[red,loosely dotted] (3,0) --(4,0); 
     \draw[red,OES={AR2=0.85}] (4,0) --++(1,0)--++(1,0); 
     \foreach \j in {0.5,1.5,2.5,4.5}
     {\draw[red,->-=0.5] (\j,0)--+(0,1); \node[Tsq] at (\j,0) {$\M{T}$};  } \node[OP] at (5.5,0) {$O_\psi$};
 \end{tikzpicture},
\end{align}
where $\M{T}$ is a fermion parity even tensor and $O_\psi$ is an operator with definite parity. $O_\psi$ is inserted before closing the fermionic matrix product state to dictate the parity of the state and the boundary conditions. We will use vertical dash-dotted lines to denote closing the boundary (or taking the trace, algebraically). Unless otherwise stated, we assume the Hilbert spaces are two dimensional.  Algebraically, $|\psi)$ can be written as:
\begin{align} \label{psidef}
|\psi) =\sum_{j_0,\ldots,j_N  } \text{tr}\left[ T^{j_0}T^{j_1}\ldots T^{j_N} O_\psi\right] |j_0)_{e^0}|j_1)_{e^1}  \ldots |j_N)_{e^N}&,
\end{align}
where $e^k$ denotes the edge connecting the $k-1$ vertex and the $k$ vertex.  The first step in bosonizing $|\psi)$ is to close $\Dt$ with an operator $O_\Dt$:
\begin{align} \label{step1bosonization}
 \begin{tikzpicture}[Sbase,scale=1.3]
 \def \h{0.75};
     \draw[red,OES={AR2=0.5}] (0,\h) --++(.5,0)--++(.5,0)--++(.5,0)--++(1,0); \draw[red,loosely dotted] (2,\h) --(3,\h); \draw[red,OES={AR2=0.5}] (3,\h) --(3.5,\h)--(4,\h)--(4.5,\h)--(5.2,\h);
  \draw[dashdotted](0,\h-0.25)--+(0,.5); \draw[dashdotted](5.2,\h-0.25)--+(0,.5);
     \foreach \j in {0.5,1.5,3.5}
     {\draw[red,->-=0.5] (\j,.1)--+(0,.75);  \draw[black,->-=0.5] (\j+.5,\h)--+(0,.75); \node[Dtriangle] at (\j,\h) { };\node[Bcir] at (\j+0.5,\h) { };  }   
         \node[OP] at (4.7,\h) {$O_\Dt$};
 \end{tikzpicture}.
\end{align}
 As we will show now, the choice of $O_\Dt$ determines both the subspace of the fermionic Hilbert space mapped non-trivially by the duality as well as the subspace of the bosonic Hilbert space in the image of the duality. 
\subsection{Boundary conditions in 1D}
\label{app: BC in 1D}

Here, we discuss how the choice of $\M{O}_\Dt$ affects the duality. With $\M{O}_\Dt$ parameterized as $\M{O}_\Dt=(-i\gammabar)^\alpha P^\beta$ and $\alpha , \beta \in \{0,1\}$, we will show that $\alpha$ determines the parity of the fermionic states that are mapped to non-trivial bosonic states and $\beta$ dictates the image of the duality. Specifically, for $\alpha=0(1)$, the subspace of states with even (odd) parity are mapped to the subspace of bosonic states invariant under the operator $(-1)^{\beta+1}\prod_v X_v$.

To begin, we note that the choice of $\M{O}_\Dt$ does not affect the duality away from the boundary.  Away from the boundary, the graphical calculation in \eqref{P image} and \eqref{S image} is unchanged by the choice of $\M{O}_\Dt$. For a chain with $N+1$ sites, we constrain $\M{O}_\Dt$ by considering the image of the fermion parity operator $P_{e^0}$ and the hopping operator $S_N$. (Recall that we have defined $e^k$ as the edge connecting vertices $k-1$ and $k$, so $e^0$ connects 0 and $N$.) We will also require that, similar to the case away from the boundary, $\Dt$ maps local operators near the boundary to local operators. 


Now, we consider acting on $P_{e^0}$ with $\Dt$.  The diagrammatic calculation yields:

\begin{align} \label{P boundary}
\mathsf{D}\bdot P =&\begin{tikzpicture}[Sbase,SS]
\DPOprime;\node[OP] (A) at (2,-.75) {$P_{e^0}$}; \node[OP] at (1.25,0) {$\M{O}_\Dt$};
\end{tikzpicture}\\ \nonumber
=&\begin{tikzpicture}[Sbase,scale=1.2, every node/.style={scale=0.8}]
\DPOprime;\node[OP] (A) at (1.25,0) {$P\M{O}_\Dt P$};  \node[OP] (A) at (0.5,0.75) {$Z_{N}$}; \node[OP] (B) at (3.5,0.75) {$Z_{0}$};
\end{tikzpicture} \\ \nonumber
=&Z_NZ_0\bdot \Dt'.
\end{align}
Note that the operator $Z_N$ is required to ensure that the commutation relations between $P_{e^0}$ and $S_N$ are preserved by the duality. In the last line of \eqref{P boundary}, $\Dt'$ is the same as $\Dt$ but with $\M{O}_\Dt$ replaced by $P\M{O}_\Dt P$.  The bosonization TNO should be left unmodified, so we require that $\Dt'\propto \Dt$. This means that $P\M{O}_\Dt P=c\M{O}_\Dt$ for some $c\in \mathbb{C}$, and we have:
\begin{align}
    \Dt=\Dt \bdot P^2_{e^0} = c^2 \Dt.
\end{align} 
Therefore, $c$ must be $\pm 1$, or $P\M{O}_\Dt P=\pm \M{O}_\Dt$. We then see that $\M{O}_\Dt$ must have definite fermion parity, so it can be parameterized as $\M{O}_\Dt=(-i\gammabar)^\alpha P^\beta$ with $\alpha, \beta \in \{0,1\}$. 

Next, we act on the hopping operator $S_N$ with $\Dt$:

\begin{align} \label{S boundary}
\mathsf{D}\bdot  S_{N}= &\mathsf{D}\bdot i\gamma_{e^N} \gammabar_{e^0} \nonumber \\
=&(-1)^{|O_\Dt|}\begin{tikzpicture}[Sbase,SS]
    \DSOprime; \node[OP] at (0.5,-0.75) {$i\gamma_{e^N}$};\node[OP] at (3.5,-0.75) {$\gammabar_{e^0}$};
    \node[red,scale=0.8] at (1.15,-.8) {2}; \node[red,scale=0.8] at (4,-.8) {1};
\end{tikzpicture} \nonumber \\
=& \begin{tikzpicture}[Sbase,scale=1.4, ]
    \DSOprime; \node[OP] at (2.75,0) {$ \gammabar O_\Dt\gammabar$}; \node[OP] at (2,.75) {$ (-1)^{|O_\Dt|}X_{N}$};
\end{tikzpicture} \nonumber\\ 
=& (-1)^{|\M{O}_\Dt|}X_N \bdot \Dt''.
\end{align}
The factor of $(-1)^{|\M{O}_\Dt|}$ is a consequence of moving $i\gamma_{e^N}$ past $\M{O}_\Dt$. In the last line of \eqref{S boundary}, $\Dt''$ is the same as $\Dt$ except with $\M{O}_\Dt$ replaced by $\gammabar \M{O}_\Dt \gammabar$. To obtain a relation as in \eqref{dualitytno1}, we require that $\Dt''\propto \Dt$. Assuming $\gammabar \M{O}_\Dt \gammabar =a\M{O}_\Dt$ for $a \in \mathbb{C}$, we obtain:
\begin{align}
    \Dt=\Dt\bdot S^2_N = a^2 \Dt,
\end{align}
and thus, $a=\pm 1$.

We are now able to discuss the affect of $\M{O}_\Dt$ on the mapping of states. We define $\Dt_{\alpha \beta}$ to be the TNO formed by closing $\Dt$ with $\M{O}_\Dt$ parameterized by $\M{O}_\Dt=(-i\gammabar)^\alpha (P)^\beta$. Then, \eqref{P boundary} and \eqref{S boundary} are summarized by:
\begin{align} \label{ODPS}
    \Dt_{\alpha\beta} \bdot P_{{e^0}}= &(-1)^{\alpha }Z_{N}Z_{0}\bdot \Dt_{\alpha\beta} \\
    \Dt_{\alpha\beta} \bdot S_{N}=& (-1)^{\alpha+\beta }X_{N}\bdot \Dt_{\alpha\beta}.
\end{align}
Acting on global fermion parity with $\Dt_{\alpha\beta}$, we find:
\begin{align} \label{DPar}
    \Dt_{\alpha\beta}\bdot  \prod_{e} P_{e} =&   (-1)^{\alpha}\left( \prod_{e} Z_{e_0}Z_{e_1} \right) \bdot \Dt_{\alpha\beta}  \nonumber\\
    =& (-1)^{\alpha}\Dt_{\alpha\beta}.
\end{align} 
This implies that $\Dt_{\alpha\beta}$ maps fermionic states $|\psi)$ with $|\psi| \neq \alpha$ to zero. Explicitly, we have:
\begin{align}
    \Dt_{\alpha\beta}  |\psi ) = & (-1)^{|\psi|}\Dt_{\alpha\beta} \bdot \prod_{e} P_{e} |\psi)  \nonumber \\
    = & (-1)^{|\psi|+\alpha}\Dt_{\alpha\beta} |\psi  ) .
\end{align}
Therefore, $\Dt_{\alpha\beta}  |\psi ) =0$ whenever $|\psi| \neq \alpha$. To bosonize an even state, $\alpha$ should be equal to $0$, and accordingly, $\M{O}_\Dt$ is proportional to $I$ or $P$.  For an odd state, one should use $\alpha=1$, in which case, $\M{O}_\Dt$ is proportional to $-i\gammabar$ or $\gamma$.

To understand the role of the $\beta$ parameter, we act on $\Dt_{\alpha\beta}$ with $\prod_v X_v$:
\begin{align} \label{prodXonD}
    \prod_v X_v \bdot \Dt_{\alpha \beta}=(-1)^{\alpha + \beta}\Dt_{\alpha \beta} \bdot\prod_v S_v.
\end{align}
Now we use a global relation of the fermionic operator algebra. It can be checked that:
\begin{align}
    \prod_v S_v = - \prod_e P_e.
\end{align}
Hence, continuing the calculation from \eqref{prodXonD}:
\begin{align}
    \prod_v X_v \bdot \Dt_{\alpha \beta} &= (-1)^{\alpha + \beta +1} \Dt_{\alpha \beta}\bdot \prod_e P_e \nonumber \\ 
    &= (-1)^{\beta+1} \Dt_{\alpha \beta}.
\end{align}
This means that the duality maps a fermionic state $|\psi)$ to the $(-1)^{\beta+1}$ eigenspace of $\prod_v X_v$, as can be seen from the following:
\begin{align}
    \prod_v X_v \bdot \Dt_{\alpha \beta}  |\psi) = (-1)^{\beta+1} \Dt_{\alpha \beta}  |\psi).
\end{align}

We have thus shown that $\M{O}_\Dt$ can be parameterized by $\M{O}_\Dt=(-i\gammabar)^\alpha P^\beta$ with $\alpha, \beta \in \{0,1\}$, and that $\Dt$ closed with $\M{O}_\Dt$ gives a map from the $(-1)^\alpha$ eigenspace of $\prod_e P_e$ to the $(-1)^{\beta+1}$ eigenspace of $\prod_v X_v$.

\subsection{Converting virtual indices to bosonic indices }

 The second step  is to contract $\Dt$ with $|\psi)$ to form $ |\psi_\text{bos}\rangle $:
\begin{align} |\psi_\text{bos}\rangle=
 \begin{tikzpicture}[Sbase,scale=1.3]
 \def \h{0.7};
     \draw[red,OES={AR2=0.5}] (0,0) --++(2,0); \draw[red,loosely dotted] (2,0) --(3,0); \draw[red,-<-=0.5] (3,0) --(5.2,0); 
     \draw[red,OES={AR2=0.5}] (0,\h) --++(0.5,0)--++(0.5,0)--++(0.5,0)--++(0.5,0); \draw[red,loosely dotted] (2,\h) --(3,\h); \draw[red,OES={AR2=0.5}] (3,\h) --(3.5,\h)--(4,\h)--(4.5,\h)--(5.2,\h); 
     \draw[dashdotted](0,-0.25)--+(0,1.25); \draw[dashdotted](5.2,-0.25)--+(0,1.25);
     \foreach \j in {0.5,1.5,3.5}
     {\draw[red,->-=0.5] (\j,0)--+(0,\h); \draw[black,->-=0.5] (\j+0.5,\h)--+(0,.75);
     \node[Tsq] at (\j,0) { }; 
     \node[Dtriangle] at (\j,\h) { };
     \node[Bcir] at (\j+0.5,\h) { }; 
     } \node[OP] at (4.7,0) {$O_\psi$};\node[OP] at (4.7,\h) {$O_\Dt$};
 \end{tikzpicture}.
\end{align}
 We can then see that $|\psi_\text{bos}\rangle$ is built from the local tensors $\M{M_f}\equiv \M{T} \bdot \M{F}\bdot \M{B}$:
\begin{align} \label{TFB2} 
&\begin{tikzpicture}[Sbase,scale=1]
    \draw[red,-<-=0.5](0,0)node[left,black]{$e_1'$}--(3,0)node[right,black]{$e'_0$}; \draw[red,-<-=0.5](0,1)node[left,black]{$e_1$}--(3,1)node[right,black]{$e_0$};\draw[red](1,0)--+(0,1);\draw[->-=0.5](2,1)--+(0,1)node[left,black]{$e_0$}; \node[Tsq] at (1,0) { }; \node[Dtriangle] at (1,1) { };  \node[] at (1,1.4) {$ $};\node[Bcir] at (2,1) { }; 
\end{tikzpicture}, 
\end{align}
and the tensor networkis closed with the operator $\M{O_f} \equiv O_\psi O_\Dt$:
\begin{align}
& \begin{tikzpicture}[Sbase,scale=1.3]
 \def \h{0.75};
  \draw[red,-<-=0.15] (5,\h)node[left,black]{$N$} --(6,\h)node[right,black]{$N$}; 
 \draw[red,-<-=0.15] (5,0)node[left,black]{$N'$} --(6,0)node[right,black]{$N'$}; 
  \filldraw[fill=white, draw=black] (5.5,.375) ellipse (.3cm and .55cm);
        \node[] at (5.5,.375) {$\M{O_f}$};
 \end{tikzpicture} \equiv
 \begin{tikzpicture}[Sbase,scale=1.3]
 \def \h{0.75};
  \draw[red] (5,\h) --(6,\h); 
 \draw[red] (5,0) --(6,0); 
        \node[OP] at (5.5,0) {$O_\psi$};\node[OP] at (5.5,\h) {$O_\Dt$};
 \end{tikzpicture}.
\end{align}
The state $ |\psi_\text{bos}\rangle $ formed by contracting together $\M{M_f}$ and closing with $\M{O_f}$ is indeed a bosonic state. However, it is not manifestly a bosonic matrix product state (bMPS), since the virtual legs may have nontrivial grading.

In the third and final step of the bosonization procedure, we write $ |\psi_\text{bos}\rangle $ as a bonafide bMPS -- constructed from a local tensor with bosonic virtual legs.  As suggested in section \ref{subsub:koszul signs}, we do so by choosing a particular internal ordering of the virtual legs of $\M{M_f}$ and $\M{O_f}$, in which they become convertible to bosonic indices. 
\begin{figure}
    \centering
     \begin{tikzpicture}[Sbase,scale=1,every node/.style={scale=1}]
      \def \h{0.4};
\draw[red] (0,0)--++(2,0)(2.5,0)--(4.5,0);     \draw[red,loosely dotted] (2,0)--(2.5,0);  
\draw[red] (0,\h)--(2,\h)(2.5,\h)--(4.5,\h);     \draw[red,loosely dotted] (2,\h)--(2.5,\h); 
\foreach \j in {0.5,1.5,3}
{\draw (\j,\h)--+(0,0.5);\node[Mrec,scale=1.2] at (\j,.2) {$M_f$};}
 \node[OP,yscale=1.3] at (4,0.2) {$\M{O_f}$};
 \draw[dashdotted](0,-0.25)--+(0,1); \draw[dashdotted](4.5,-0.25)--+(0,1);
 \end{tikzpicture} 
 
  \begin{tikzpicture}[Sbase,scale=1]
 \node[scale=2] {$=$};
 \end{tikzpicture}   
      \begin{tikzpicture}[Sbase,scale=1,every node/.style={scale=1}]
      \def \h{0.4};
\draw[] (0,0)--++(2,0)(2.5,0)--(5,0);     \draw[loosely dotted] (2,0)--(2.5,0);  
\draw[] (0,\h)--(2,\h)(2.5,\h)--(5,\h);     \draw[loosely dotted] (2,\h)--(2.5,\h); 
\foreach \j in {0.5,1.5,3}
{\draw (\j,\h)--+(0,0.5);\node[Mrec,scale=1.2] at (\j,.2) {$M_b$};}
 \node[OP,yscale=1.3] at (3.75,0.21) {$\M{O_b}$};
 \draw[dashdotted](0,-0.25)--+(0,1); \draw[dashdotted](5,-0.25)--+(0,1);
 \node[OP] at (4.5,0) {$Z$};\node[OP] at (4.5,0.4) {$Z$};
 \end{tikzpicture} 
    \caption{With the internal ordering chosen in \eqref{intord1D1} and \eqref{intord1D2}, the virtual legs of $\M{M_f}$ and $\M{O_f}$ can be replaced with un-graded virtual legs. The supertrace sign produced between the first and last indices on both layers is accounted for by inserting the operator $Z_{N}\otimes Z_{N'}$ before closing the state generated by $\M{M_b}$ and $\M{O_b}$.}
    \label{fig:step3}
\end{figure}
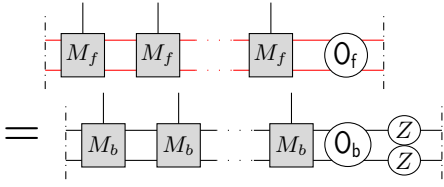
We start by writing $\M{M_f}$ and $\M{O_f}$ in tensor component form. In tensor component form, a generic $\M{M_f}$ is:
\begin{align}\label{Mf}
    \M{M_f} = \sum_{\substack{j,a',a,\\b,b'}=0} (M_f)^j_{aa',b'b} |a')_{e'_1}|a)_{e_1} |j\rangle_{e_0} (b|_{e_0}(b'|_{e'_0}, 
\end{align}
where the components of $\M{M_f}$ can of course be expressed in terms of the components of $\Ft$, $\Bt$, and $\M{T}$. Note that we have chosen a specific ordering of the vectors in $\M{M_f}$.  Schematically, the vectors are ordered as:
\begin{align} \label{intord1D1} 
\begin{tikzpicture}[Sbase,scale=1]
      \draw[red](0,0)node[left,black]{$\M{(i)}$}--(3,0)node[right,black]{$\M{(v)}$}; \draw[red](0,1)node[left,black]{$\M{(ii)}$}--(3,1)node[right,black]{$\M{(iv)}$};\draw[red](1,0)--+(0,1);\draw(2,1)--+(0,1)node[left,black]{$\M{(iii)}$}; \node[Tsq] at (1,0) { }; \node[Dtriangle] at (1,1) { };\node[Bcir] at (2,1) { }; 
\end{tikzpicture}.
\end{align}
Next, we write a generic $\M{O_f}$ in tensor component form:
\begin{align}
  \M{O_f} = \sum_{a',b',a,b}(O_f)_{a',b',a,b}|a')_{N'}|a)_N(b|_N (b'|_{N'},
\end{align}
where we have intentionally ordered the graded vectors according to the diagram:
\begin{align} \label{intord1D2}
    \begin{tikzpicture}[Sbase,scale=1.3]
 \def \h{0.75};
  \draw[red] (5,\h)node[left,black]{(ii)} --(6,\h)node[right,black]{(iii)}; 
 \draw[red] (5,0)node[left,black]{(i)} --(6,0)node[right,black]{(iv)}; 
  \filldraw[fill=white, draw=black] (5.5,.375) ellipse (.3cm and .55cm);
        \node[] at (5.5,.375) {$\M{O_f}$};
 \end{tikzpicture}.
\end{align}

It may be checked that with the special choices of ordering in \eqref{intord1D1}, we do not produce any Koszul signs while contracting the $\M{M_f}$ with each other. Therefore, as suggested in section \eqref{subsub:koszul signs}, we can simply replace all fermionic virtual legs shared by two $\M{M_f}$ tensors with bosonic legs.
Similarly, with the choice of ordering in \eqref{intord1D2}, no sign is produced in the contraction of $\M{M_f}$ with $\M{O_f}$, so their common indices can also be replaced with bosonic indices. However, a Koszul sign \textit{is} produced in the trace operation (contraction of first and last indices) due to the supertrace phase [see \eqref{supetrace}]. However, these indices are convertible to bosonic indices with $(Z_{N} \otimes Z_{N'})$-insertion (see section \ref{subsub:koszul signs}). That is, we can replace them with bosonic indices as long as we insert an operator $Z_{N} \otimes Z_{N'}$ (one $Z$ on each of the two virtual indices) before closing the MPS. 

We denote the bosonic tensor obtained by replacing the fermionic virtual legs of $\M{M_f}$ as $\M{M_b}$, and similarly, we denote the bosonic tensor obtained by replacing the fermionic virtual legs of $\M{O_f}$ as $\M{O_b}$. Then, the state generated by $\M{M_f}$ and $\M{O_f}$ and the state generated by $\M{M_b}$ with $\M{O_b}$ and $Z_{N} \otimes Z_{N'}$ is the same state (see Fig.~\ref{fig:step3}).  It is convenient to further absorb the $Z$ factors into the definition of $\M{O_b}$. With this, the bMPS is generated by the tensors:
\begin{align}
    \M{M_b}=&  \sum_{\substack{j,a',a\\,b,b'}=0} (M_f)^j_{aa',b'b} |a'\rangle_{e'_0}|a\rangle_{e_0} |j\rangle_{e_1} \langle b|_{e_1}\langle b'|_{e_1'} \\
      \M{O_b}=&  \sum_{a',b',a,b}(O_f)_{a',b',a,b}(-1)^{b+b'}|a'\rangle_{N'}|a\rangle_N\langle b|_N \langle b'|_{N'},
\end{align}
where the phase $(-1)^{b+b'}$ comes from the application of $Z_{N} \otimes Z_{N'}$. Now, contracting $\M{M_b}$ and closing the tensor networkwith the bosonic tensor $\M{O_b}$ yields $|\psi_\text{bos}\rangle$, and in this way, $ |\psi_\text{bos}\rangle $ is expressly a bMPS. Thus, we have successfully mapped the fMPS $|\psi)$ to the bMPS $ |\psi_\text{bos}\rangle $.

In summary, bosonization of a fMPS defined by a tensor $\M{T}$ and operator $O_\psi$ as in Eq. \eqref{psidef} proceeds in three steps.
\begin{enumerate}
\item Choose an operator $O_\Dt=(-i\gammabar)^{\alpha}P^{\beta}$ with $\alpha,\beta \in \{0,1\}$ with which to close the bosonization TNO. 
\item Construct $\M{M_f}$ by contracting $\M{T}$, $\Ft$, and $\Bt$.  Form $\M{O_f}$ by combining $O_\psi$ and $O_\Dt$.
\item Rearrange the vectors in $\M{M_f}$ and $\M{O_f} $ to match the ordering in \eqref{intord1D1} and \eqref{intord1D2}, respectively.  Form $\M{M_b}$ and $\M{O_b}$ from $\M{M_f}$ and $\M{O_f}$ by taking the graded vectors to have trivial grading and modifying the components $(O_f)_{a',b',a,b}$ of $\M{O_f}$ by $(-1)^{b+b'}$ to account for the supertrace.  
\end{enumerate}
In the next subsection, we provide explicit examples of the tensor network bosonization steps above.


\subsection{Examples}

We will illustrate the tensor network bosonization procedure of the previous section on two examples -- a trivial atomic insulating state and the nontrivial ground state of the Kitaev chain.  To motivate the TNO duality, we also analyze the examples at the operator level using the duality of section \ref{reviewbosonization}. 

\textbf{Example 1:} The trivial atomic insulating state is the ground state of the Hamiltonian $H_\text{triv}=-\sum_{e }P_{e}$. It has zero fermion occupancy at each site and can be expressed in the form:
\begin{align} \label{psidef2}
  &|\psi_\text{triv})=\sum_{j_0,\ldots,j_N } \text{tr}\left[ T^{j_0}\ldots T^{j_N} O_\psi \right] |j_1)_{e^0}\ldots |j_N)_{e^N}.
\end{align}
with $\M{T}$ being the trivial tensor: 
\begin{align}\label{trivialT2}
\begin{tikzpicture}[Sbase]
\draw[red,OES={AR2=0.5}](0,0)node[above,black]{$e_1$}--++(1,0)--++(1,0)node[above,black]{$e_0$};
\draw[red,->-=0.5](1,0)node[Tsq]{$\M{T}$}--+(0,1)node[right,black]{$e$};
\end{tikzpicture}
    = |0)_{e_1} |0)_{e} (0|_{e_0},
\end{align}
and $O_\psi$ equal to the parity operator $P$.

Using the 1D operator duality in section \ref{reviewbosonization} [Eq.~\eqref{D1def}], we see that $H_\text{triv}$ is mapped to the spontaneous symmetry breaking Hamiltonian $H_{SSB}=-\sum_{e}Z_{e^1} Z_{e^0}$. In accordance, we will see that the bosonization TNO maps the ground state of $H_\text{triv}$ to a ground state of $H_{SSB}$.

The first step of the tensor network bosonization procedure is to choose an operator ${O}_\Dt$ with which to close the bosonization TNO.  For simplicity, let us choose $O_\Dt$ to be fermion parity $P$.  This choice of ${O}_\Dt$ gives a map from the set of fermion parity even states to the set of states symmetric under $\prod_v X_v$ (Appendix \ref{app: BC in 1D}).

Next, we construct the tensor $ \M{M_f}$ and the operator $\M{O_f}$.  $\M{M_f}$ is obtained by contracting $\M{T}$, $\Ft$, and $\Bt$ as in \eqref{TFB2}:
\begin{align}
    &\M{M_f}=\M{T} \bdot \M{F}\bdot \M{B}\nonumber\\
    &= |0)_{e'_1}|0)_{e'}^{\ct_1} (0|_{e'_0}  \sum_{a,b}|a)_{e_1}(a+b|^{\ct_1}_e (b|^{\ct_2}_{e_0}  \sum_c|c)^{\ct_2}_{e_0} |c\rangle_{e_0} (c|_{e_0}  \nonumber\\
    &= \sum_a |0)_{e_1'}(0|_{e_0'}|a)_{e_1} |a \rangle_{e_0} (a|_{e_0}\nonumber\\
    &=\sum_a |0)_{e_1'}|a)_{e_1} |a \rangle_{e_0} (a|_{e_0}(0|_{e_0'},
\end{align}
and $\M{O_f}$ is simply
\begin{align}
    \M{O_f}=&\Bigg( \sum_{b'} (-1)^{b'} |b')_{N'}(b'|_{0'} \Bigg)  \Bigg ( \sum_{b} (-1)^b |b)_{N}(b|_{0} \Bigg).
\end{align}

Then, we rearrange the order of the graded vectors in $\M{M_f}$ and $\M{O_f}$ according to \eqref{intord1D1} and \eqref{intord1D2}.  In the final step of the tensor network bosonization procedure, we construct $\M{M_b}$ and $\M{O_b}$ by removing the grading and appropriately accounting for the supertrace.  Following these steps, $\mathsf{M_b}$ is:
\begin{align} \label{Mbtriv}
    \M{M_b}= \sum_a |0\rangle_{e_0'}|a\rangle_{e_0} |a \rangle_{e_1} \langle a|_{e_1} \langle 0|_{e'_1}, 
\end{align}
and re-ordering the vectors of $\M{O_f}$ and accounting for the supertrace gives:
\begin{align}
  \M{O_b}=\sum_{b',b}  |b'\rangle_{e_0'}|b\rangle_{e_0} \langle b|_{e_1}\langle b'|_{e_1'}.
\end{align}

The bosonized state is constructed by gluing together $\M{M_b}$ and closing the tensor network with $\M{O_b}$. To see that $\M{M_b}$ generates the ground state of $H_\text{SSB}$, we first notice that, for $\M{M_b}$ in Eq. \eqref{Mbtriv}, the $e'_0$ and $e'_1$ indices do not affect the bosonized state. Therefore, $\M{M_b}$ and $\M{O_b}$ can be reduced to:
\begin{align}
  \tilde{\Mt}_\M{b}&= \sum_{a}  |a\rangle_{e_0}  |a \rangle_{e_1}\langle a|_{e_1} \\
    \tilde{\M{O}}_\M{b}&=\sum_{b} |b \rangle_{e_0} \langle b|_{e_1}.
\end{align}
$\tilde{\Mt}_\M{b}$ generates the state: 
    \begin{align}
    |\psi_\text{bos} \rangle = |00\ldots \rangle + |11\ldots \rangle,
\end{align}
which is the ground state of $H_\text{SSB}$.  Therefore, $\M{M_b}$ also generates the ground state of $H_\text{SSB}$.

\textbf{Example 2:} Now, we turn to the example of the non-trivial ground state $|\psi_\text{K})$ of the Kitaev chain.  $|\psi_\text{K})$ is the ground state of the Hamiltonian $H_\text{K}=-\sum_{v} S_{v}$, and it can be written as a fMPS with [\onlinecite{Bultinck17}]:
\begin{align}
    \M{T}= \sum_{a,b} (-1)^{a(a+b)}|a)_{e_1'} |a+b)_{e} (b|_{e_0'},
\end{align}
and $ {O}_\psi= -i\gammabar$. 

The operator duality $\mathfrak{D}$ maps $H_\text{K}$ to the paramagnet Hamiltonian $H_\text{para}=-\sum_{v} X_{v}$, so the bosonization TNO should transform $|\psi_\text{K})$ to the paramagnet ground state $|\psi_\text{para}) = |++\ldots \rangle $, where $|+\rangle =\frac{1}{\sqrt{2}}\left( |0\rangle + |1 \rangle\right) $. 

Following the three steps outlined in the previous section, we first choose ${O}_\Dt$. Since ${O}_\psi$ is fermion parity odd, we choose ${O}_\Dt=(-i\gammabar)P=\gamma$. 
  First, we compute $\M{M_f}$ by contracting $\M{T}$, $\Ft$, and $\Bt$:
\begin{widetext}
\begin{align}
    \M{M_f}=\M{T}\bdot\M{F}\bdot\M{B}=& \Bigg[ \sum_{a',b'}(-1)^{|b'|(|a'|+|b'|)}|a')_{e_0'}|a'+b')^{\mathcal{C}_1}_{e'} (b'|_{e_1'} \Bigg] \Bigg[ \sum_{a,b}|a)_{e_0}(a+b|^{\mathcal{C}_1}_e (b|^{\mathcal{C}_2}_{e_1} \Bigg]\Bigg[ \sum_c|c)^{\mathcal{C}_2}_{e_1} |c\rangle_{e_1} (c|_{e_1} \Bigg] \nonumber \\
    =& \sum_{a,b,a',b'}(-1)^{(b'+b)(a'+b')}\delta_{a+b,a'+b'}|a')_{e_0'}|a)_{e_0} |b\rangle_{e_1} (b|_{e_1} (b'|_{e_1'}
\end{align}
\end{widetext}
Next, we remove the grading of the vectors in $\M{M_f}$ and $\M{O_f}=-i\gammabar$ and account for the supertrace to form $\M{M_b}$ and $\M{O_b}$:
\begin{align}
    \mathsf{M_b}=&\sum_{\substack{a,b\\a',b'}}(-1)^{(a'+b)(a'+b')}\delta_{a+b,a'+b'}\nonumber\\
    &|a'\rangle_{e_0'}|a\rangle_{e_0} |b\rangle_{e_1} \langle b|_{e_1} \langle b'|_{e_1'} \nonumber\\
     \mathsf{O_b} =& \sum_{b,b'} (-1)^{b+b'}|b'\rangle_{N'}|b\rangle_{N}\langle b+1|_{0}\langle b'+1|_{0'}. 
\end{align}
Explicitly, we have:
\begin{align}
    \mathsf{M_b}    = &\big(|00 \rangle-|11 \rangle \big) \big(|0\rangle \langle 00|-|1\rangle \langle 11| \big)+ \\ \nonumber &\big(|10 \rangle+|01 \rangle\big) \big(|0\rangle \langle 10|+|1\rangle \langle 01|\big)  \\
    \mathsf{O_b} = &|00\rangle \langle 11 | +|11\rangle \langle 00|  \nonumber \\  -&|01\rangle \langle 10|- |10\rangle \langle 01 |. 
\end{align}

Defining $|v_0\rangle = |00 \rangle-|11 \rangle $ and $|v_1\rangle=|10 \rangle+|01 \rangle $ and the corresponding  projectors: $P_j=|v_j\rangle\langle v_j|$, $j=0,1$, then $\mathsf{M_b}$ satisfies $\mathsf{M_b}P_j = P_j\mathsf{M_b}P_j$. The boundary operator also satisfies $P_j \mathsf{O_b}=P_j \mathsf{O_b} P_j$. Thus, there are two canonical blocks: 
\begin{align}
    P_j \mathsf{M_b} P_j &= |v_j\rangle (|+\rangle) \langle v_j| \quad j=0,1\nonumber \\
    \langle v_j |\mathsf{O_b} |v_j\rangle  &= -1,
\end{align}
where $|+\rangle = |0\rangle +|1\rangle$. Both blocks give the same state: $- |+\rangle^{\otimes N}$. 


\bibliographystyle{unsrt}
\bibliography{TNORef.bib}
\end{document}